\documentclass[10pt, a4paper, fleqn]{article}
\usepackage{etex}
\usepackage[utf8]{inputenc}
\usepackage[TS1, T1]{fontenc}
\usepackage{xspace}
\usepackage{upgreek}
\usepackage[symbol]{footmisc}

\usepackage[a4paper, tmargin=3cm, bmargin=3cm, hmargin=2.5cm]{geometry}

\usepackage{ifthen}
\usepackage{xcolor}

\usepackage{tikz-cd}
\usepackage[english]{babel}

\usepackage{amsfonts}
\usepackage{lmodern}
\usepackage{enumitem}
\newlist{enum-hypothesis}{enumerate}{1}
\setlist[enum-hypothesis]{label=(Hyp.~\arabic*),itemsep=0pt, parsep=0pt, labelwidth=5em, leftmargin=5em}

\setlist[enumerate,1]{label=\arabic*., ref=\arabic*, topsep=1pt, itemsep=2pt, parsep=0pt, leftmargin=1.5em, itemindent=0em, labelsep=0.2em, labelwidth=1.3em}
\setlist[enumerate,2]{label=\alph*., ref=\theenumi.\alph*, topsep=1pt, itemsep=2pt, parsep=0pt, leftmargin=0.5em, itemindent=0em, labelsep=0.2em, labelwidth=1.5em}
\setlist[enumerate,3]{label=\roman*., ref=\theenumii.\roman*, topsep=1pt, itemsep=2pt, parsep=0pt, leftmargin=0.5em, itemindent=0em, labelsep=0.2em, labelwidth=1.2em}

\setlist[itemize,1]{topsep=1pt, itemsep=2pt, parsep=0pt}

\usepackage{array}   % for \newcolumntype macro
\newcolumntype{R}{>{\raggedleft\arraybackslash$}p{1.5em}<{$}} % math-mode version of "r" column type
\usepackage{booktabs, dcolumn}
\usepackage{graphicx, subfig}

\usepackage{pbox}

\usepackage{amsmath, amssymb, mathtools, slashed, mathdots}
\usepackage[thmmarks, amsmath]{ntheorem}
\usepackage{bm}

\newtheorem{theorem}{Theorem}[section]
\newtheorem{proposition}[theorem]{Proposition}
\newtheorem{lemma}[theorem]{Lemma}
\newtheorem{corollary}[theorem]{Corollary}

\newtheorem{definition}[theorem]{Definition}

\theoremsymbol{$\square$}
\theoremstyle{plain}
\theorembodyfont{\upshape}
\newtheorem{remark}[theorem]{Remark}
\theoremsymbol{$\Diamond$}

\theoremsymbol{}
\theoremstyle{break}
\theorembodyfont{\slshape}

\theoremheaderfont{\scshape}
\theorembodyfont{\upshape} 
\theoremstyle{nonumberplain}
\theoremseparator{} 
\theoremsymbol{\rule{1.3ex}{1.3ex}} 
\newtheorem{proof}{Proof}

\usepackage{comment}

\newif\ifdraft
\drafttrue  % preliminary version
\ifdraft

\else
	\excludecomment{Draft}
	\excludecomment{Detail}
	\excludecomment{Important}
\fi

\DeclareSymbolFont{largesymbols}{OMX}{cmex}{m}{n}

\usepackage[square, numbers, compress, merge]{natbib}
\usepackage{hyperref}

\hypersetup{
plainpages=false,
colorlinks=true,% à mettre avant la suite : supprime l'encadrement des liens
linkcolor=black, 
anchorcolor=black, 
citecolor=black, 
urlcolor=black, 
menucolor=black, 
filecolor=black, 
bookmarksopen=true,
bookmarksnumbered=true}

\hypersetup{
pdfauthor={Serge Lazzarini, Thierry Masson, Louis Usala},
pdftitle={How to Untwist Twisted Gauge Fields},
pdfsubject={},
pdfcreator={pdflatex},
pdfproducer={pdflatex},
pdfkeywords={}
}

\newcommand{\bbR}{\mathbb{R}}

\newcommand{\calA}{\mathcal{A}}

\newcommand{\calF}{\mathcal{F}}
\newcommand{\calG}{\mathcal{G}}
\newcommand{\calH}{\mathcal{H}}

\newcommand{\calP}{\mathcal{P}}
\newcommand{\calQ}{\mathcal{Q}}
\newcommand{\calR}{\mathcal{R}}
\newcommand{\calS}{\mathcal{S}}

\newcommand{\kg}{\mathfrak{g}}
\newcommand{\kh}{\mathfrak{h}}

\newcommand{\ka}{\mathfrak{a}}
\newcommand{\kb}{\mathfrak{b}}

\newcommand{\phyA}{\mathsf{A}}
\newcommand{\phya}{\mathsf{a}}

\newcommand{\phyF}{\mathsf{F}}

\newcommand{\locgamma}{{\underline{\gamma}}}
\newcommand{\locu}{{\underline{u}}}

\newcommand\cdotact{\mathord{\cdot}}

\newcommand{\defeq}{\vcentcolon=} % \defeq 
\newcommand{\rdefeq}{=\vcentcolon} % \rdefeq 

\DeclareMathOperator{\Ad}{Ad}
\DeclareMathOperator{\Aut}{Aut}
\DeclareMathOperator{\Autv}{Aut_\text{\textup{v}}}

\DeclareMathOperator{\Diff}{Diff} %% Imagesq

\DeclareMathOperator{\Gau}{Gauge} 
\DeclareMathOperator{\id}{id}

\DeclareMathOperator{\im}{im}

\DeclareMathOperator{\Ker}{Ker}	 %% tkernel
\DeclareMathOperator{\M}{M}

\newcommand{\dr}{\text{\textup{d}}} % differential
\newcommand{\dert}[1]{\left(\tfrac{\dr}{\dr t} #1 \right)_{|t=0}} % time derivative at t=0
\newcommand{\inv}{^{-1}} % inverse
\newcommand{\Tg}{\text{\textup{T}}} % Tangent space
\newcommand{\Ver}{\text{\textup{V}}} % Vertical space
\newcommand{\ver}{\text{\textup{v}}} % vertical vector
\newcommand{\Hor}{\text{\textup{H}}} % Horizontal space
\newcommand{\hor}{\text{\textup{hor}}} % horizontal vector
\newcommand{\eqv}{\text{\textup{eq}}} % equivariant function
\newcommand{\Ceq}{\text{$C$\textup{-eq}}} % C-equivariant function
\newcommand{\tens}{\text{\textup{tens}}} % tensorial form
\newcommand{\Ctens}{\text{$C$\textup{-tens}}} % C-tensorial form
\newcommand{\invt}{\text{\textup{inv}}} % invariant form
\newcommand{\bas}{\text{\textup{bas}}} % basic form
\newcommand{\manP}{\calP} % manifold P (principal bundle)
\newcommand{\manQ}{\calQ} % manifold Q (principal bundle)
\newcommand{\manS}{\calS} % manifold S (principal bundle)

\newcounter{mnotecount}[section]
\renewcommand{\themnotecount}{\thesection.\arabic{mnotecount}}
\newcommand{\mnote}[1]%
{\protect{\stepcounter{mnotecount}}${}^{\text{\footnotesize$\bullet$\themnotecount}}$%
\reversemarginpar%
\marginpar{\raggedleft\footnotesize$\bullet$\themnotecount: #1}}

\newlength{\mnotewidth}
\usepackage[shadow,backgroundcolor=red!20]{todonotes}

\usepackage{soul}
\usepackage{color}

\numberwithin{equation}{section}
\allowdisplaybreaks

\begin{document}

\renewcommand*{\thefootnote}{\fnsymbol{footnote}}
\setcounter{footnote}{1}

%%%%%%%%%%%%%%%%%%%%%%%%%%%%
%% NON JOURNAL VERSION

{% beginning of local redefinition
%\makeatletter\def\@fnsymbol{\@arabic}\makeatother % redefinition of \@fnsymbol
\title{How to Untwist Twisted Gauge Fields}
\author{Serge Lazzarini, Thierry Masson, Louis Usala\\
{\small Centre de Physique Théorique}%
%\footnote{serge.lazzarini@cpt.univ-mrs.fr, thierry.masson@cpt.univ-mrs.fr, louis.usala@cpt.univ-mrs.fr}%
\\
\small{Aix Marseille Univ, Université de Toulon, CNRS, CPT, Marseille, France}\\[2ex]
}
\date{June, 24, 2026}

\maketitle
}% end local redefinition
\renewcommand*{\thefootnote}{\arabic{footnote}}
\setcounter{footnote}{0}

\begin{abstract}
This paper provides an isomorphism between the space of twisted gauge fields on a principal bundle $\manP$ and the space of standard gauge fields on a different principal bundle $\manQ$ associated to $\manP$. This isomorphism extends to local fields on the base manifold, which enables the use of local twisted fields in standard gauge theories (e.g. Yang-Mills-like theories). This allows one to deal with two symmetry groups, coming from $\manP$ and $\manQ$, respectively. The construction makes use of a larger principal bundle $\manS$ which has $\manP$ and $\manQ$ as quotient bundles. The gauge structure on $\manS$ encodes both standard and twisted gauge structures on $\manP$. In addition, the isomorphism classes of bundles $\manS$ are in 1:1 correspondence with the equivalence classes of cocycles (up to a coboundary). This paper also provides a new interpretation of (full) dressing fields as dynamic (or active) sections of a principal bundle.
\end{abstract}

Keywords: \emph{Twisted Gauge Fields, Dressing Field Method, Non-Abelian Group Cohomology, Gauge Theories, Extensions of Principal Bundles}

%% END OF NON JOURNAL VERSION
%%%%%%%%%%%%%%%%%%%%%%%%%%%%

\setcounter{tocdepth}{2}
\tableofcontents

%\newpage

%%%%%%%%%%%%%%%%
\section{Introduction}
\label{sec Introduction}
%%%%%%%%%%%%%%%%

Fiber bundles provide the natural geometric framework for describing physical fields with non-trivial global structure. Locally, a fiber bundle resembles a product $U \times F$ of an open subset $U\subset M$ of the base manifold with a typical fiber $F$, yet its global topology may be genuinely non-trivial, encoding phenomena that no local description can capture. A particularly important class is that of principal bundles $\manP(M,H)$, in which the fiber carries the structure of a Lie group $H$ acting freely and transitively on the total space $\manP$. The structure group $H$ plays the role of a symmetry group for a physical model. Principal bundles occupy a foundational role in differential geometry \cite{kobayashi_foundations_1963, kobayashi_foundations_1969, sharpe_differential_1997}: all other fiber bundles sharing the same symmetry group – the so-called associated bundles – are modeled on them, making the principal bundle the central object from which the full geometric structure of a theory is derived.

This framework finds its most prominent physical applications in gauge theories, where fields are defined so as to carry, at each point of spacetime, additional internal degrees of freedom – a phase, an amplitude, or more generally an element of some internal space. The Standard Model of particle physics \cite{glashow_partial-symmetries_1961, salam_weak_1994, weinberg_model_1967} is most naturally formulated in this language, with the gauge group based on the symmetry group $H=U(1) \times SU(2) \times SU(3)$ defining a principal bundle over spacetime whose sections $\varphi$ of associated bundles\footnote{$(W,\rho_H)$ denotes a representation space of $H$.} $\manP\times_HW$ encode matter fields and principal connection 1-forms $\omega\in\calA(\manP)$ encode interaction fields of the theory. In the same fashion, the curvature associated to $\omega$ models the field strength, the corresponding covariant derivatives models minimal coupling and vertical automorphisms model gauge transformations. Beyond local structure, the topological aspects of principal bundles prove equally indispensable: phenomena such as the Aharonov–Bohm effect \cite{aharonov_significance_1959, wu_concept_1975} and the Berry phase \cite{Berry_1984, simon_holonomy_1983} are fundamentally global in nature and find their precise mathematical expression in the bundle-theoretic setting.

A closely related and geometrically rich construction is that of Cartan geometry \cite{cap_parabolic_2009, sharpe_differential_1997}, which may be understood as a principal bundle equipped with an additional soldering form that constrains the geometry of the fibers to model the geometry of the tangent space to the base manifold. A Cartan bundle is a principal bundle $\manP(M,H)$ modeled on a pair of Lie groups $(G,H)$, $H\subset G$, with Lie algebras $\kg$ and $\kh$. The “model” consists in an isomorphism between $\kg$ and the tangent space $\Tg_p\manP$ at each point $p\in\manP$, which induces an isomorphism between $\kh$ and the vertical tangent space $\Ver_p\manP$. Cartan geometry thus generalises Riemannian geometry in a natural and flexible way, and provides in particular the appropriate framework for gauge formulations of general relativity. The most notable instance of such formulation is the Cartan tetrad (or vierbein) formalism \cite{krasnov_formulations_2020}, in which the gravitational field is obtained from a connection on a Cartan bundle modeled on the Poincaré and Lorentz groups $(ISO(1,3),SO(1,3))$. This construction admits a natural and physically motivated extension to conformal Cartan geometry \cite{sharpe_differential_1997}, modeled on the group $SO(2,4)$, which captures the conformal symmetry structure of spacetime and underpins several approaches to conformal field theory and twistor theory.

A recurring theme across these geometric frameworks is that of gauge symmetry reduction: the process by which the structure group $H$ of a principal bundle is reduced to a proper subgroup $H' \subset H$. Such reductions arise both for practical computational reasons and for physical ones, when only a subgroup of the full gauge symmetry is deemed physically relevant.

The Dressing Field Method (DFM), developed since 2010, provides a systematic and geometric approach to such reductions, and has emerged as a compelling alternative to more ad hoc procedures \cite{attard_dressing_2018, masson_remark_2010}. In particular, it offers a natural substitute for the Higgs\footnote{Englert-Brout-Higgs-Guralnik-Hagen-Kibble mechanism for completeness} mechanism \cite{englert_broken_1964, guralnik_global_1964, higgs_broken_1964}, circumventing the need for auxiliary hypotheses such as the assumption of a symmetric phase, and avoiding the philosophical difficulties surrounding the physical interpretation of gauge symmetries – which, being redundancies of the mathematical description rather than physical symmetries in the strict sense, cannot be broken in any operationally meaningful way.

The DFM has found a broad range of applications: it yields a reduction of the Cartan tetrad formalism to standard general relativity \cite{fournel_gauge_2014}, a reduction of conformal Cartan geometry to tractor calculus or twistor geometry \cite{attard_tractors_2017, attard_tractors_2017-1}, a reinterpretation of the Faddeev–Popov gauge-fixing mechanism \cite{guillaud_gauge_2025}, and a foundation for the programme of general-relativistic gauge field theory (gRGFT)\cite{berghofer_dressing_2024, francois_relational_2025, francois_geometric_2025, francois_unconventional_2025, francois_invariant_2026}.

The DFM aims at erasing a proper subgroup $K\subset H$ of the structure group. It relies on the existence of a local “dressing field” $u:\manP_U\to H$ satisfying the equivariance property $u(pk)=k\inv u(p)$ w.r.t. any $k\in K$. Provided such a field can be constructed in the model, one can compose it with standard gauge fields $\varphi$ and $\omega$ to get composite fields (or dressed fields) $\varphi^u$ and $\omega^u$ which turn out to be invariant with respect to $K$. This process shall be interpreted as a new arrangement of the degrees of freedom carried by $\varphi$ and $\omega$: those related to $K$, seen as redundancies, are carried by $u$ while the dressed fields carry only the remaining (physical) degrees of freedom. This method leads to a residual gauge theory with structure group $H/K$ (when $K$ is a normal subgroup).

It is worth insisting on the fact that a dressing field is not a gauge group element. Quite the opposite, $u$ reduces the symmetry while a gauge group element $\gamma$ merely shifts its reference. This paper also suggests that $u$ might be better interpreted as related to a local trivializing section of $\manP$, see Sec.~\ref{subsec The DFM}.

A notable feature emerging from the application of the DFM to conformal Cartan geometry (and later on, to gRGFT) is the appearance of residual symmetries of a new and unexpected type: rather than transforming in the standard way, the remaining gauge fields exhibit twisted transformation laws under the residual symmetry group. This observation motivates the development of a general theory of Twisted Gauge Fields, initiated in \cite{francois_twisted_2021}. This forms the starting point of the present article.

Let us sketch the origin of this more general mathematical framework for gauge theories. The usual approach for gauge theories is based on the differential geometry of fiber bundles and connections. In this context, a specific Lie group $H_0$ plays two roles. The first one, of common use in theoretical physics, consists in modelling matter fields as functions with values in some representation space $(W,\rho_0)$ of $H_0$. This role is of algebraic nature. The second role, of geometric nature, consists in using $H_0$ as the typical fiber and structure group of a principal fiber bundle $\manP_0(M,H_0)$ over a base manifold $M$. Working globally at the level of $\manP_0$ (as opposed to the local approach on open sets of $M$), these two roles are consistently related by an equivariance condition when one considers, for instance, matter fields as equivariant maps $\varphi_0:\manP_0\rightarrow W$ satisfying $\varphi_0(ph)=\rho_0(h\inv)\cdotact\varphi_0(p)$. The geometric action on the left hand side (LHS) is consistently related to the algebraic action on the right hand side (RHS).

The DFM does not necessarily preserve this relationship between the two roles. Indeed, a dressing field is crafted in order to satisfy the relevant equivariance property $u(pk)=k\inv u(p)$ for any $k\in K$, the subgroup of $H_0$ to be erased. Such a field may be valued in a different (e.g. larger) group than $H_0$ \cite{attard_tractors_2017, attard_tractors_2017-1}. This shall impact the algebraic part of the equivariance relation for the dressed field $\varphi_0^u$ defined by “$u\inv\cdotact\varphi_0$”. When this latter expression exists (in a mathematical sense), $\varphi^u$ does not necessarily belong to $(W,\rho_0)$ anymore. We refer to \cite{attard_tractors_2017, attard_tractors_2017-1} for further details. Concerning the geometric part, when it exists, the residual group $H=H_0/K$ defines a new principal fiber bundle $\manP(M,H)$, which can replace $\manP_0$ to describe the model at hand. However, the residual group $H$ may no longer be consistent with the algebraic part.

This mismatch is the reason why the residual gauge model – after application of the DFM – requires a more general framework to restore an equivariant relation between the geometric and algebraic parts. The minimal extension of the usual framework consists in considering
\begin{itemize}
\item the principal fiber bundle $\manP(M,H)$ as the geometric part,
\item a Lie group $G$ together with a representation space $(V,\rho_G)$, modelling the algebraic part for the dressed fields,
\item a map $C:\manP\times H\rightarrow G$ used to relate these two actions through an equivariance relation of the form $\varphi(ph)=\rho_G(C(p,h)\inv)\cdotact\varphi(p)$ for any $(p,h)\in\manP\times H$, where $\varphi:\manP\rightarrow V$. 
\end{itemize}
Note that the twisted field $\varphi$ is the field induced on $\manP$ by the $K$-invariant field $\varphi_0^u$ on $\manP_0$. One can forget about this lineage and consider twisted fields for themselves, which is the approach followed in this article.

Since, on the geometric part, one has $p(hh')=(ph)h'$ for any $(p,h,h')$, one gets from the equivariance relation\footnote{$\rho_G$ is omitted from now on.}
\begin{align*}
&\varphi(p(hh')) 
= C(p,hh')\inv\cdotact\varphi(p)
\\
&= \varphi((ph)h')
= C(ph,h')\inv\cdotact\varphi(ph)
= C(ph,h')\inv C(p,h)\inv\cdotact\varphi(p)
\end{align*}
which leads to the so-called cocycle relation $C(p,hh')=C(p,h)C(ph,h')$. This new standpoint is a generalization of the standard framework, which is recovered in the specific case $G=H$ and $C(p,h)=h$, see Sec.~\ref{subsec Identity Map}. We denote $(\manP(M,H),G,C)$ this new structure. The previous cocycle relation is nothing but the defining relation for cocycles appearing in the context of twisted gauge fields, c.f. \cite{francois_twisted_2021}.

The present work interprets twisted gauge fields on $(\manP(M,H),G,C)$ as standard gauge fields on an additional principal bundle $\manQ(M,G)$ whose structure group is the action group of the twisted framework. The relevant choice for $\manQ$ is a twisted associated bundle to $\manP$ of the form $\manP\times_{C(H)}V$. This construction relies on a correspondence space $\manS(M,H\times G)$ which has both $\manP$ and $\manQ$ as quotient bundles. Note that one can still define standard $H$-gauge fields on $\manP$ together with the previously mentioned twisted gauge fields. In the current framework, both standard and twisted fields on $\manP$ appear as pullbacks of standard $(H\times G)($-gauge fields on $\manS$. In particular, these fields on $\manS$ split naturally into $H$-fields (which induce the standard fields on $\manP$) and $G$-fields (which induce the twisted ones). Going from $\manS$ to $\manQ$ amounts to retaining the $G$-structure while forgetting the $H$ one.

From a structural point of view, $\manS$ appears as more than a mere correspondence space for twisted and standard gauge fields. This work highlights a deep link between $\manS$ and the cohomology of the cocycle map $C$. Note that the geometry of $\manS$ (as a $(H \times G)$-principal fiber bundle) depends on the cocycle considered. In fact, the equivalence classes of $\manS$ up to a principal bundle isomorphism are in 1:1 correspondence with the equivalence classes of $C$ up to a coboundary. In other words, two equivalent cocycles $C$ and $C'$ induce the same space $\manS$ in which two different submanifolds are canonically identified with $\manP$. On the contrary, two non-equivalent cocycles generate two non-isomorphic $\manS$.

\smallskip
The paper is organized as follows. Sec.~\ref{sec Standard and Twisted Gauge Fields} recalls the theory of standard and twisted gauge fields. It also contains a reminder on the DFM with a new interpretation. In Sec.~\ref{sec Isomorphism}, $\manS$ and $\manQ$ are introduced and an isomorphism between the twisted structure on $\manP$ and the standard one on $\manQ$ is displayed. Sec.~\ref{sec Connections GT on S} highlights a correspondence between connections on $\manS$ and pairs of standard and twisted connections on $\manP$. An equivalent correspondence is displayed for gauge transformations. Sec.~\ref{sec Cocycles} focuses on the behaviour of this framework under a change of cocycle. In particular, it draws a link between the cohomology of $C$ and the geometry of $\manS$. Sec.~\ref{sec Examples of cocycles} details several particular cases when $C$ has specific properties. Finally, Sec.~\ref{sec Local POV} sets the local version of these constructions on $U\subset M$. It makes use of the DFM to parametrize sections of the principal bundles and it suggests a use of twisted fields in standard gauge theories.

%%%%%%%%%%%%%%%%
\section{Standard and Twisted Gauge Fields, a Reminder}
\label{sec Standard and Twisted Gauge Fields}
%%%%%%%%%%%%%%%%

%%%%%%%%%%%%%%%%
\subsection{Standard Gauge Fields}
\label{subsec Standard Gauge Fields}
%%%%%%%%%%%%%%%%

Let $\manP(M,H)$ be a principal bundle over a manifold $M$, with structure group a Lie group $H$ (whose Lie algebra reads $\kh$). $\calR_h p \rdefeq ph$ denotes the right action of $H$ on $\manP$ and $\pi_\manP:\manP\rightarrow M$ is the corresponding projection. Let $(W,\rho_H)$ be a representation space for $H$, \textit{i.e.} $W$ is a vector space and $\rho_H:H\rightarrow \Aut(W)$ is a group homomorphism. The right action of $H$ on $W$ reads $a_h(w) \defeq \rho_H(h\inv)\cdotact w=h\inv\cdotact w$, for all $h\in H$ and $w\in W$, when the use of $\rho_H$ is unambiguous. Indeed, $a_h$ satisfies $(a_h\circ a_{h'})(w) = a_h(a_{h'}(w)) = h\inv\cdotact a_{h'}(w) = h\inv\cdotact ({h'}\inv \cdotact w) = (h\inv {h'}\inv )\cdotact w = (h'h)\inv\cdotact w = a_{h'h}(w)\implies a_h\circ a_{h'} =  a_{h'h}$. Notice that $\rho_H$ itself induces a left action of $H$ on $W$. 

Given all the above, the (standard) associated vector bundle $E_\manP \defeq \manP\times_H W \defeq (\manP\times W)/H$ is the set of equivalence classes on $\manP\times W$ under the relation $(p,w)\sim(ph,h\inv \cdotact w),\,\forall h\in H$. Among the fields on $\manP$, \emph{i.e.} $f:\manP\to F$, some are said ($H$-)equivariant when their pullbacks through $\calR_h$ are given by a right action $\text{Act}$ of $H$ on their target space: $\calR_h^*f=\text{Act}_{h\inv }f$. For example, $\varphi\in\Omega^0(\manP,W)=\calF(\manP,W)$ such that $(\calR_h^*\varphi)(p)=\varphi(ph)=\rho_H(h\inv)\cdotact\varphi(p)$ is said $\rho_H$-equivariant. The corresponding space of equivariant maps will be denoted $\calF_\eqv(\manP,W)$. It is well known that such equivariant fields are in bijection with sections $\phi\in\Gamma(E)$ of the corresponding associated bundle. Explicitly, the correspondence is given by
\begin{align} 
\label{eq correspondence section to eq maps}
\varphi &\in\calF_\eqv(\manP,W)
&\longleftrightarrow
&
&\phi : x \mapsto[p, \varphi(p)] = [ph,  h\inv \cdotact  \varphi(p)] = [ph,\varphi(ph)] \in \Gamma(E), %%%\manP\times_H W),
\end{align}
for any $p\in\pi_\manP\inv(x)$. An equivariant field is fully determined by its values on a section of $\manP$. Indeed, knowing $f$ at a given point $p$, one deduces its value at any other point in the fiber of $\manP$ using the equivariance relation. A particular case of an equivariant field is an invariant field, satisfying $\calR_h^*f=f$.

Any tangent linear space $\Tg_p\manP$ at $p\in\manP$ has a canonical “vertical” subspace $\Ver_p\manP$. This is the space of vertical vectors, \textit{i.e.} tangent vectors to curves which stay in the same fiber as $p$ locally. As is well-known, $\Ver_p\manP\simeq \kh$. Choosing a “horizontal” supplementary subspace $\Hor_p\manP$ such that $\Tg_p\manP=\Ver_p\manP\oplus \Hor_p\manP$ is equivalent to choosing a connection 1-form $\omega\in\Omega^1(\manP,\kh)$ satisfying the following vertical normalization and $H$-equivariance with respect to the adjoint action:
\begin{itemize}
\item $\omega_p(\ka_p^\ver)=\ka,\,\forall \ka\in\kh$,
\item $\calR_h^*\omega=\Ad_{h\inv }\omega,\,\forall h\in H$.
\end{itemize}
In the vertical normalization, $\ka_p^\ver \defeq \dert{p\exp(t\ka)}\in \Ver_p\manP$. The horizontal tangent space is then defined as $\Hor_p\manP \defeq \Ker\omega_p$. The set of connection 1-forms on $\manP$ will be denoted $\calA(\manP)$ and it has a structure of affine space modeled on the vector space of $\Ad$-tensorial $\kh$-valued 1-forms, see below. Let $\rho_*$ be the representation of $\kh$ induced\footnote{$\rho_*(\ka)\cdotact w=\dert{\rho(e^{t\ka})\cdotact w},\,\ka\in\kh$} on $W$ by $\rho \defeq \rho_H$. Any connection induces a covariant differential $D=\dr+\rho_*(\omega)$ on ($\bbR$-valued, $W$-valued, $\kh$-valued...)\footnote{The implicit action of $\omega$ is the action of $\kh$ induced by the action of $H$. It depends on the space in which the form is valued.} differential forms and a curvature 2-form $\Omega=d\omega+\frac{1}{2}[\omega,\omega]\in\Omega^2(\manP,\kh)$. Like $\omega$, $\Omega$ is $\Ad$-equivariant.

A differential form is called horizontal when it vanishes on vertical vector fields. The space of $S$-valued horizontal forms (for a given vector space $S$) is $\Omega_\hor^\bullet(\manP,S)=\{\alpha\in\Omega^\bullet(\manP,S),\,\alpha_p(X_p^0,X_p^1,\ldots,X_p^k)=0\ \forall X_p^0\in \Ver_p\manP;\,X_p^1, \ldots, X_p^k\in \Tg_p\manP\}$. In particular, any 0-form is horizontal. An equivariant horizontal form is called tensorial and an $H$-invariant horizontal form is called basic. These spaces are denoted $\Omega_\hor^\bullet,\ \Omega_\eqv^\bullet,\ \Omega_\invt^\bullet,\ \Omega_\tens^\bullet,\ \Omega_\bas^\bullet$. Note that a $H$-tensorial form on $\manP$ induces a differential form on $M$ and a $H$-basic form on $\manP$ induces an invariant form on $M$. For instance, the curvature $\Omega$ is $\Ad$-tensorial but the connection $\omega$ is not.

The group of vertical automorphisms $\Autv(\manP)=\{\Phi\in\Diff(\manP) \mid\Phi(ph)=\Phi(p)h,\ \pi_\manP\circ\Phi=\pi_\manP\}$ acts naturally on the previous fields by pullback. $\Autv(\manP)$ is isomorphic to the gauge group $\calH^\manP=\{\gamma : \manP \rightarrow H \mid \gamma(ph)=h\inv \gamma(p)h\}$ with the correspondence $\Phi(p)=p\gamma(p)$. Note that both $\Phi$ and $\gamma$ are defined through their equivariance properties: $\calR_h^*\Phi=\calR_h\circ\Phi$ and $\calR_h^*\gamma=h\inv\gamma h$. To be more explicit, the gauge transformations of the previous fields are given by
\begin{align*}
\omega^\gamma 
&\defeq \Phi^*\omega 
= \gamma\inv\omega\gamma+\gamma\inv\dr\gamma,
\\
\Omega^\gamma 
&\defeq \Phi^*\Omega 
= \gamma\inv\Omega\gamma,
\\
\varphi^\gamma 
&\defeq \Phi^*\varphi 
= \rho(\gamma\inv)\cdotact\varphi,
\\
D^\gamma \varphi^\gamma 
&\defeq \Phi^*D\varphi 
= \rho(\gamma\inv)\cdotact D\varphi.
\end{align*}
In addition, the elements of the gauge group themselves are subject to a specific gauge transformation: $\forall \gamma,\gamma'\in\calH^\manP,\,\gamma'^\gamma \defeq \Phi^*\gamma'=\gamma\inv\gamma'\gamma$.

Any gauge field defined on $\manP$ induces a field locally on $M$. Let $U\subset M$ be an open subset and consider a (local) smooth trivializing section of $\manP$ denoted $\sigma:U\rightarrow\manP_U \defeq \pi_\manP\inv(U)$, which satisfies $\pi_\manP\circ\sigma=\id_U$. Given a local section, one can define fields on $U$ as pullbacks by $\sigma$ of fields on $\manP$. For example, consider $\varphi^\manP\in\calF_\eqv(\manP,W)$, $\omega^\manP\in\calA(\manP)$ and $\alpha^\manP\in\Omega_\tens^\bullet(\manP,W)$. These fields give rise to $\phi=\sigma^*\varphi^\manP\in\calF(U,W)$, $\phyA \defeq \sigma^*\omega^\manP\in\Omega^1(U,\kh)$ and $\phya \defeq \sigma^*\alpha^\manP\in\Omega^\bullet(U,W)$. Let $U,U'\subset M$ be open subsets such that $U\cap U'\neq\varnothing$ and consider the sections $\sigma$ and $\sigma'$ of $\manP$ on $U$ and $U'$ respectively. Let $g:U\cap U'\rightarrow H$ be the corresponding gluing function, \textit{i.e.} $\forall x\in U\cap U',\ \sigma'(x)=\sigma(x)g(x)$. With the notations $\phi=\sigma^*\varphi^\manP,\ \phi'=\sigma^{\prime*}\varphi^\manP$, one get that for any $x\in U\cap U'$,
\begin{align}
\label{eq loc phi 1}
\phi'(x)
&= \rho(g(x)\inv)\cdotact\phi(x).
\end{align}
With similar notations,
\begin{equation}
\label{eq loc A a 1}
\begin{aligned}
\phyA'_x
&= g(x)\inv \phyA_x g(x)+ g(x)\inv\dr g_x
\\
\phya'_x
&= \rho(g(x)\inv)\cdotact \phya_x.
\end{aligned}
\end{equation}
As said before, since the curvature $\Omega^\manP$ is $\Ad$-tensorial, it yields a globally defined object on the base manifold $M$, say $F\in \Omega^2(M,\kh)$. It can locally be identified with
the pull-back $\phyF \defeq \sigma^*\Omega^\manP\in\Omega^2(U,\kh)$ such that $\phyF=d\phyA + \phyA^2$ thanks to the pull-back of the Cartan structure equation. One has $\phyF'_x = g(x)\inv \phyF_x g(x)$.

A change of local trivializing section is often called “(local) passive gauge transformation” and is closely related to a so-called “(local) active gauge transformation”, namely the pullback of a vertical automorphism by a local trivializing section.

Given a local trivialising section $\sigma$ of $\manP$, let us define the local gauge group $\calH_U \defeq \{\locgamma=\sigma^*\gamma \mid \gamma\in\calH^\manP\}$. The defining equivariance property of $\calH^\manP$ induces a characterization of the local gauge group: $\calH_U=\{\locgamma : U \rightarrow H \mid \forall \locgamma'\in\calF(U,H),\ \locgamma^{\locgamma'}=\locgamma^{\prime-1}\locgamma\locgamma'\}$. This characterization is independent of both $\sigma$ and $\manP$. The action of $\calH^\manP$ on gauge fields induces an action of $\calH_U$ on their local counterparts. The local active gauge transformations of the previous local fields are
\begin{equation}
\label{eq standard GT}
\begin{aligned}
\phi^\locgamma(x)
&= \rho(\locgamma(x)\inv)\cdotact\phi(x),
\\
\phyA^\locgamma_x
&= \locgamma(x)\inv \phyA_x \locgamma(x)+\locgamma(x)\inv\dr\locgamma_x,
\\
 \phyF^\locgamma_x
& = \locgamma(x)\inv \phyF_x \locgamma(x)
\\
\phya^\locgamma_x
&=\rho(\locgamma(x)\inv)\cdotact \phya_x.
\end{aligned}
\end{equation}
These transformations are formally identical to passive gauge transformations. The compatibility between \eqref{eq loc A a 1} and \eqref{eq standard GT} yields $\locgamma' = g\inv \locgamma g$.

This framework is of great use in physics since it underlies gauge theories, in particular the Standard Model of Particle Physics. In this framework, $M$ models spacetime up to a diffeomorphism, $H$ carries internal degrees of freedom, $\phi\in\Gamma(E)$ and $\phyA \in\Omega^1(U,\kh)$ correspond to matter fields and interaction fields respectively and $\phyF\in\Omega^2(U,\kh)$ is the field strength.

%%%%%%%%%%%%%%%%
\subsection{The Dressing Field Method (DFM)}
\label{subsec The DFM}
%%%%%%%%%%%%%%%%

The previous section highlights two ways of changing the reference of local gauge fields. One can either change the section used to pullback fields on $\manP$ (“passive gauge transformation”) or apply a vertical automorphism on $\manP$ and pullback through the same section (“active gauge transformation”). In this sense, $\gamma\in\calH^\manP$ can be seen as the “active” (or “dynamic”) counterpart of a gluing function. Similarly, a dressing field $u$ (defined in the following) is a map on $\manP$ crafted in such a way that it is the active counterpart of a section $\sigma$ on $U$. The scheme of this approach is the following:
\begin{align}
\label{tab DFM vs section}
\text{\begin{tabular}{ccc}
 & ...passively & ...actively\\
changes the reference... & $g(x)$ & $\gamma(p)$\\
selects a reference... & $\sigma(x)$ & $u(p)$
\end{tabular}}
\end{align}
Recall that a principal bundle is canonically associated to itself. Indeed, $\manP\times_HH\simeq\manP$ with the two (mutually inverse) isomorphisms
\begin{align*}
\manP &\rightarrow \manP\times_HH
&
\manP\times_HH &\rightarrow \manP
\\
p &\mapsto[p,e]=[ph,h\inv]
&
[p,h] &\mapsto ph
\end{align*}
These two isomorphisms together with the correspondence (\ref{eq correspondence section to eq maps}) lead to the isomorphism $\calF_\eqv(\manP,H)\simeq\Gamma(\manP\times_HH)\simeq\Gamma(\manP)$ given by
\begin{align*}
u &\in \calF_\eqv(\manP,H) 
&
\longleftrightarrow
&
&
\bar{\sigma} : x &\mapsto [p, u(p)] = [ph, h\inv u(p)] = [ph,u(ph)] \in\Gamma(\manP\times_HH)
\\
&
&
\longleftrightarrow
&
&
\sigma : x &\mapsto pu(p) = phh\inv u(p) = phu(ph) \in \Gamma(\manP)
\end{align*}
where $p\in\pi_\manP\inv(x)$ is arbitrary. Note that a global section of $\manP$ exists if and only if $\manP\simeq M\times H$ is a trivial bundle. However, the previous correspondences still hold locally: $\calF_\eqv(\manP_U,H)\simeq\Gamma(\manP_U)$.
\begin{definition}\label{def dressing}
A $H$-dressing field over $U$ on $\manP$ is an equivariant map $u\in\calF_\eqv(\manP_U,H)$ with equivariance property $\calR_h^*u(p)=h\inv u(p)$.
\end{definition}
One checks that this definition induces the gauge transformation
\begin{align}\label{eq GT u}
u^\gamma(p)=\gamma(p)\inv u(p)
\end{align}
The above stated information imply that
\begin{itemize}
\item a dressing field $u$ is equivalent to a local section $\sigma$ of $\manP$ over $U$,
\item the existence of a global dressing field on $\manP$ is equivalent to the triviality of $\manP\simeq M\times H$.
\end{itemize}
Similarly to the isomorphism $\calH^\manP\rightarrow\Autv(\manP),\ \gamma\mapsto \Phi:p\mapsto p\gamma(p)$, dressing fields correspond to $H$-invariant maps. Let $u\in\calF_\eqv(\manP_U,H)$ and let us define
\begin{align*}
\Sigma : \manP_U &\rightarrow \manP_U, \hspace{5mm} p \mapsto pu(p).
\end{align*}
This map trivially satisfies $\calR_h^{\manP*}\Sigma=\Sigma$ (contrary to vertical automorphisms $\Phi$ satisfying $\calR_h^{\manP*}\Phi(p)=\Phi(p)h$). Thus it induces a map on $U$ which is nothing else but $\sigma$. $\Sigma$ will play a key role in the DFM, see below. The dressing field is indeed the concept which consistently fills the bottom right cell of table \eqref{tab DFM vs section}.
\begin{proposition}
The passive and active gauge transformations are consistent. Given any sections $\sigma,\ \sigma'$ glued by $g$ over $U\cap U'\neq 0$, their active counterparts $u,\ u'$ satisfy $u'=u^\gamma=\gamma\inv u$ where $\gamma$ is the (active) gauge element corresponding to $g\inv$.
\end{proposition}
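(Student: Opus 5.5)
The plan is to work pointwise on $\manP_{U\cap U'}$, using the correspondence $u \leftrightarrow \sigma$ established just above. Given $\sigma$, the associated dressing field $u$ is characterized by $pu(p)=\sigma(\pi_\manP(p))$ for every $p\in\pi_\manP\inv(U)$. This follows from $\sigma(x)=pu(p)$, and since the action is free it determines $u(p)$ uniquely. Similarly $pu'(p)=\sigma'(\pi_\manP(p))$.

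Writing $x=\pi_\manP(p)$, the gluing relation $\sigma'(x)=\sigma(x)g(x)$ gives $pu'(p)=pu(p)g(x)$. Freeness of the right action then yields $u'(p)=u(p)\,g(\pi_\manP(p))$. This identity is the passive-side relation lifted to $\manP$.

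The remaining step is to rewrite $u(p)g(x)$ in the form $\gamma(p)\inv u(p)$ and to check that $\gamma$ is a genuine element of $\calH^\manP$ corresponding to $g\inv$. I will define
\[
\gamma(p) \defeq u(p)\, g(\pi_\manP(p))\inv\, u(p)\inv .
\]
With this choice, $\gamma(p)\inv u(p)=u(p)g(x)u(p)\inv u(p)=u(p)g(x)=u'(p)$, and then \eqref{eq GT u} gives $u'=u^\gamma$.

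It remains to verify equivariance, using $u(ph)=h\inv u(p)$:
\[
\gamma(ph)=h\inv u(p)\,g(x)\inv\, u(p)\inv h=h\inv\gamma(p)h ,
\]
so $\gamma\in\calH^{\manP_{U\cap U'}}$.

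Finally, I will identify $\gamma$ with $g\inv$ through the local correspondence $\locgamma=\sigma^*\gamma$. At $p=\sigma(x)$ one has $u(\sigma(x))=e$, because $\sigma(x)u(\sigma(x))=\sigma(x)$. Hence $\sigma^*\gamma=g\inv$, which justifies the phrase ``corresponding to $g\inv$''. By equivariance, this local datum determines $\gamma$ on all of $\manP_{U\cap U'}$.

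The main obstacle is conceptual rather than computational: one must fix what ``corresponding'' means, namely via pullback by $\sigma$ (not $\sigma'$). One must also make sure that the inverse appears because $\Phi(p)=p\gamma(p)$ moves points actively while the change of section is passive. The computation itself is short.
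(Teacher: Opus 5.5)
Your proof is correct. It uses the same two ingredients as the paper's proof, namely freeness of the $H$-action and equivariance along the fiber, but it runs in the opposite direction. The paper starts from the active side: it \emph{assumes} $u'=\gamma\inv u$ for some $\gamma\in\calH^\manP$, uses $\gamma(pu(p))=u(p)\inv\gamma(p)u(p)$ to get $\sigma'(x)=\sigma(x)\gamma(\sigma(x))\inv$, and concludes $g=\sigma^*\gamma\inv$ by freeness. You start from the passive side: freeness gives $u'=u\,(g\circ\pi_\manP)$. You then \emph{construct} $\gamma=u\,(g\circ\pi_\manP)\inv u\inv=u\,(u')\inv$, check that it is a gauge element over $U\cap U'$, and read off $\sigma^*\gamma=g\inv$ from $u(\sigma(x))=e$. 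Your route actually produces $\gamma$ and proves it lies in the gauge group, which the paper takes for granted. The paper's route is shorter and shows the passive gluing relation as a direct consequence of the active transformation. One minor remark: you do not need to insist on pulling back by $\sigma$ rather than $\sigma'$, since $\sigma^{\prime*}\gamma=g\inv(\sigma^*\gamma)\,g=g\inv$ as well.
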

Note that $g=\sigma^*\gamma\inv$ instead of $g=\sigma^*\gamma$ is a matter of convention. It depends on the chosen conventions for gluing functions and for the action of the gauge group on $\manP$. The current conventions are commonly found in the literature.
\begin{proof}
$\sigma$ (resp. $\sigma'$) has active counterpart $u$ (resp. $u'$), which means that: $\forall x\in U$,
\begin{align*}
\sigma(x) &= pu(p), & \sigma'(x) &= pu'(p)
\end{align*}
where $p\in\pi_\manP\inv(x)$ is arbitrary. On the one hand,
\begin{align*}
\sigma'(x) &= \sigma(x)g(x).
\end{align*}
On the other hand, suppose $u'=\gamma\inv u$. Then, using $\gamma(p u(p)) = u(p)\inv \gamma(p) u(p)$, one has
\begin{align*}
\sigma'(x) 
&= pu'(p)
= p\gamma\inv(p)u(p)
= pu(p)u(p)\inv\gamma\inv(p)u(p)
= pu(p)\gamma(pu(p))\inv
\\
&= \sigma(x)\gamma(\sigma(x))\inv.
\end{align*}
Since the action of $H$ on $\manP$ is assumed to be free, then $g(x)=\gamma(\sigma(x))\inv=\sigma^*\gamma\inv(x)$.
\end{proof}
The DFM consists in composing a dressing field with the standard fields of a gauge theory in order to get gauge-invariant fields. Given $u$, and so $\Sigma$, one can pullback gauge fields $\omega,\ \Omega,\ \varphi$ and $D\varphi$ through $\Sigma$ in order to define so-called dressed (or composite) fields on $\manP_U$
\begin{equation}
\label{eq:DFM/P}
\begin{aligned}
\omega^u &\defeq \Sigma^*\omega = u\inv\omega u+u\inv\dr u,\\
\Omega^u &\defeq \Sigma^*\Omega = u\inv\Omega u,\\
\varphi^u &\defeq \Sigma^*\varphi = \rho(u\inv)\cdotact\varphi,\\
D^u \varphi^u &\defeq \Sigma^*D\varphi = \rho(u\inv)\cdotact D\varphi.
\end{aligned}
\end{equation}

\begin{lemma}
The dressed fields are invariant under the action of $\calH^\manP$.
\end{lemma}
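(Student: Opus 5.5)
The plan is to argue geometrically through the map $\Sigma$, rather than grinding through the explicit formulas. Let $\gamma\in\calH^\manP$ be arbitrary, with associated vertical automorphism $\Phi(p)=p\gamma(p)$. For any field $\chi\in\{\omega,\Omega,\varphi,D\varphi\}$, the gauge transform of the dressed field is obtained by dressing the transformed field $\chi^\gamma=\Phi^*\chi$ with the transformed dressing field $u^\gamma$. So the quantity to compute is
\[
(\chi^u)^\gamma \defeq (\Sigma^\gamma)^*\Phi^*\chi = (\Phi\circ\Sigma^\gamma)^*\chi,
\]
where $\Sigma^\gamma(p)\defeq p\,u^\gamma(p)$ is the map built from $u^\gamma=\gamma\inv u$ as in \eqref{eq GT u}. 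The statement then reduces to the identity $\Phi\circ\Sigma^\gamma=\Sigma$ of maps $\manP_U\to\manP_U$.

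To check this identity, I compute pointwise. First, $\Sigma^\gamma(p)=p\gamma(p)\inv u(p)$. Applying $\Phi$ and using the equivariance $\gamma(ph)=h\inv\gamma(p)h$ with $h=\gamma(p)\inv u(p)$ gives
\[
\gamma(\Sigma^\gamma(p))=u(p)\inv\gamma(p)\,\gamma(p)\,\gamma(p)\inv u(p)=u(p)\inv\gamma(p)u(p).
\]
Hence
\[
\Phi(\Sigma^\gamma(p))=p\gamma(p)\inv u(p)\,u(p)\inv\gamma(p)u(p)=pu(p)=\Sigma(p).
\]
This is the same manipulation already used in the proof of the preceding proposition, so I can cite it. It follows that $(\chi^u)^\gamma=\Sigma^*\chi=\chi^u$ for every field $\chi$, uniformly and without treating each field separately.

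As a cross-check, I will also verify the explicit formulas \eqref{eq:DFM/P} for one case, say the connection. Substituting $\omega^\gamma=\gamma\inv\omega\gamma+\gamma\inv\dr\gamma$ and $u^\gamma=\gamma\inv u$ into $u\inv\omega u+u\inv\dr u$, the inhomogeneous terms cancel, since $\dr(\gamma\inv u)=-\gamma\inv\dr\gamma\,\gamma\inv u+\gamma\inv\dr u$. This leaves $u\inv\omega u+u\inv\dr u$. The cases $\varphi$, $\Omega$ and $D\varphi$ are immediate because $\rho(u\inv\gamma)\rho(\gamma\inv)=\rho(u\inv)$.

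The main obstacle is conceptual rather than computational. One must be precise about what "the action of $\calH^\manP$ on a dressed field" means, namely that $u$ transforms as well, by \eqref{eq GT u}. The naive pullback $\Phi^*(\Sigma^*\chi)=(\Sigma\circ\Phi)^*\chi$ does not by itself give invariance. I will therefore fix this convention explicitly at the start, so that both factors in the composite transform together, and the rest then follows from the map identity $\Phi\circ\Sigma^\gamma=\Sigma$.
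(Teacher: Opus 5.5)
Your proof is correct: the identity $\Phi\circ\Sigma^\gamma=\Sigma$ is verified properly, and the algebraic cross-check is right. It is a detour compared with the paper, though, and your closing remark is mistaken. The paper takes the action on a dressed field to be plain pullback, $(\omega^u)^\gamma=\Phi^*\Sigma^*\omega=(\Sigma\circ\Phi)^*\omega$. It then only needs $\Sigma\circ\Phi=\Sigma$, which is immediate: $\Sigma(p\gamma(p))=p\gamma(p)\,u(p\gamma(p))=p\gamma(p)\gamma(p)\inv u(p)=pu(p)$. Conceptually, $\Sigma=\sigma\circ\pi_\manP$ is constant on fibers and $\Phi$ is vertical. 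So the ``naive'' pullback $\Phi^*(\Sigma^*\chi)$ \emph{does} give invariance. What fails is the other order, $\Sigma^*\Phi^*\chi=(\Phi\circ\Sigma)^*\chi$, which transforms $\chi$ while freezing $u$; you appear to have conflated the two.

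Nor is there any convention to fix. The law \eqref{eq GT u} is exactly $\Phi^*u=u\circ\Phi=\gamma\inv u$. By naturality of pullback, $\Phi^*(\chi^u)=(\Phi^*\chi)^{\Phi^*u}=(\chi^\gamma)^{u^\gamma}$. Geometrically, $\Sigma^\gamma=\Phi\inv\circ\Sigma\circ\Phi$, hence $(\Sigma^\gamma)^*\Phi^*=\Phi^*\Sigma^*$, and your $(\chi^u)^\gamma$ is literally the paper's.

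Your version makes explicit the field-composer picture in which $\chi$ and $u$ transform together, matching \eqref{eq:DFM/P}. The paper's version gets the result in one line from the fiber-constancy of $\Sigma$. That same line shows that dressed fields are invariant under pullback by any fiber-preserving map.
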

\begin{proof}
Let $u$ be a dressing field and $\Sigma:p\mapsto pu(p)$. As shown before, $\Sigma$ is an $H$-invariant map. Thus, for any $\gamma\in\calH^\manP\leftrightarrow\Phi\in\Autv(\manP)$,
\begin{align*}
\Sigma\circ\Phi(p)
&= \Sigma(p\gamma(p))
= p\gamma(p)u(p\gamma(p))
= pu(p)
= \Sigma(p).
\end{align*}
Let $\omega\in\calA(\manP)$. (The same proof holds for gauge fields of any other type.)
\begin{align*}
(\omega^u)^\gamma
&= \Phi^*\Sigma^*\omega 
= (\Sigma\circ\Phi)^*\omega
= \Sigma^*\omega
= \omega^u.
\end{align*}
Thus $\omega^u$ is a gauge-invariant field. 
\end{proof}
The use of $\Sigma$, already appearing in \cite{attard_tractors_2017,attard_tractors_2017-1}, provides a geometric interpretation of the dressing. The pullback $\varphi^u=\Sigma^*\varphi$ of a generic gauge field $\varphi$ through $\Sigma$ consists in prescribing the value $\varphi^u(p)=\varphi(pu(p)) =\varphi(\sigma(x))$ for the dressed field at any point $p$ in the fiber $\pi_\manP\inv(x)$, where the local section $\sigma$ is defined from $u$ as before. Then the equivariance of each gauge field promotes this geometric operation to an algebraic operation of the type $\varphi^u(p)=u(p)\inv\cdotact\varphi(p)$, as seen already in \cite{masson_remark_2010}. This recovers the approach based on field composers, of algebraic nature, used in \cite{guillaud_gauge_2025} for instance.

Since dressed fields are gauge-invariant, they induce gauge-invariant local fields $\phyA^\locu,\ \phyF^\locu,\ \phi^\locu,\ D^\locu\phi^\locu$ on $U\subset M$ by taking the pullbacks of \eqref{eq:DFM/P} by the local trivializing section $\sigma_0$, with $\locu\defeq \sigma_0^*u$. Following the gauge transformation \eqref{eq GT u} one has $\locu^\locgamma=\locgamma\inv\locu$, which is similar to $\locu'=g\inv\locu$ for a change of local trivializing sections, see \eqref{eq loc A a 1}.

It is worth noticing an important feature of local dressed fields. Let $\sigma_0$ and $\sigma_0'$ be local trivializing sections on $U$ and $U'$ such that $U\cap U'\neq\varnothing$, with gluing function $g$. The local versions of the dressed fields are linked by
\begin{align*}
(\phyA^{\locu})'\defeq \sigma_0^{\prime *}\omega^u=\sigma_0^{\prime *}\Sigma^*\omega=(\Sigma\circ\sigma_0')^*\omega
\end{align*}
Let us remark that
\begin{align*}
\Sigma\circ\sigma_0'(x)=\sigma_0(x)g(x)u(\sigma_0(x)g(x)) =\sigma_0(x)u(\sigma_0(x))=\Sigma\circ\sigma_0(x)
\end{align*}
This implies that
\begin{align}\label{eq:AuInvarDress}
(\phyA^{\locu})'\defeq \sigma_0^{\prime *}\omega^u=\sigma_0^{\prime *}\Sigma^*\omega=\sigma_0^*\Sigma^*\omega=\sigma_0^*\omega^u =\phyA^{\locu}
\end{align}
In other words, local dressed fields are independent of the chosen local description. One readily checks that $(\phyA^\locu)'=(\phyA')^{\locu'}=\phyA^\locu$. The DFM enables to effectively reduce the structure group. Indeed, the dressed fields (interpreted as physically relevant quantities) are gauge-invariant. However, contrary to a mere gauge fixing, $H$ is still part of the theory since $u$ is $H$-equivariant. The correct way to interpret this trick is to see the pair $(\varphi^u,u)$ as carrying the same information as $\varphi$.\footnote{To be prescise, there is the same information carried by all the gauge fields $\{\varphi,\ \omega,...\}$ of the model and by $(\{\varphi^u,\ \omega^u,...\},u)$.} The dressing field inherits the degrees of freedom related to the gauge group while the dressed field carries only the physically relevant degrees of freedom. The existence of an undressing process ensures that no information is lost, see \cite{guillaud_gauge_2025}.

This section introduces only “full” dressing procedures. Indeed, $u$ is required to satisfy the equivariance relation in Def.~\ref{def dressing} for any $h\in K=H$. One can equally define “partial” dressing fields satisfying $\calR_k^*u(p)=k\inv u(p)$ for any $k\in K$ where $K\subset H$ is a proper subgroup. If $K$ is normal, one may prescribe an equivariance relation for $u$ with respect to the residual structure group $H/K$. The details of this process is out of the scope of this paper, see \cite{attard_tractors_2017, attard_tractors_2017-1} for more details. Note that the correspondence between local trivializing sections and equivariant $H$-valued local maps already appears in \cite{daniel_geometrical_1980} with a field “$\phi_\alpha$” corresponding to $u\inv$. However, this was not used to generate (partially or fully) gauge-invariant composite fields as with the DFM, see below.

%%%%%%%%%%%%%%%%
\subsection{Twisted Gauge Fields}
\label{subsec Twisted Gauge Fields}
%%%%%%%%%%%%%%%%

As mentioned in the introduction, twisted structures are a generalization of the standard gauge structures described in Sec.~\ref{subsec Standard Gauge Fields}. We present here a short review of the concepts used in the following and we refer to \cite{francois_twisted_2021} for further details.

In the following, $\manP,M,H,\calR,\pi_\manP$ are defined as in Sec.~\ref{subsec Standard Gauge Fields}. Let $G$ be a Lie group (with Lie algebra $\kg$) and $(V,\rho_G)$ be a representation space for $G$.
\begin{definition}
A map $C:\manP\times H\rightarrow G$ is called cocycle (or group action cocycle, or 1-cocycle) if it satisfies the cocycle relation
\begin{align}
\label{cocycle relation}
C(p,hh')
&= C(p,h)C(ph,h') \text{ for any $p\in\manP$ and $h,h'\in H$}.
\end{align}
\end{definition}
From this relation, one deduces that $C(p,e_H)=e_G$ and $C(p,h)\inv=C(ph,h\inv)$. As a smooth map w.r.t. both its arguments, $C$ can be differentiated on $\manP$ and $H$ respectively:
\begin{align*}
	\begin{array}{c}
		\dr_\manP C_{|(p,h)}:\\
		 \ 
	\end	{array}{}{}&
	\left|
	\begin{array}{l}
		\Tg_p\manP\rightarrow \Tg_{C(p,h)}G\\
		X_p^\manP\mapsto \dert{C(s(t),h)}\hspace{2.5cm} s:(-1,1)\rightarrow\manP,\,s(0)=p,\,\dot{s}(0)=X_p^\manP
	\end{array}
	\right.\\
	\begin{array}{c}
		\dr_H C_{|(p,h)}:\\
		 \ 
	\end	{array}{}{}&
	\left|
	\begin{array}{l}
		\Tg_hH\rightarrow \Tg_{C(p,h)}G\\
		X_h^H\mapsto \dert{C(p,\tilde{s}(t))}\hspace{2.5cm} \tilde{s}:(-1,1)\rightarrow H,\,\tilde{s}(0)=h,\,\dot{\tilde{s}}(0)=X_h^H
	\end{array}
	\right.\\
	\begin{array}{c}
		\dr C_{|(p,h)}:\\
		 \ 
	\end	{array}{}{}&
	\left|
	\begin{array}{l}
		\Tg_p\manP\oplus \Tg_hH\rightarrow \Tg_{C(p,h)}G\\
		X_p^\manP\oplus X_h^H\mapsto \dr_\manP C_{|(p,h)}(X_p^\manP)+\dr_HC_{|(p,h)}(X_h^H)
	\end{array}
	\right.
\end{align*}
Note that $C(p,h)\inv\dr_\manP C_{|(p,h)}(X_p^\manP),\,C(p,h)\inv\dr_H C_{|(p,h)}(X_h^H),\,C(p,h)\inv\dr C_{|(p,h)}(X_p^\manP\oplus X_h^H)\in \Tg_eG$ are seen as elements in the Lie algebra $\kg$. The following result will be useful in several calculations.
\begin{lemma}\label{lemma d_H C}
The differential of $C$ with respect to its second argument satisfies
\begin{align*}
\left(\dr_H C_{(p,h)}\inv \right) C(p,h)
&= -C(p,h)\inv \left( \dr_H C_{(p,h)} \right),
\end{align*}
where, in the left hand side, $\dr_HC_{(p,h)}\inv$ is the differential of $(p,h)\mapsto C(p,h)\inv$ with respect to its second variable. In particular,
\begin{align*}
\dr_H C_{(p,e)}\inv
&= -\dr_H C_{(p,e)}.
\end{align*}
\end{lemma}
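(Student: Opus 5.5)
The plan is to differentiate the identity $C(p,h)\inv C(p,h) = e_G$ with respect to the second variable, keeping $p$ fixed. Fix $p\in\manP$ and $h\in H$, take a vector $X_h^H\in\Tg_hH$, and choose a curve $\tilde{s}:(-1,1)\to H$ with $\tilde{s}(0)=h$ and $\dot{\tilde{s}}(0)=X_h^H$. The map $t\mapsto C(p,\tilde{s}(t))\inv C(p,\tilde{s}(t))$ is then constant, equal to $e_G$.

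Differentiate it at $t=0$ with the Leibniz rule for the multiplication $G\times G\to G$. In a matrix group this is immediate. In a general Lie group I would use $\dr m_{(a,b)}(X_a,Y_b) = (R_b)_*X_a + (L_a)_*Y_b$. This gives
\begin{align*}
0 = \left(\dr_H C_{(p,h)}\inv(X_h^H)\right) C(p,h) + C(p,h)\inv \left(\dr_H C_{(p,h)}(X_h^H)\right),
\end{align*}
where juxtaposition means the tangent maps of right and left translation, following the paper's conventions. Both terms lie in $\Tg_eG\simeq\kg$, so rearranging gives the first identity. The only care needed is to read the products as translations of tangent vectors; I expect no genuine obstacle.

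For the special case, set $h=e_H$. The cocycle relation gives $C(p,e_H)=e_G$, as already deduced from \eqref{cocycle relation}, so both translations by $C(p,e)$ reduce to the identity. The general identity then becomes $\dr_H C_{(p,e)}\inv = -\dr_H C_{(p,e)}$.
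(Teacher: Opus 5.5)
Your proposal is correct and follows the paper's own argument: differentiate $C(p,\tilde{h}(t))\inv C(p,\tilde{h}(t))=e$ at $t=0$ along a curve in $H$ through $h$, then specialize to $h=e$ using $C(p,e)=e$. Spelling out the Leibniz rule via right and left translations is a fine way to make the paper's shorthand products of tangent vectors precise.
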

\begin{proof}
This result is obtained by deriving $C(p,\tilde{h}(t))\inv C(p,\tilde{h}(t))=e$ at $t=0$, where $\tilde{h}:(-1,1)\to H$ is a smooth curve such that $\tilde{h}(0)=h$.
\end{proof}

\begin{definition}
The twisted associated vector bundle $E_\manP^C=\manP\times_{\rho_G\circ C}V=\manP\times_{C(H)}V=\manP\times_C V \defeq (\manP\times V)/H$ is the set of equivalence classes on $\manP\times V$ under the relation
\begin{align*}
(p,v)\sim(ph,\rho_G(C(p,h)\inv)\cdotact v)
&\rdefeq (ph,C(p,h)\inv\cdotact v).
\end{align*}
\end{definition}
The cocycle relation (\ref{cocycle relation}) ensures that $\sim$ is an equivalence relation. The same procedure enables defining any twisted associated bundle, where the associated space is any space with an action of $G$. Any section $\phi\in\Gamma(\manP\times_{C(H)}V)$ is in 1:1 correspondence with a $C(H)$-equivariant map $\varphi\in\Omega_\Ceq^0(\manP,V)$, where the $C(H)$-equivariance relation is $R_h^*\varphi(p)=C(p,h)\inv\cdotact\varphi(p)$. Note that twisted fields, together with the twisted forms and structures defined in the following, are a generalization of the standard fields defined in Sec.~\ref{subsec Standard Gauge Fields}. Indeed, in the particular case of $G=H$ and $C(p,h)=h,\ \forall (p,h)$, one recovers standard $H$-fields, see Sec.~\ref{subsec Identity Map}. In the same way, one can generalize the notion of tensorial forms to $C$-tensorial ($V$-valued) forms. These are horizontal forms satisfying a $C(H)$-equivariance relation. The space of $C$-tensorial forms will be denoted $\Omega_\Ctens^\bullet(\manP,V)$.

The relevant process to generalize a horizontal structure to the twisted context is not to prescribe a horizontal supplementary subspace $\Hor_p\manP$ to $\Ver_p\manP$. Indeed, sections of twisted associated bundles carry an action of $G$, whose horizontal directions may not correspond to the ones defined by $H$. Instead, the connection 1-form is promoted to a twisted form.
\begin{definition}
The twisted connection 1-form $\omega\in\Omega^1(\manP,\kg)$ is the $\kg$-valued 1-form satisfying the following vertical normalization and $H$-equivariance:
\begin{itemize}
\item $\omega_p(\ka_p^\ver)=\dr_H C_{|(p,e)}(\ka),\,\forall \ka\in\kh$,
\item $R_h^*\omega_{ph}=\Ad_{C(p,h)\inv }\omega_p+C(p,h)\inv \dr_\manP C_{|(p,h)},\,\forall h\in H$.
\end{itemize}
The set of twisted connections on $\manP$ is denoted $\calA^C(\manP)$ and it is an affine space modeled on $\Omega_\Ctens^1(\manP,\kg)$.
\end{definition}
Note that $\dr_H C_{|(p,e)}(\ka)$ is an instance of $C(p,h)\inv\dr_H C_{|(p,h)}(\ka)$ at $h=e$, which highlights that this is an element of $\kg$. Let $\rho_*$ be the representation of $\kg$ induced on $V$ by $\rho \defeq \rho_G$. The exterior covariant differential associated to $\omega$ is $D \defeq d+\rho_*(\omega)$ on equivariant maps and tensorial forms. This operator acts on $V$-valued forms on $\manP$ and preserves the horizontality and the $C(H)$-equivariance. In particular, the covariant differential of a $C(H)$-tensorial form is a $C(H)$-tensorial form. Since sections of $\manP\times_{C(H)}V$ are equivalent to $V$-valued $C(H)$-equivariant 0-forms on $\manP$, $D$ provides a notion of covariant derivative of sections. Additionally, $R_h^*D\phi=\rho(C(p,h)\inv )\cdotact D\phi$. The twisted connection also induces a twisted curvature $\Omega\in\Omega_\Ctens^2(\manP,\kg)$ defined by $\Omega \defeq \dr\omega+\frac{1}{2}[\omega,\omega]$. Its equivariance condition is $(R_h^*\Omega)_p =\Omega_{p h} \circ T_p R_h = \Ad_{C(p,h)\inv }\Omega_p$. Moreover, $\Omega$ satisfies the Bianchi identity $\dr\Omega+[\omega,\Omega]=0$, meaning that $D\Omega=0$. Additionally, for any $V$-valued tensorial form $\alpha$, one has $D^2\alpha=\rho_*(\Omega)\cdotact\alpha$.

The gauge group $\calH^\manP\simeq\Autv(\manP)$ acts on twisted fields. Their explicit gauge transformations read, with $C_\gamma:\manP\rightarrow G,\,p\mapsto C(p,\gamma(p))$,
\begin{align*}
\omega^\gamma_p{} 
&\defeq (\Phi^*\omega)_p
= C_\gamma(p)\inv \omega_p C_\gamma(p) + C_\gamma(p)\inv \dr C_{\gamma\, p},
\\
\Omega^\gamma_p{} 
&\defeq (\Phi^*\Omega)_p
= C_\gamma(p)\inv \Omega_pC_\gamma(p),
\\
\varphi^\gamma(p){} 
&\defeq (\Phi^*\varphi)(p)
= \rho(C_\gamma(p)\inv )\cdotact\varphi(p),
\\
(D\varphi)^\gamma(p){} 
&\defeq (\Phi^*D\varphi)(p)
= \rho(C_\gamma(p)\inv )\cdotact D\varphi(p).
\end{align*}
Performing two gauge transformations is equivalent as performing a single gauge transformation of their product: $(\omega^\gamma)^{\gamma'}=\omega^{\gamma\gamma'}$. Note that the cocycle relation (\ref{cocycle relation}) imposes that a gauge transformed $C(H)$-equivariant field satisfies
\begin{align*}
\varphi^\gamma(ph)
&= \rho(C(p\gamma(p),h)\inv)\cdotact\varphi^\gamma(ph) 
\rdefeq \rho(C^\gamma(p,h)\inv)\cdotact\varphi^\gamma(ph),
\end{align*}
where $C^\gamma:(p,h)\mapsto C(p\gamma(p),h)$ is a cocycle (called “gauge transformed cocycle”)\footnote{Mind the difference between a gauge transformed cocycle $C^\gamma$ and a cocycle evaluated at $(p,\gamma(p))$, written $C_\gamma$.}. This means that $\varphi^\gamma$ is a $C^\gamma(H)$-equivariant field. This change of equivariance appears to be of importance in the following, see Sec.~\ref{subsec Isom Changing Cocycle}.

Consider $\sigma:U\to\manP_U$ a (local) smooth section of $\manP$ and consider $\varphi^\manP\in\calF_\Ceq(\manP,V)$, $\omega^\manP\in\calA^C(\manP)$ and $\alpha^\manP\in\Omega_\Ctens^\bullet(\manP,V)$. These fields locally give rise to  $\phi=\sigma^*\varphi^\manP\in\calF(U,V)$, $\phyA \defeq \sigma^*\omega^\manP\in\Omega^1(U,\kg)$ and $\phya \defeq \sigma^*\alpha^\manP\in\Omega^\bullet(U,V)$. Similarly, one can define the local cocycle $C_\sigma=\sigma^*C:U\times H\rightarrow G,\,(x,h)\mapsto C(\sigma(x),h)$. This map is of great use when dealing with several trivializing sections. Let $\sigma$ and $\sigma'$ be two sections of $\manP$, respectively on $U$ and $U'$, two intersecting open subsets, and let $g$ be their gluing function: $\sigma'=\sigma g$. It is relevant to define $C_{\sigma,g}:U\cap U'\rightarrow G,\ x\mapsto C_\sigma(x,g(x))=C(\sigma(x),g(x))$. This map appears in the gluing relations of local twisted fields. Indeed, with the notations $\phi=\sigma^*\varphi^\manP,\ \phi'=\sigma^{\prime*}\varphi^\manP$, one get that for any $x\in U\cap U'$,
\begin{align}
\label{eq loc phi 2}
\phi'(x)
&= \rho(C_{\sigma,g}(x)\inv) \cdotact \phi(x).
\end{align}
With similar notations,
\begin{equation}
\label{eq loc A a 2}
\begin{aligned}
\phyA'_x &= C_{\sigma,g}(x)\inv \phyA_x C_{\sigma,g}(x)+C_{\sigma,g}(x)\inv\dr(C_{\sigma,g})_x\\
\phya'_x &= \rho(C_{\sigma,g}(x)\inv)\cdotact \phya_x.
\end{aligned}
\end{equation}
Note that $\dr(C_{\sigma,g})$ is a differential with respect to both occurrences of $x$.

The cocycle relation satisfied by $C$ induces an expression for $C_{\sigma'}$ given $C_\sigma$:
\begin{align*}
C_{\sigma'}(x,h) 
&= C(\sigma'(x),h)
= C(\sigma(x)g(x),h)
= C(\sigma(x),g(x))\inv C(\sigma(x),g(x)h)
\\
&= C_\sigma(x,g(x))\inv C_\sigma(x,g(x)h)
= C_{\sigma,g}(x)\inv C_\sigma(x,g(x)h).
\end{align*}
In particular on $U\cap U'\cap U''\neq 0$,
\begin{align}\label{eq gluing relation C}
C_{\sigma',g'}=C_{\sigma,g}\inv C_{\sigma,gg'},
\end{align}
where $g'$ is a gluing function from $\sigma'$ to another section $\sigma'' = \sigma'g'$ defined on $U''$. This expression ensures the consistency of the gluing properties (\ref{eq loc phi 2}) and (\ref{eq loc A a 2}) at the intersection of three or more open sets.

These twisted passive gauge transformations correspond to twisted active gauge transformations. Given a local trivialising section $\sigma$ of $\manP$, the action of $\calH^\manP$ on twisted fields induce an action of $\calH_U$ on their local counterparts. This action involves the map $C_{\sigma,\locgamma}:x\mapsto C(\sigma(x),\locgamma(x))$. The local active gauge transformations of the previous local fields are
\begin{align*}
\phi^\locgamma(x)
&= \rho(C_{\sigma,\locgamma}(x)\inv)\cdotact\phi(x),
\\
\phyA^\locgamma_x
&= C_{\sigma,\locgamma}(x)\inv \phyA_x C_{\sigma,\locgamma}(x) + C_{\sigma,\locgamma}(x)\inv\dr(C_{\sigma,\locgamma})_x,
\\
\phya^\locgamma_x
&= \rho(C_{\sigma,\locgamma}(x)\inv)\cdotact \phya_x.
\end{align*}
These transformations are formally identical to passive gauge transformations. The maps $C_\sigma$ and $C_{\sigma,\locgamma}$ undergo gauge transformations induced by their global counterpart. Let $\locgamma'=\sigma^*\gamma'$. Then
\begin{align*}
(C_\sigma)^{\locgamma'} \defeq (C^{\gamma'})_\sigma : (x,h) 
&\mapsto C(\sigma(x)\gamma'(\sigma(x)),h) = C(\sigma(x)\locgamma'(x),h),
\\
(C_{\sigma,\locgamma})^{\locgamma'} \defeq (C^{\gamma'})_{\sigma,\locgamma} : x 
&\mapsto C(\sigma(x)\gamma'(\sigma(x)),\locgamma(x)) = C(\sigma(x)\locgamma'(x),\locgamma(x)).
\end{align*}
From this transformation, the cocycle relation (\ref{cocycle relation}) on $C$ implies $C_{\sigma,\locgamma\locgamma'}=C_{\sigma,\locgamma}(C_{\sigma,\locgamma'})^\locgamma$, which is equivalent to (\ref{eq gluing relation C}).

%%%%%%%%%%%%%%%%
\section{Isomorphisms with a Standard Structure}
\label{sec Isomorphism}
%%%%%%%%%%%%%%%%

%%%%%%%%%%%%%%%%
\subsection{Correspondence Space and Induced Bundle}
\label{subsec Definition manS manQ}
%%%%%%%%%%%%%%%%

The notations used in this section are defined in Sec.~\ref{subsec Twisted Gauge Fields}. In particular, $(\manP(M,H),G,C)$ is the minimal setting to define twisted gauge fields on $\manP$.

\subsubsection{The Correspondence Space $\manS$}

Let us define $\manS \defeq \manP\times G$. This space is endowed with a right action of $H\times G$ which reads\footnote{Note that a simpler action of $H\times G$ on $\manS$ can be defined as $\widehat{\calR}_{(h,g')}^\manS(p,g) \defeq (ph,gg')$. In the following, this simpler action proves to be useful to consider non-twisted (or trivially twisted) gauge fields such that $C(p,h)=e$.} 
\begin{align*}
\calR_{(h,g')}^\manS s=\calR_{(h,g')}^\manS(p,g) \defeq (ph,C(p,h)\inv gg').
\end{align*}

\begin{lemma}\label{lemma def S ppal bdl}
$\calR^\manS$ is a right action. It induces a projection onto $M$, seen as the space of $\calR^\manS$-equivalence classes: $\pi_\manS(p,g)=\pi_\manP(p)$. $\manS$ is a principal $H\times G$ bundle over $M$ for the right action $\calR^\manS$ with the same trivialization sections as $\manP$.
\end{lemma}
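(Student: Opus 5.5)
We plan to verify the three assertions of Lemma~\ref{lemma def S ppal bdl} in turn: that $\calR^\manS$ is a right action, that its orbit space is $M$ with projection $\pi_\manS(p,g)=\pi_\manP(p)$, and that local trivializations of $\manP$ induce local trivializations of $\manS$.

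\textbf{Right action.} We first check $\calR^\manS_{(e,e)}=\id$, which follows from $C(p,e_H)=e_G$. Next we check the composition law. Applying $(h,g')$ and then $(h'',g''')$ gives
\begin{align*}
\calR^\manS_{(h'',g''')}\calR^\manS_{(h,g')}(p,g)
&= (phh'',\,C(ph,h'')\inv C(p,h)\inv g g' g''').
\end{align*}
The cocycle relation \eqref{cocycle relation} rewrites $C(ph,h'')\inv C(p,h)\inv$ as $(C(p,h)C(ph,h''))\inv=C(p,hh'')\inv$. The result is therefore $\calR^\manS_{(hh'',g'g''')}(p,g)$, which is the right-action property for the product group $H\times G$. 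Smoothness is inherited from that of $C$ and of the actions involved.

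\textbf{Freeness and orbits.} Suppose $(ph,C(p,h)\inv gg')=(p,g)$. Freeness of $H$ on $\manP$ forces $h=e$, and then $C(p,e)=e$ gives $g'=e$. For transitivity on fibres, take $(p,g)$ and $(p',g_1)$ with $\pi_\manP(p)=\pi_\manP(p')$. Write $p'=ph$, and set $g'=g\inv C(p,h)g_1$; this choice sends $(p,g)$ to $(p',g_1)$. Hence the orbits are exactly $\pi_\manP\inv(x)\times G$, and $\pi_\manS=\pi_\manP\circ\mathrm{pr}_1$ is a smooth surjective submersion whose fibres are these orbits. This identifies $M$ with $\manS/(H\times G)$.

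\textbf{Local triviality.} This is the step needing the most care. Given a section $\sigma:U\to\manP_U$, we would define
\begin{align*}
\Psi:U\times H\times G\to\manS_U,\qquad (x,h,g')\mapsto \calR^\manS_{(h,g')}(\sigma(x),e)=(\sigma(x)h,\,C(\sigma(x),h)\inv g').
\end{align*}
By construction $\Psi$ is equivariant for right multiplication on $H\times G$. It is smooth, and its inverse is
\begin{align*}
(p,g)\mapsto (x,h,C(\sigma(x),h)g),
\end{align*}
where $x=\pi_\manP(p)$ and $h$ is determined by $p=\sigma(x)h$. Smoothness of this inverse uses the smoothness of $p\mapsto h$ coming from the trivialization of $\manP$, together with the smoothness of $C$. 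Hence $\Psi$ is a diffeomorphism, and the section $x\mapsto(\sigma(x),e)$ trivializes $\manS$ over the same $U$. This is the sense in which $\manS$ has ``the same trivialization sections as $\manP$''.
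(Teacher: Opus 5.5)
Your proof is correct. For the action property, freeness and transitivity it coincides with the paper's argument; your $g'=g\inv C(p,h)g_1$ is exactly the paper's $g_{1,2}$.

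The one real difference concerns local triviality. The paper's proof does not establish it on the spot. It defers to Sec.~\ref{subsec Local Sect S,P,Q}, where local sections of $\manS$ are analysed through dressing fields and parametrized as $i_{f_u}\circ\sigma_\manP$. You instead write down the explicit map $\Psi(x,h,g')=\calR^\manS_{(h,g')}(\sigma(x),e)$ and its smooth inverse. This proves directly that $i_e\circ\sigma$ is a trivializing section of $\manS$ over the same $U$. It makes the phrase ``same trivialization sections as $\manP$'' precise and keeps the lemma self-contained, without leaning on later material that already works with local sections of $\manS$.

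The paper's route gives something yours does not: a description of \emph{all} local sections of $\manS$, namely $i_{f_u}\circ\sigma_\manP$ with $f_u$ a twisted dressing field. The lemma does not require that.
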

\begin{proof}
The action property of $\calR^\manS$ is obtained by straightforward computations. It relies on the cocycle property (\ref{cocycle relation}) satisfied by $C$ and the commutation of left and right multiplication on $G$. Let us prove that $\calR^\manS$ is free and transitive. Let $(p,g)\in\manS$ and $(h,g')\in H\times G$ such that $\calR_{(h,g')}^\manS(p,g)=(p,g)$. This implies
\begin{align*}
\left\{ 
\begin{aligned}
p &= ph,\\
g &= C(p,h)\inv gg'.
\end{aligned} 
\right.
\end{align*}
Since the right action of $H$ on $\manP$ is free, the first equation implies $h=e$. Thus, $C(p,h)=C(p,e)=e$ and the second equation becomes $gg'=g$. Multiplying by $g\inv$ from the left, one gets $g'=e$. Then $\calR^\manS$ is free.

Consider now $(p_1,g_1)$ and $(p_2,g_2)$ two elements of the same fiber in $\manS$. In other words, $\pi_\manP(p_1) \rdefeq \pi_\manS((p_1,g_1))=\pi_\manS((p_2,g_2)) \defeq \pi_\manP(p_2)$. Since $\calR^\manP$ is transitive, there exists $h_{1,2}\in H$ such that $p_2=p_1h_{1,2}$. Let $g'\in G$ arbitrary
\begin{align*}
\calR_{(h_{1,2},g')}^\manS(p_1,g_1)
&= (p_1h_{1,2},C(p,h_{1,2})\inv g_1g')
= (p_2,C(p,h_{1,2})\inv g_1g')
\end{align*}
Let us define $g_{1,2} \defeq \bigl(C(p,h_{1,2})\inv g_1\bigr)\inv g_2$. Then
\begin{align*}
\calR_{(h_{1,2},g_{1,2})}^\manS (p_1,g_1)
&= (p_2,g_2).
\end{align*}
Thus $\calR^\manS$ is transitive, which concludes the proof.

The local trivializing sections of $\manS$ induced by those of $\manP$ are treated in detail in Sec.~\ref{subsec Local Sect S,P,Q}. This further section contains a parametrization of any local trivializing section $\sigma_\manS:U\to\manS_U$ as $\sigma_\manS(x)=(\sigma_\manP(x),f(\sigma_\manP(x))$ with $\sigma_\manP$ a section of $\manP$ and $f:\manP\to G$.
\end{proof}

\subsubsection{The Quotient Bundle $\manQ$}

Let us define $\manQ \defeq \manS/H=\manP\times_{C(H)}G$.
\begin{lemma}
$\manQ$ is a principal $G$-bundle over $M$.
\end{lemma}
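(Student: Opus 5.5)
The plan is to present $\manQ$ as the quotient of the principal $(H\times G)$-bundle $\manS$ by its closed normal subgroup $H\times\{e_G\}$; this quotient is again a principal bundle, with group $(H\times G)/H\simeq G$. Concretely, $\manQ$ is the set of classes $[p,g]$ for the relation $(p,g)\sim(ph,C(p,h)\inv g)$, which is exactly $\calR^\manS_{(h,e)}$. This is an equivalence relation because, by Lemma~\ref{lemma def S ppal bdl}, $\calR^\manS$ is an action.

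\textbf{Step 1 (the action).} I define the right action of $G$ on $\manQ$ by $[p,g]\cdotact g' \defeq [p,gg']$. It is well defined because left multiplication by $C(p,h)\inv$ commutes with right multiplication by $g'$:
\begin{align*}
[ph,C(p,h)\inv g]\cdotact g' = [ph,C(p,h)\inv gg'] = [p,gg'].
\end{align*}
In terms of $\manS$, this is just the restriction of $\calR^\manS$ to $\{e_H\}\times G$, which commutes with the $H$-part; this is why it descends to $\manQ$. The projection $\pi_\manQ([p,g])\defeq\pi_\manP(p)$ is well defined and $G$-invariant.

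\textbf{Step 2 (free and transitive on fibers).} For freeness, suppose $[p,gg']=[p,g]$. Then there is $h$ with $ph=p$ and $C(p,h)\inv g=gg'$. Freeness on $\manP$ gives $h=e$, so $C(p,e)=e$ and hence $g'=e$. For transitivity, two points of the same fiber can be written $[p,g_1]$ and $[p,g_2]$ with a common $p$, after using $\sim$ to move the representative. Then $[p,g_2]=[p,g_1]\cdotact(g_1\inv g_2)$. This mirrors the proof of Lemma~\ref{lemma def S ppal bdl}.

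\textbf{Step 3 (local triviality and smoothness).} I expect this to be the only genuinely nontrivial step. Given a local section $\sigma:U\to\manP_U$, I define
\begin{align*}
\psi_\sigma:U\times G\to\manQ_U,\qquad (x,g)\mapsto[\sigma(x),g].
\end{align*}
It is a bijection, $G$-equivariant, and commutes with the projections. Its inverse sends $[\sigma(x)h,g]=[\sigma(x),C(\sigma(x),h)g]$ to $(x,C_\sigma(x,h)g)$, which is smooth because $C$ is smooth.

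Transition maps come from $\sigma'=\sigma g_{\sigma\sigma'}$: the change of chart is $(x,g)\mapsto(x,C_{\sigma,g_{\sigma\sigma'}}(x)g)$, which is a smooth left multiplication. The cocycle condition for these transitions is \eqref{eq gluing relation C}. These charts therefore define a smooth structure on $\manQ$ making it a locally trivial principal $G$-bundle; equivalently, they identify $\manQ$ with the bundle glued from the $G$-valued transition functions $C_{\sigma,g}$.

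Alternatively, one could invoke the general fact that the quotient of a principal bundle by a closed normal subgroup of its structure group is a principal bundle. I would nonetheless write the explicit charts, since they are needed later in Sec.~\ref{subsec Local Sect S,P,Q}.
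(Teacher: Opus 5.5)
Your proof is correct and takes the same route as the paper. The right multiplication on the $G$-factor descends to $\manQ=\manS/H$ because it commutes with the $C$-twisted $H$-action, and $\pi_\manQ([p,g])=\pi_\manP(p)$ is the projection. The paper stops there and only remarks afterwards that $\manQ$ inherits its trivializing sections from $\manP$. Your explicit checks of freeness, fiberwise transitivity, and the charts $(x,g)\mapsto[\sigma(x),g]$ with transition functions $C_{\sigma,g}$ fill in what the paper leaves implicit.
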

\begin{proof}
$G$ acts\footnote{It should be clear from the context whether $\calR$ refers to the action of $H$ on $\manP$, the action of $G$ on $\manQ$ or the action of $H\times G$ on $\manS$. When the ambiguity persists, they will be labeled $\calR^\manP,\ \calR^\manQ,\ \calR^\manS$.} on $\manQ$ as $\calR_{g'}^\manQ q=qg'=[p,g]g' \defeq [p,gg']=[ph,C(p,h)\inv gg']$. This induces a projection onto equivalence classes w.r.t. $\calR^\manQ$ which reads $\pi_\manQ:\manQ\rightarrow M,\ q=[p,g]\mapsto \pi_\manP(p)$. $\pi_\manQ$ is well-defined since $(p,g)\mapsto(ph,C(p,h)\inv g)$ and $(p,g)\mapsto(p,gg')$ commute.
\end{proof}
$\manQ$ inherits its trivializing sections from those of $\manP$. It is equivalently seen as a twisted associated bundle to $\manP$ or as a quotient bundle of $\manS$ with corresponding projection $\pi_H:\manS\rightarrow\manQ$. Note that $\manQ$ depends on the cocycle. $\manS$ doesn't depend on $C$ as a manifold, but it depends on $C$ as a $H$-space. In the following, $C$ is fixed unless otherwise specified. The following diagram highlights the double structure of $\manS$. Each map $\pi_\bullet$ is a principal bundle map.
\begin{equation}
\label{diag:SPQ}
\begin{tikzcd}
 &\manS \arrow[ld, "/G"',"\pi_G"] \arrow[rd, "/C(H)","\pi_H"']\arrow[dd,"\pi_\manS"] & \\
\manP  \arrow[rd, "/C(H)"',"\pi_\manP"] & &\manQ  \arrow[ld, "/G","\pi_\manQ"']\\
 & M & 
\end{tikzcd}
\end{equation}

Since $\manS=\manP\times G$, there exist inclusion maps $i_g:\manP\rightarrow\manS,\ p\mapsto(p,g)$ and $i_p:G\rightarrow\manS,\ g\mapsto(p,g)$. In the following, their tangent linear maps are denoted $\Tg_pi_g(X_p^\manP) \rdefeq X_p^\manP\oplus 0_g$ and $\Tg_gi_p(X_g^G) \rdefeq 0_p\oplus X_g^G$. Any $X_{(p,g)}^\manS\in\Tg_{(p,g)}\manS=\Tg_p\manP\oplus\Tg_gG$ reads $X_{(p,g)}^\manS=X_p^\manP\oplus X_g^G=(X_p^\manP\oplus 0_g)+(0_p\oplus X_g^G)$. Among these vectors, any $X_s^\manS=X_p^\manP\oplus X_g^G\in\Ver_s\manS$ corresponds to an element $(\ka,\kb)\in \text{Lie}(H\times G)=\kh\oplus\kg$. Explicitly,
\begin{align*}
X_s^\manS
= (\ka,\kb)_{(p,g)}^\ver
&= \dert{(p\exp(t\ka),C(p,\exp(t\ka))\inv g\exp(t\kb))}
\\
&= \dert{\calR_{(\exp(t\ka),\exp(t\kb))}^{\calS*}(p,g)}
\\
&= \ka_p^\ver\oplus\left( \Tg_eR_g\circ(\dr_HC\inv )_{(p,e)}(\ka) + \kb_g \right)
\end{align*}
where $\Tg_eR_g$ is the tangent linear map of the right multiplication on $G$ at $e$.

It is worth stating some useful properties of the bundles introduced in this part.

\begin{lemma}\label{lemma Technical Lemma}
The maps $i_e:\manP\to\manS$ and $\pi_H:\manS\to\manQ$ together with the right actions $\calR^\manP$, $\calR^\manS$ and $\calR^\manQ$ satisfy the following properties.
\begin{enumerate}
\item[(i)] The right action on $\manS$ induces the tangent linear map $\Tg_{(p,g)}\calR_{(h,g')}^\manS$ which satisfies
\begin{align*}
\Tg_{(p,g)}\calR_{(h,e)}^\manS(X_p^\manP\oplus X_g^G)
&= \Tg_p\calR_h^\manP(X_p^\manP)\oplus\left[ \Tg_gL_{C(p,h)\inv}(X_g^G)+\Tg_{C(p,h)\inv}R_g\circ\dr_\manP C\inv_{(p,h)}(X_p^\manP) \right],
\\
\Tg_{(p,g)}\calR_{(e,g')}^\manS(X_p^\manP\oplus X_g^G)
&= X_p^\manP\oplus \Tg_gR_{g'}(X_g^G),
\end{align*}
where $\Tg_gL_{g'}$ (resp. $\Tg_gR_{g'}$) is the tangent linear map of $L_{g'}g=g'g$ (resp. $R_{g'}g=gg'$) the left (resp. right) multiplication in $G$ and $(\dr_\manP C^{-1})_{(p,h)}: \Tg_p\manP\rightarrow \Tg_{C(p,h)^{-1}}G$ is the differential of $(p,h)\mapsto C(p,h)^{-1}$ along $p$ at $(p,h)$.

\item[(ii)] The commutation relation of the tangent linear maps of $i_e:\manP\rightarrow\manS,\ p\mapsto(p,e)$ and $\calR^\manP$ is
\begin{equation}
\Tg_{ph}i_e\circ\Tg_p\calR_h^\manP(X_p^\manP)=\Tg_{(p,e)}\calR_{(h,C(p,h))}^\manS\circ\Tg_pi_e(X_p^\manP)+\left[ C(p,h)\inv\dr_\manP C_{(p,h)}(X_p^\manP) \right]_{(ph,e)}^\ver.
\end{equation}

\item[(iii)] The tangent map to $\pi_H$ has kernel
\begin{equation}
\ker \Tg_{(p,g)}\pi_H=\left\{\ka_p^\ver\oplus \Tg_eR_g\circ\dr_HC\inv_{(p,e)}(\ka),\ \ka\in\kh\right\}.
\end{equation}

\item[(iv)] The vertical vectors on $\manQ$ correspond to the vectors tangent to $G$ in $\manS$, i.e.
\begin{equation}
\Tg_{(p,g)}\pi_H\left( 0_p\oplus \Tg_gG \right)=\Ver_{[p,g]}\manQ
\end{equation}
and the isomorphism is given by $T_{(p,g)}\pi_H(0_p\oplus \kb_g)=\kb_{[p,g]}^\ver,\ \forall \kb\in\kg$.\footnote{Note that in the left hand side, $\kb\in \kg$ is interpreted as a left-invariant vector field valued at $g$, while in the right hand side, $\kb\in \kg\simeq\Tg_eG$ is interpreted as a tangent vector at $e\in G$.}

\item[(v)] The commutation relation of the tangent linear maps of $\pi_H$ and $\calR^\manS$ is
\begin{equation}
\Tg_{(ph,C(p,h)\inv gg')}\pi_H\circ\Tg_{(p,g)}\calR_{(h,g')}^\manS=\Tg_{[p,g]}\calR_{g'}^\manQ\circ\Tg_{(p,g)}\pi_H,
\end{equation}
\textit{i.e.} $\Tg\pi_H\circ\Tg\calR_{(h,e)}^\manS=\Tg\pi_H$ and $\Tg\pi_H\circ\Tg\calR_{(e,g')}^\manS=\Tg\calR_{g'}^\manQ\circ\Tg\pi_H$.

\item[(vi)] The commutation relation of the tangent maps of $\pi_H\circ i_e$ and $\calR^\manS$ is
\begin{multline}
\Tg_{(ph,e)}\pi_H\circ\Tg_{ph}i_e\circ\Tg_p\calR_h^\manP(X_p^\manP)
=
\Tg_{[p,e]}\calR_{C(p,h)}^\manQ\circ\Tg_{(p,e)}\pi_H\circ\Tg_pi_e(X_p^\manP)\vspace{0.1cm}
\\
+ \left[ C(p,h)\inv\dr_\manP C_{(p,h)}(X_p^\manP) \right]_{[ph,e]}^\ver,
\end{multline}
or in short, $\Tg(\pi_H\circ i_e)\circ\Tg_p\calR_h^\manP=\Tg\calR_{C(p,h)}^\manQ\circ\Tg_p(\pi_H\circ i_e)+\left[ C(p,h)\inv\dr_\manP C_{(p,h)}(\bullet) \right]_{[ph,e]}^\ver$.
\end{enumerate}

\end{lemma}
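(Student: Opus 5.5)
All six items are obtained by differentiating explicit formulas along curves. The two tools are the product rule in $G$ and the cocycle relation (\ref{cocycle relation}), together with Lemma~\ref{lemma d_H C} wherever an inverse is differentiated.

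For (i), take a curve $s(t)$ in $\manP$ with $s(0)=p$, $\dot s(0)=X_p^\manP$, and a curve $\tilde g(t)$ in $G$ with $\tilde g(0)=g$, $\dot{\tilde g}(0)=X_g^G$. Then differentiate
\[
\calR_{(h,e)}^\manS(s(t),\tilde g(t))=\bigl(s(t)h,\ C(s(t),h)\inv \tilde g(t)\bigr).
\]
The first component gives $\Tg_p\calR_h^\manP(X_p^\manP)$. On the second component the product rule gives two terms: $\Tg_gL_{C(p,h)\inv}(X_g^G)$, and $\Tg_{C(p,h)\inv}R_g\circ \dr_\manP C\inv_{(p,h)}(X_p^\manP)$. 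The second formula in (i) is immediate, since $g'$ enters only through right multiplication.

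For (ii), apply (i) with $g=e$ and $X_g^G=0$ to the image of $(h,C(p,h))$. This requires composing $\calR_{(h,e)}$ with $\calR_{(e,C(p,h))}$. The $G$-component of $\Tg\calR^\manS_{(h,C(p,h))}(X_p^\manP\oplus 0)$ is $\Tg R_{C(p,h)}\circ \dr_\manP C\inv_{(p,h)}(X_p^\manP)$. Since $C(p,h)\inv C(p,h)=e$ along the curve, this equals $-C(p,h)\inv\dr_\manP C_{(p,h)}(X_p^\manP)$, read as an element of $\kg=\Tg_eG$ (the $\manP$-analogue of Lemma~\ref{lemma d_H C}). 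The left-hand side is $\Tg_p\calR_h^\manP X_p^\manP\oplus 0_e$. The difference is therefore $0\oplus \kb$ with $\kb=C(p,h)\inv\dr_\manP C_{(p,h)}(X_p^\manP)$. By the vertical-vector formula preceding the lemma, $(0,\kb)^\ver_{(ph,e)}=0\oplus\kb_e$, which is exactly the correction term written in (ii).

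For (iii) and (iv), note that $\pi_H$ is the quotient by the free $H$-action $\calR^\manS_{(h,e)}$. Hence $\ker\Tg\pi_H$ is the tangent space to the $H$-orbits, that is, the vertical vectors $(\ka,0)^\ver_{(p,g)}$. Their formula is already computed before the lemma, which gives (iii). For (iv), $\pi_H(p,g\exp(t\kb))=[p,g]\exp(t\kb)$, so $\Tg\pi_H(0\oplus\kb_g)=\kb^\ver_{[p,g]}$. This map is injective: $0\oplus\Tg_gG$ meets the kernel of (iii) only in $0$, because a nonzero element of the kernel has nonzero $\manP$-component $\ka_p^\ver$ by freeness. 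Comparing dimensions then shows it is onto $\Ver\manQ$. For (v), use $\pi_H\circ\calR^\manS_{(h,g')}=\calR^\manQ_{g'}\circ\pi_H$, which holds by the definition of $\manQ$, and apply the chain rule.

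For (vi), apply $\Tg\pi_H$ to (ii) and use (v) to rewrite
\[
\Tg\pi_H\circ\Tg\calR^\manS_{(h,C(p,h))}=\Tg\calR^\manQ_{C(p,h)}\circ\Tg\pi_H.
\]
Then apply (iv) to the vertical term: since $(0,\kb)^\ver_{(ph,e)}=0\oplus\kb_e$, it maps to $\kb^\ver_{[ph,e]}$.

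The main obstacle is (ii). It requires careful bookkeeping of the identification of $\Tg_eG$ with $\kg$, and of the sign coming from differentiating $C\inv$ in $p$; everything else is formal.
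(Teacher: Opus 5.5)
Your proposal is correct. For (i), (iv), (v) and (vi) it follows the appendix: curve differentiation, injectivity of $\Tg\pi_H$ on $0_p\oplus\Tg_gG$ plus a dimension count, and (vi) assembled from (ii), (iv), (v). It departs from the paper in two places.

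In (ii), the paper does not expand both sides via (i). It writes $(\phi_X^\manP(t)h,e)=\calR^\manS_{(h,e)}\bigl(\phi_X^\manP(t),C(\phi_X^\manP(t),h)\bigr)$ and splits the inner curve into two pieces:
\begin{itemize}
\item $(\phi_X^\manP(t),C(p,h))=\calR^\manS_{(e,C(p,h))}(\phi_X^\manP(t),e)$, which gives $\Tg\calR^\manS_{(h,C(p,h))}\circ\Tg_pi_e(X_p^\manP)$;
\item $(p,C(\phi_X^\manP(t),h))$, whose image $(ph,C(p,h)\inv C(\phi_X^\manP(t),h))$ differentiates directly to $0\oplus C(p,h)\inv\dr_\manP C_{(p,h)}(X_p^\manP)$.
\end{itemize}
No inverse of $C$ is ever differentiated, so the sign issue you flag as the main obstacle never arises. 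Your route instead makes (ii) a corollary of (i) and the differentiated identity $C\inv C=e$. That costs some bookkeeping, which you handle correctly.

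In (iii), the paper argues by double inclusion. Its reverse inclusion infers from the vanishing of the derivative of $\pi_H(\phi_X^\manS(t))$ at $t=0$ that the curve stays in one $H$-orbit near $t=0$, and that does not follow. You instead use the fact that $\pi_H$ is a principal $H$-bundle projection. Then $\ker\Tg_{(p,g)}\pi_H$ is the $(\dim H)$-dimensional tangent space to the orbit, parametrized injectively by $\kh$ by freeness. This is the sound version of that step, and it also gives the trivial intersection with $0_p\oplus\Tg_gG$ that you need for (iv).
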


The proof of this technical Lemma is given in Appendix~\ref{sec proof technical lemma}.

The representation $\rho_G$ of $G$ on $V$ induces a representation $\rho_{H\times G}$ such that $\rho_{H\times G}((h,g))\cdotact v=\rho_G(g)\cdotact v$. The induced representation $\rho_{H\times G}$ is trivial along $H$. Let us define the following two associated vector bundles $\manQ\times_GV$ and $\manS\times_{H\times G} V$ respectively through the equivalence relations $(q,v)\sim(qg,\rho_G(g\inv)\cdotact v)$ on $\manQ\times V$ and $(s,v)\sim(\calR_{(h,g)}s,\rho_G(g\inv)\cdotact v)$ on $\manS\times V$.

The notion of tensorial forms extends naturally to $\manS$. A form $\alpha\in\Omega^\bullet(\manS,V)$ is called equivariant if
\begin{align*}
\calR_{(h,g)}\alpha
&= \rho_{H\times G}((h,g)\inv)\cdotact\alpha
= \rho_G(g\inv)\cdotact\alpha 
\rdefeq g\inv\cdotact\alpha,\ \forall (h,g)\in H\times G.
\end{align*}
$\alpha$ is horizontal if it vanishes when applied on at least one vector in $\Ver_s\manS$. The resulting space is denoted $\Omega_\tens^\bullet(\manS,V)$, where “$\tens$” encapsulates all together the $G$-equivariance, $H$-invariance and horizontality in this case.

Following the previous diagram \eqref{diag:SPQ}, one can interpret $\manS$ as either a (trivial) $G$-principal bundle over $\manP$ or a $H$-principal bundle over $\manQ$. Consequently, any map or form defined on $\manP$ (resp. $\manQ$) corresponds to a $G$-basic (resp. $H$-basic) map or form on $\manS$ through $\pi_G^*$ (resp. $\pi_H^*$). In particular, the horizontal structures and gauge structures of both $\manP$ and $\manQ$ have their counterparts on $\manS$. Examples of these corresponding maps and forms are detailed in Sec.~\ref{subsec Isom Bdl Geom} and \ref{sec Connections GT on S}.

%%%%%%%%%%%%%%%%
\subsection{Isomorphism of Bundle Geometry}
\label{subsec Isom Bdl Geom}
%%%%%%%%%%%%%%%%

This section carries one of the main results of this paper, which relates the tensorial and $C$-tensorial forms on the previously defined bundles.

Let $(V,\rho_G)$ be a representation of $G$. As before, $\rho_G$ is omitted in the notation.

\begin{proposition}\label{prop isom associated bdls}
The twisted and standard associated bundles $\manP\times_{C(H)}V$ and $\manQ\times_G V$ are both isomorphic to $\manS\times_{H\times G}V$.
\end{proposition}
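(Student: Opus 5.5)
I will construct explicit maps at the level of representatives and check that they descend to the quotients and are mutually inverse, fiberwise linear, and cover the identity on $M$. Since $\manQ=\manS/H$ via $\pi_H$, and the representation $\rho_{H\times G}$ is trivial along $H$, the isomorphism $\manS\times_{H\times G}V\simeq\manQ\times_G V$ should follow from a quotient-in-stages argument. The isomorphism $\manS\times_{H\times G}V\simeq\manP\times_{C(H)}V$ should follow from the fact that every $H\times G$-orbit in $\manS$ meets the slice $i_e(\manP)=\manP\times\{e\}$. The two isomorphisms can then be composed to obtain $\manP\times_{C(H)}V\simeq\manQ\times_G V$ directly.

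\emph{Step 1: $\manP\times_{C(H)}V\to\manS\times_{H\times G}V$.} I define $\Psi([p,v])\defeq[(p,e),v]$. For well-definedness, I need $[(ph,e),C(p,h)\inv\cdotact v]=[(p,e),v]$. Applying $\calR^\manS_{(h,C(p,h))}$ to $(p,e)$ gives $(ph,C(p,h)\inv C(p,h))=(ph,e)$. The relation on $\manS\times V$ therefore identifies $((p,e),v)$ with $((ph,e),\rho_G(C(p,h)\inv)\cdotact v)$, which is exactly what is needed. The inverse is $\Psi\inv([(p,g),v])\defeq[p,g\cdotact v]$. To check that it is well defined, I replace $(p,g)$ by $(ph,C(p,h)\inv gg')$ and $v$ by $g'^{-1}\cdotact v$. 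The image becomes $[ph,C(p,h)\inv g\cdotact v]$, which equals $[p,g\cdotact v]$ by the twisted relation. The compositions are the identity in one direction trivially. In the other direction, $[(p,e),g\cdotact v]=[(p,g),v]$ using $\calR^\manS_{(e,g)}$. Both maps are linear on fibers and commute with the projections, since $\pi_\manS(p,g)=\pi_\manP(p)$.

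\emph{Step 2: $\manS\times_{H\times G}V\to\manQ\times_G V$.} I define $\Theta([s,v])\defeq[\pi_H(s),v]$. This is well defined because $\pi_H(\calR^\manS_{(h,g')}s)=\pi_H(s)g'$ (Lemma~\ref{lemma Technical Lemma}(v) at the level of points, or directly from the definition of $\calR^\manQ$). Hence $((s\cdotact(h,g')),g'^{-1}\cdotact v)\mapsto(\pi_H(s)g',g'^{-1}\cdotact v)\sim(\pi_H(s),v)$. The inverse sends $[[p,g],v]$ to $[(p,g),v]$. This is independent of the representative of $[p,g]\in\manQ$ because changing representatives within the same $H$-orbit is absorbed by $\calR^\manS_{(h,e)}$, which acts trivially on $V$. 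It is also compatible with the $G$-relation on $\manQ\times V$ via $\calR^\manS_{(e,g')}$.

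The main obstacle is not computational but bookkeeping. I must make sure the $H$-part and $G$-part of the action are disentangled correctly: in particular, the choice of the element $(h,C(p,h))$ in Step 1, which uses the cocycle to land back on the slice $g=e$. Smoothness also needs to be checked. The forward maps are smooth because they are induced by smooth maps on $\manP\times V$ and $\manS\times V$. Smoothness of the inverses follows since they are also induced by smooth maps $(p,g,v)\mapsto(p,g\cdotact v)$, and since the quotients are smooth vector bundles, using the local trivializations inherited from $\manP$ (Lemma~\ref{lemma def S ppal bdl}).
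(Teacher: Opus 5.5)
Your proof is correct and takes essentially the same approach as the paper. The paper quotients $\manS\times V$ by the two commuting parts of $\calR^\manS$ in either order: the $G$-quotient is exactly your slice identification $((p,g),v)\sim((p,e),g\cdotact v)$, and the $H$-quotient is $\pi_H$. Your explicit maps $\Psi$ and $\Theta$ just spell out the well-definedness and inverse checks that the paper leaves implicit in ``since the two actions commute''.
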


\begin{proof}
$\calR^\manS$ induces an equivalence relation on $\manP\times G\times V=\manS\times V$:
\begin{align*}
	(0) &:  (p,g_0,v)\sim(ph,C(p,h)\inv g_0g,g^{-1} \cdotact v),\,\forall (h,g)\in H\times G.
\end{align*}
The corresponding space of equivalence classes is the vector bundle $\manS\times_{H\times G}V$ associated to $\manS$. $\calR^\manS$ is made of two commuting right actions of $H$ and $G$ respectively, which give rise to the two following equivalence relations on $\manP\times G\times V$
\begin{align*}
	(1) &: (p,g_0,v)\sim(p,g_0g,g^{-1} \cdotact v)\sim(p,e,g_0 \cdotact v),\,\forall g \in G,\\
	(2) &: (p,g_0,v)\sim(ph,C(p,h)\inv g_0,v),\,\forall h \in H.
\end{align*}
One can quotient along these equivalence relations and define new equivalence relations respectively on the quotient spaces $\manP\times V$ and $\manQ\times V$:
\begin{align*}
	(3) &: (p,g_0\inv \cdotact v)\sim(ph,C(p,h)\inv g_0\inv \cdotact v),\,\forall h \in H,\\
	(4) &: ([p,g_0],v)\sim([p,g_0]g,g^{-1} \cdotact v),\,\forall g \in G.
\end{align*}
The situation is summarized in the following diagram:
\begin{equation}
\label{diag:SPQxV}
\begin{tikzcd}
 &\manS\times V \arrow[ld, "/G"',"(1)"] \arrow[rd, "/C(H)","(2)"']\arrow[dd,"(0)"] & \\
\manP \times V \arrow[rd, "/C(H)"',"(3)"] & &\manQ \times V \arrow[ld, "/G","(4)"']\\
 &\begin{matrix}
 \manS\times_{H\times G}V\\ \manP\times_{C(H)}V\\ \manQ\times_GV
\end{matrix} & 
\end{tikzcd}
\end{equation}
Since the two actions commute, the same equivalence space is obtained by quotienting by $H$ and $G$ in any order. The left branch of diagram \eqref{diag:SPQxV} interprets this quotient space as $\manP\times_{C(H)}V$ while the right branch interprets it as $\manQ\times_G V$, thus proving the isomorphism between the three quotient spaces.
\end{proof}

This correspondence between twisted and standard associated bundles extends to vector-valued tensorial forms.

\begin{proposition}\label{prop isom tensorial forms}
The spaces $\Omega_\Ctens^\bullet(\manP,V)$ and $\Omega_\tens^\bullet(\manQ,V)$ are isomorphic.
\end{proposition}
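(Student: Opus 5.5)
I would pass through $\manS$, using both projections of diagram \eqref{diag:SPQ}. The plan is to prove that both $\Omega_\Ctens^\bullet(\manP,V)$ and $\Omega_\tens^\bullet(\manQ,V)$ are isomorphic to $\Omega_\tens^\bullet(\manS,V)$. Here "tensorial" means horizontal for $\Ver\manS$, $H$-invariant, and $G$-equivariant.

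\emph{The $\manQ$ side.} Since $\pi_H:\manS\to\manQ$ is a principal $H$-bundle, $\pi_H^*$ identifies forms on $\manQ$ with $H$-basic forms on $\manS$. These are the forms that are $\calR_{(h,e)}$-invariant and vanish on $\ker\Tg\pi_H$, whose explicit form is given by Lemma~\ref{lemma Technical Lemma}(iii).
\begin{itemize}
\item By Lemma~\ref{lemma Technical Lemma}(v), $G$-equivariance on $\manQ$ corresponds to $\calR_{(e,g)}$-equivariance on $\manS$.
\item By (iii) and (iv), $\Ver_s\manS$ is spanned by $\ker\Tg\pi_H$ together with the $0\oplus\kb_g$. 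The latter map onto $\Ver\manQ$. Hence $\pi_H^*\beta$ is $\manS$-horizontal exactly when $\beta$ is $\manQ$-horizontal.
\end{itemize}
So $\pi_H^*:\Omega_\tens^\bullet(\manQ,V)\to\Omega_\tens^\bullet(\manS,V)$ is an isomorphism.

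\emph{The $\manP$ side.} Here I would use the section $i_e:\manP\to\manS$ of the trivial $G$-bundle $\pi_G$, rather than $\pi_G^*$, because $G$ acts nontrivially on $V$.
\begin{itemize}
\item \textbf{From $\manS$ to $\manP$.} Given $\alpha\in\Omega_\tens^\bullet(\manS,V)$, set $\varphi\defeq i_e^*\alpha$. Horizontality of $\varphi$ holds because $\Tg i_e(\ka^\ver_p)$ equals $(\ka,\dr_HC_{(p,e)}(\ka))^\ver_{(p,e)}$ up to sign conventions, which follows from the formula for vertical vectors and Lemma~\ref{lemma d_H C}. $C$-equivariance comes from Lemma~\ref{lemma Technical Lemma}(ii): $\Tg i_e\circ\Tg\calR_h^\manP$ equals $\Tg\calR^\manS_{(h,C(p,h))}\circ\Tg i_e$ plus a vertical vector. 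The vertical term drops out by horizontality, and equivariance of $\alpha$ then yields $\calR_h^*\varphi=C(p,h)\inv\cdotact\varphi$.
\item \textbf{From $\manP$ to $\manS$.} Given $\varphi$, define
\[
\alpha_{(p,g)}\defeq g\inv\cdotact(\pi_G^*\varphi)_{(p,g)}\circ\mathrm{pr},
\]
where $\mathrm{pr}$ is the projection of $\Tg_{(p,g)}\manS$ onto $\Tg_p\manP$ along the horizontal complement of $\Ver\manS$. Equivalently, $\alpha$ is the unique form that is $G$-equivariant, vanishes on $\Ver\manS$, and restricts to $\varphi$ along $i_e$. Uniqueness holds because $\Tg_{(p,g)}\manS=\Tg i_g(\Tg_p\manP)\oplus(0\oplus\Tg_gG)$ and $0\oplus\Tg_gG\subset\Ver\manS$. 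One then checks $H$-invariance using (i) and the $C$-equivariance of $\varphi$.
\end{itemize}
The two constructions are mutually inverse by this uniqueness.

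\emph{Main obstacle.} The hard part is the explicit extension on the $\manP$ side and the verification of its $H$-invariance. The $\dr_\manP C$ term in Lemma~\ref{lemma Technical Lemma}(i) must be absorbed by horizontality. The cleanest route is to write a vector as $X^\manP\oplus X^G=(X^\manP\oplus 0)+(0\oplus X^G)$, and to set $\alpha(X_1,\dots)\defeq g\inv\cdotact\varphi(X_1^\manP,\dots)$. This is well defined because $\varphi$ is horizontal and the $G$-component is vertical.

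Composing the two isomorphisms gives $(\pi_H^*)\inv\circ(\text{extension})$, from $\Omega_\Ctens^\bullet(\manP,V)$ to $\Omega_\tens^\bullet(\manQ,V)$, with inverse $\beta\mapsto(\pi_H\circ i_e)^*\beta$. For $V$-valued $0$-forms this recovers Proposition~\ref{prop isom associated bdls}. Equivalently, one could bypass $\manS$ and check directly with Lemma~\ref{lemma Technical Lemma}(vi) that $(\pi_H\circ i_e)^*$ maps tensorial forms to $C$-tensorial forms and is bijective.
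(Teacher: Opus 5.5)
Your proof is correct and is essentially the paper's. Your extension $\alpha(X^\manP\oplus X^G,\ldots)=g\inv\cdotact\varphi(X^\manP,\ldots)$ is exactly $\chi_{\manS\manP}$, your $(\pi_H^*)\inv$ is $\chi_{\manQ\manS}$, and your inverse $(\pi_H\circ i_e)^*$ is $\chi_{\manP\manQ}$; the paper differs only in checking that the three cyclic composites are identities, rather than exhibiting two isomorphisms with $\Omega_\tens^\bullet(\manS,V)$. The one blemish is the projection $\mathrm{pr}$ ``along the horizontal complement of $\Ver\manS$'': it is not well defined without a connection on $\manS$, and it is unnecessary, since $\pi_G^*\varphi$ already ignores the $0\oplus\Tg_gG$ component, so simply $\alpha=g\inv\cdotact\pi_G^*\varphi$.
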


The proof of this isomorphism relies on the construction of three maps $\chi_{\manS\manP}$, $\chi_{\manQ\manS}$ and $\chi_{\manP\manQ}$ which will now be described. In the following, $\{X_p^\manP\}$ (resp. $\{X_s^\manS\}=\{X_p^\manP\oplus X_g^G\}$, $\{X_q^\manQ\}$) stands for an ordered collection of vector fields in $\Gamma(\Tg\manP)$ (resp. $\Gamma(\Tg\manS)$, $\Gamma(\Tg\manQ)$) evaluated at $p$ (resp. $s=(p,g)$, $q=[p,g]$).

\begin{lemma}\label{lemma chi_SP}
Let us define $\chi_{\manS\manP}$ as
\begin{align*}
\forall \alpha^\manP\in\Omega_\Ctens^\bullet(\manP,V),
\ \chi_{\manS\manP}(\alpha^\manP)_{(p,g)}(\{X_p^\manP\oplus X_g^G\}) 
\defeq g\inv\cdotact\alpha_p^\manP(\{X_p^\manP\}).
\end{align*}
Then $\chi_{\manS\manP}:\Omega_\Ctens^\bullet(\manP,V)\rightarrow\Omega_\tens^\bullet(\manS,V)$.
\end{lemma}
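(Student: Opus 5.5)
The plan is to check directly that $\chi_{\manS\manP}(\alpha^\manP)$, written $\alpha^\manS$, has the three properties bundled into ``$\tens$'' on $\manS$: it is horizontal, $G$-equivariant and $H$-invariant. Before that, note that $\alpha^\manS = g\inv\cdotact(\pi_G^*\alpha^\manP)$ pointwise, where $\pi_G(p,g)=p$ and $\Tg\pi_G(X_p^\manP\oplus X_g^G)=X_p^\manP$. Smoothness and multilinearity are therefore clear, and the value of $\alpha^\manS$ does not depend on the $G$-components $X_g^G$.

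\textbf{Horizontality.} By the formula for vertical vectors on $\manS$, any vertical vector is
\[
(\ka,\kb)^\ver_{(p,g)}=\ka_p^\ver\oplus(\cdots).
\]
Its $\manP$-component $\ka_p^\ver$ is vertical in $\manP$. Since $\alpha^\manP$ is horizontal, $\alpha^\manP_p$ vanishes as soon as one argument is $\ka_p^\ver$, and so $\alpha^\manS$ vanishes on such vectors.

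\textbf{$G$-equivariance.} By Lemma~\ref{lemma Technical Lemma}(i), $\Tg\calR^\manS_{(e,g')}$ acts by $X_p^\manP\oplus X_g^G\mapsto X_p^\manP\oplus\Tg_gR_{g'}(X_g^G)$, which leaves the $\manP$-components unchanged. It follows that
\[
(\calR^{\manS*}_{(e,g')}\alpha^\manS)_{(p,g)}=(gg')\inv\cdotact\alpha^\manP_p=g'^{-1}\cdotact\alpha^\manS_{(p,g)},
\]
which is the required equivariance.

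\textbf{$H$-invariance.} This is the step that actually uses the $C$-tensoriality, and I expect it to be the only real obstacle. By Lemma~\ref{lemma Technical Lemma}(i), $\Tg\calR^\manS_{(h,e)}$ sends $X_p^\manP\oplus X_g^G$ to $\Tg_p\calR_h^\manP(X_p^\manP)\oplus(\text{some }G\text{-vector})$, and the base point goes to $(ph,C(p,h)\inv g)$. The messy $G$-component, which involves $\dr_\manP C\inv$, is discarded by $\alpha^\manS$, so only the $\manP$-part matters. One then computes
\begin{align*}
(\calR^{\manS*}_{(h,e)}\alpha^\manS)_{(p,g)}(\{X\})
&=(C(p,h)\inv g)\inv\cdotact\alpha^\manP_{ph}(\{\Tg\calR^\manP_h X_p^\manP\})\\
&=g\inv C(p,h)\cdotact C(p,h)\inv\cdotact\alpha^\manP_p(\{X_p^\manP\})\\
&=\alpha^\manS_{(p,g)}(\{X\}),
\end{align*}
using $\calR_h^{\manP*}\alpha^\manP=C(p,h)\inv\cdotact\alpha^\manP$.

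\textbf{Conclusion.} Since $\calR^\manS_{(h,g')}=\calR^\manS_{(e,g')}\circ\calR^\manS_{(h,e)}$, the two separate properties combine to give full equivariance under $\rho_{H\times G}$. Together with horizontality, this shows $\alpha^\manS\in\Omega^\bullet_\tens(\manS,V)$. Linearity of $\chi_{\manS\manP}$ in $\alpha^\manP$ is evident.
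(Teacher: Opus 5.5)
Your proof is correct and follows the paper's route. Horizontality holds because the $\manP$-component of a vertical vector on $\manS$ is vertical in $\manP$; $G$-equivariance comes from the factor $g\inv$; and $H$-invariance comes from the $C(H)$-equivariance of $\alpha^\manP$ cancelling the $C(p,h)$ picked up in the second component. You simply make explicit, via Lemma~\ref{lemma Technical Lemma}(i), the $H$-invariance computation that the paper only asserts (and loosely calls ``$H$-equivariance'').
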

\begin{proof}
One checks that $\chi_{\manS\manP}$ is well defined. From the algebraic definition together with the $H$-equivariance of $\alpha^\manP$, one gets that $\chi_{\manS\manP}(\alpha^\manP)$ is $H$-equivariant. Also, the action of $g\inv$ provides the $G$-equivariance of $\chi_{\manS\manP}(\alpha^\manP)$, which completes the $H\times G$-equivariance. Additionally, when evaluated on at least one vertical tangent vector $(\ka,\kb)_{(p,g)}^\ver$, with $(\ka,\kb)\in\kh\oplus\kg$, $\chi_{\manS\manP}(\alpha^\manP)$ vanishes:
\begin{align*}
\chi_{\manS\manP}(\alpha^\manP)_{(p,g)}((\ka,\kb)_{(p,g)}^\ver,\ldots)=
g\inv\cdotact\alpha_p^\manP(\ka_p^\ver,\ldots)=0
\end{align*}
because $\alpha^\manP$ is horizontal. Then $\chi_{\manS\manP}(\alpha^\manP)\in\Omega_\tens^\bullet(\manS,V)$.
\end{proof}

\begin{lemma}\label{lemma chi_QS}
The map $\chi_{\manQ\manS}$ is implicitly defined by
\begin{align*}
\forall \alpha^\manS\in\Omega_\tens^\bullet(\manS,V),
\ \chi_{\manQ\manS}(\alpha^\manS)_{[p,g]}\circ\Tg_{(p,g)}\pi_H 
\defeq \alpha^\manS_{(p,g)}.
\end{align*}
Then $\chi_{\manQ\manS}:\Omega_\tens^\bullet(\manS,V)\rightarrow\Omega_\tens^\bullet(\manQ,V)$.
\end{lemma}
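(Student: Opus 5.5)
The plan is the standard descent argument for basic forms along the principal $H$-bundle $\pi_H:\manS\to\manQ$. It has three parts: well-definedness, then tensoriality on $\manQ$, then smoothness.

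\emph{Well-definedness.} Since $\pi_H$ is a surjective submersion, every collection $\{X_q^\manQ\}$ at $q=[p,g]$ lifts to a collection $\{X_s^\manS\}$ at $s=(p,g)$. We set $\chi_{\manQ\manS}(\alpha^\manS)_q(\{X_q^\manQ\})\defeq\alpha^\manS_s(\{X_s^\manS\})$. Two things must be checked.
\begin{enumerate}
\item \emph{Independence of the lift at fixed $s$.} Two lifts differ by elements of $\ker\Tg_s\pi_H$. By Lemma~\ref{lemma Technical Lemma}(iii) these are exactly the vertical vectors $(\ka,0)^\ver_{(p,g)}$. Multilinearity and horizontality of $\alpha^\manS$ then give the same value.
\item \emph{Independence of the point in the $H$-orbit.} The other points over $q$ are $s'=\calR^\manS_{(h,e)}s$. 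By Lemma~\ref{lemma Technical Lemma}(v), $\Tg\pi_H\circ\Tg\calR^\manS_{(h,e)}=\Tg\pi_H$, so $\{\Tg_s\calR^\manS_{(h,e)}X_s^\manS\}$ is a lift at $s'$. Equivariance with trivial $H$-part, $\calR_{(h,e)}^*\alpha^\manS=\rho_{H\times G}((h,e)\inv)\cdotact\alpha^\manS=\alpha^\manS$, gives
\begin{align*}
\alpha^\manS_{s'}(\{\Tg_s\calR^\manS_{(h,e)}X_s^\manS\})=\alpha^\manS_s(\{X_s^\manS\}).
\end{align*}
Combined with step 1, the value does not depend on the point of the fiber or on the lift.
\end{enumerate}
So $\chi_{\manQ\manS}(\alpha^\manS)$ is a well-defined, pointwise multilinear, alternating $V$-valued form satisfying the defining relation $\chi_{\manQ\manS}(\alpha^\manS)\circ\Tg\pi_H=\alpha^\manS$.

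\emph{Tensoriality on $\manQ$.}
\begin{itemize}
\item \emph{Horizontality.} Lemma~\ref{lemma Technical Lemma}(iv) says every vertical vector $\kb^\ver_{[p,g]}$ is $\Tg_{(p,g)}\pi_H(0_p\oplus\kb_g)$. The lift $0_p\oplus\kb_g$ is the vertical vector $(0,\kb)^\ver_{(p,g)}$ of $\manS$, since $\dr_HC\inv$ contributes nothing when $\ka=0$. Horizontality of $\alpha^\manS$ then makes the induced form vanish.
\item \emph{$G$-equivariance.} Lemma~\ref{lemma Technical Lemma}(v) gives $\Tg\calR^\manQ_{g'}\circ\Tg\pi_H=\Tg\pi_H\circ\Tg\calR^\manS_{(e,g')}$. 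Hence
\begin{align*}
(\calR^{\manQ*}_{g'}\chi_{\manQ\manS}(\alpha^\manS))\circ\Tg\pi_H=\calR^{\manS*}_{(e,g')}\alpha^\manS=g'^{-1}\cdotact\alpha^\manS=\bigl(g'^{-1}\cdotact\chi_{\manQ\manS}(\alpha^\manS)\bigr)\circ\Tg\pi_H.
\end{align*}
Since $\Tg\pi_H$ is surjective, this yields $\calR^{\manQ*}_{g'}\chi_{\manQ\manS}(\alpha^\manS)=g'^{-1}\cdotact\chi_{\manQ\manS}(\alpha^\manS)$.
\end{itemize}

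\emph{Smoothness.} This is the step I expect to need the most care, because the form is defined pointwise. I would use local sections of $\pi_H$: $\manQ$ inherits its trivializations from $\manP$, so over $\pi_\manQ\inv(U)$ one can take $\tau:[p,g]\mapsto(\sigma(x),\tilde g)$, using a section $\sigma$ of $\manP$ and solving for $\tilde g$ through the $C(H)$-relation. Then $\chi_{\manQ\manS}(\alpha^\manS)=\tau^*\alpha^\manS$ locally, because $\Tg\pi_H\circ\Tg\tau=\id$ and the value is lift-independent. This is manifestly smooth.

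Linearity of $\chi_{\manQ\manS}$ in $\alpha^\manS$ is immediate from the definition, which completes the proof.
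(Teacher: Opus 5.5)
Your proposal is correct and follows the same descent argument as the paper. Well-definedness comes from the $H$-invariance of $\alpha^\manS$, together with $\ker\Tg_{(p,g)}\pi_H\subset\Ver_{(p,g)}\manS$ and horizontality; horizontality on $\manQ$ comes from lifting $\kb^\ver_{[p,g]}$ to a vertical vector of $\manS$. You are more thorough than the paper, which only asserts that $G$-equivariance is inherited and never addresses smoothness: your use of Lemma~\ref{lemma Technical Lemma}(v) with the surjectivity of $\Tg\pi_H$, and of a local section $\tau$ of $\pi_H$ built from a section of $\manP$, settles both points correctly.
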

\begin{proof}
Let us check that $\chi_{\manQ\manS}$ is well defined. $\chi_{\manQ\manS}(\alpha^\manS)$ does not depend on the choice of representative $(p,g)\in [p,g]$ because $\alpha^\manS$ is $H$-invariant. Also, $\ker\Tg_{(p,g)}\pi_H=\{\ka_p^\ver\oplus \Tg_eR_g\circ(\dr_HC\inv )_{(p,e)}(\ka),\ \ka\in\kh\}$ since these are the tangent vectors to curves which read $t\mapsto(ph(t),C(p,h(t))\inv g)$ with $h(0)=e$ and $\dot{h}(0)=\ka$. Since $\ker\Tg_{(p,g)}\pi_H\subset\Ver_{(p,g)}\manS$, the choice of a reciprocal map to $\Tg_{(p,g)}\pi_H$ is defined up to a vertical vector. Because $\alpha^\manS$ is horizontal, this does not affect the image of $\chi_{\manQ\manS}(\alpha^\manS)$. Hence $\chi_{\manQ\manS}$ is defined unambiguously.

The $G$-equivariance of $\chi_{\manQ\manS}(\alpha^\manS)$ is directly inherited from the one of $\alpha^\manS$. Let $\kb\in\kg$ and $\kb_q^\ver$ be its vertical tangent vector in $\Ver_q\manQ$ at $q=[p,g]$.
\begin{align*}
\chi_{\manQ\manS}(\alpha^\manS)_q(\kb_q^\ver,\ldots)
&= \alpha^\manS_{(p,g)}(\ka_p^\ver \oplus \left( \Tg_e R_g\circ(\dr_H C\inv )_{(p,e)}(\ka) + \kb_q^\ver \right),\ldots)=0,
\end{align*}
where $\ka\in\kh$ induces an arbitrary element in the preimage of $\kb_q^\ver$ through $\Tg_{(p,g)}\pi_H$. Since $\alpha^\manS$ is evaluated on a vertical vector, it vanishes. Then $\chi_{\manQ\manS}(\alpha^\manS)$ is vertical, thus $\chi_{\manQ\manS}(\alpha^\manS)\in\Omega_\tens^\bullet(\manQ,V)$.
\end{proof}

\begin{lemma}\label{lemma chi_PQ}
The map $\chi_{\manP\manQ}$ is defined by
\begin{align*}
\forall \alpha^\manQ\in\Omega_\tens^\bullet(\manQ,V),
\ \chi_{\manP\manQ}(\alpha^\manQ)_p(\{X_p^\manP\}) 
&\defeq \alpha_{[p,e]}^\manQ\circ\Tg_{(p,e)}\pi_H(\{X_p^\manP\oplus 0\}).
\end{align*}
Then $\chi_{\manP\manQ}:\Omega_\tens^\bullet(\manQ,V)\rightarrow\Omega_\Ctens^\bullet(\manP,V)$. Additionally, $\chi_{\manP\manQ}(\alpha^\manQ)=i_e^*\pi_H^*\alpha^\manQ=(\pi_H\circ i_e)^*\alpha^\manQ$.
\end{lemma}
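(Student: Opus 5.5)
The identity $\chi_{\manP\manQ}(\alpha^\manQ)=(\pi_H\circ i_e)^*\alpha^\manQ$ is immediate from the definition, since $\Tg_pi_e(X_p^\manP)=X_p^\manP\oplus 0_e$ and the chain rule gives $\Tg_p(\pi_H\circ i_e)=\Tg_{(p,e)}\pi_H\circ\Tg_pi_e$. So I would first record this, which shows that $\chi_{\manP\manQ}(\alpha^\manQ)$ is a well-defined smooth $V$-valued form on $\manP$, being a pullback by the smooth map $\pi_H\circ i_e:\manP\to\manQ$. It then remains to check horizontality and $C(H)$-equivariance.

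\textbf{Horizontality.} Take $\ka\in\kh$ and the vertical vector $\ka_p^\ver$. I compute $\Tg_{(p,e)}\pi_H(\ka_p^\ver\oplus 0_e)$. By Lemma~\ref{lemma Technical Lemma}(iii), the vector $\ka_p^\ver\oplus\dr_HC\inv_{(p,e)}(\ka)$ lies in $\ker\Tg_{(p,e)}\pi_H$ (with $\Tg_eR_e=\id$). Hence
\begin{align*}
\Tg_{(p,e)}\pi_H(\ka_p^\ver\oplus 0_e)=-\Tg_{(p,e)}\pi_H\bigl(0_p\oplus\dr_HC\inv_{(p,e)}(\ka)\bigr)=\Tg_{(p,e)}\pi_H\bigl(0_p\oplus\dr_HC_{(p,e)}(\ka)\bigr),
\end{align*}
where the last equality uses Lemma~\ref{lemma d_H C}. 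By Lemma~\ref{lemma Technical Lemma}(iv) this is the vertical vector $\bigl(\dr_HC_{(p,e)}(\ka)\bigr)^\ver_{[p,e]}\in\Ver_{[p,e]}\manQ$. Since $\alpha^\manQ$ is horizontal, it vanishes whenever one argument is $\ka_p^\ver$, so $\chi_{\manP\manQ}(\alpha^\manQ)$ is horizontal.

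\textbf{Equivariance.} This is the step needing the most care. I use Lemma~\ref{lemma Technical Lemma}(vi). For $h\in H$ and vectors $X_p^\manP$,
\begin{align*}
(\calR_h^{\manP*}\chi_{\manP\manQ}(\alpha^\manQ))_p(\{X_p^\manP\})=\alpha^\manQ_{[ph,e]}\bigl(\{\Tg(\pi_H\circ i_e)\circ\Tg_p\calR_h^\manP(X_p^\manP)\}\bigr).
\end{align*}
By (vi), each argument equals $\Tg\calR^\manQ_{C(p,h)}\circ\Tg_p(\pi_H\circ i_e)(X_p^\manP)$ plus a vertical vector at $[ph,e]$. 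Here $[ph,e]=[p,C(p,h)]=[p,e]C(p,h)$, which is consistent with (vi). Using multilinearity and horizontality of $\alpha^\manQ$, the vertical corrections drop out. This leaves $(\calR^{\manQ*}_{C(p,h)}\alpha^\manQ)_{[p,e]}$ evaluated on $\Tg_p(\pi_H\circ i_e)(X_p^\manP)$. By $G$-equivariance of $\alpha^\manQ$, this equals $C(p,h)\inv\cdotact\chi_{\manP\manQ}(\alpha^\manQ)_p(\{X_p^\manP\})$, which is exactly the $C(H)$-equivariance.

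The main obstacle is the bookkeeping in this equivariance step. One must correctly identify the base point $[ph,e]$ with $[p,e]C(p,h)$, and justify that the extra vertical term from the non-constancy of $C$ in $p$ is annihilated by horizontality. Both points are supplied by Lemma~\ref{lemma Technical Lemma}(vi) and the definition of $\manQ$. Together with horizontality, this places $\chi_{\manP\manQ}(\alpha^\manQ)$ in $\Omega_\Ctens^\bullet(\manP,V)$.
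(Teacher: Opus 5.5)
Your proposal is correct and follows essentially the same route as the paper: horizontality through the kernel of $\Tg_{(p,e)}\pi_H$ and the verticality of $\Tg_{(p,e)}\pi_H(0_p\oplus\cdot)$ (Lemma~\ref{lemma Technical Lemma}(iii)--(iv)), and $C(H)$-equivariance through Lemma~\ref{lemma Technical Lemma}(vi) combined with horizontality and $G$-equivariance of $\alpha^\manQ$. Your explicit appeal to multilinearity, which discards the vertical correction in every argument of a $k$-form, is slightly more careful than the paper's displayed two-term expansion.
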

\begin{proof}
Let us check that $\chi_{\manP\manQ}$ is well defined. First, compute the equivariance relation satisfied by $\chi_{\manP\manQ}(\alpha^\manQ)$.
\begin{align*}
(\calR_h^{\manP*}\chi_{\manP\manQ}(\alpha^\manQ))_p(\{X_p^\manP\})
&= (\calR_h^{\manP*}i_e^*\pi_H^*\alpha^\manQ)_p(\{X_p^\manP\})
\\
&= \alpha^\manQ_{[ph,e]}\circ\Tg_{(ph,e)}\pi_H\circ\Tg_{ph}i_e\circ\Tg_p\calR_h^\manP(\{X_p^\manP\})
\\
&\overset{\mathclap{(vi)}}{=} 
\begin{multlined}[t]
\alpha^\manQ_{[p,C(p,h)]}\circ\Tg_{[p,e]}\calR_{C(p,h)}^\manQ\circ\Tg_{(p,e)}\pi_H\circ\Tg_pi_e(\{X_p^\manP\})
\\
+\alpha_{[p,C(p,h)]}^\manQ\left(\left\{\left[ C(p,h)\inv\dr_\manP C_{(p,h)}(X_p^\manP) \right]_{[ph,e]}^\ver\right\}\right),
\end{multlined}
\end{align*}
where Lemma~\ref{lemma Technical Lemma}\emph{(vi)} was used. The second term vanishes because $\alpha^\manQ$ is horizontal. By equivariance, the first term leads to
\begin{align*}
(\calR_h^{\manQ*}\chi_{\manP\manQ}(\alpha^\manQ))_p(\{X_p^\manP\})
&= (\calR_{C(p,h)\inv}^{\manQ*}\alpha^\manQ)_{[p,e]}\circ\Tg_{(p,e)}\pi_H\circ\Tg_pi_e(\{X_p^\manP\})
\\
&= C(p,h)\inv\cdotact\alpha^\manQ_{[p,e]}\circ\Tg_{(p,e)}\pi_H\circ\Tg_pi_e(\{X_p^\manP\})
\\
&= C(p,h)\inv\cdotact (i_e^*\pi_H^*\alpha^\manQ)_p(\{X_p^\manP\})
\\
&= C(p,h)\inv\cdotact \chi_{\manP\manQ}(\alpha^\manQ)_p(\{X_p^\manP\}).
\end{align*}
Thus $\chi_{\manP\manQ}(\alpha^\manQ)$ is $C(H)$-equivariant.

Let $\ka\in\kh$ and $\ka_p^\ver$ be its vertical tangent vector in $\Ver_p\manP$ at $p$. Note that $\ka_p^\ver\oplus(\dr_GC\inv)_{(p,e)}(\ka)\in\ker\Tg_{(p,e)}\pi_H$, which implies that $\Tg_{(p,e)}\pi_H(\ka_p^\ver\oplus 0)=\Tg_{(p,e)}\pi_H(0\oplus -(\dr_GC\inv)_{(p,e)}(\ka))$. This enables computing
\begin{align*}
\chi_{\manP\manQ}(\alpha^\manQ)_p(\ka_p^\ver,\ldots)
&= \alpha^\manQ_{[p,e]}\circ \Tg_{(p,e)}\pi_H (\ka_p^\ver\oplus 0,\ldots)
\\
&= \alpha^\manQ_{[p,e]}\circ \Tg_{(p,e)}\pi_H (0\oplus -(\dr_GC\inv )_{(p,e)}(\ka),\ldots)
\\
&= \alpha^\manQ_{[p,e]}\left( \dert{[p, \exp(-t(\dr_GC\inv )_{(p,e)}(\ka))]},\ldots\right)
\\
&= \alpha^\manQ_{[p,e]}\left( \dert{[p,e] \exp(-t(\dr_GC\inv )_{(p,e)}(\ka))},\ldots\right)
\\
&= \alpha^\manQ_{[p,e]}\left( [-(\dr_GC\inv )_{(p,e)}(\ka)]_{[p,e]}^\ver,\ldots\right)=0.
\end{align*}
Then $\chi_{\manP\manQ}(\alpha^\manQ)$ inherits the horizontality of $\alpha^\manQ$. Thus, $\chi_{\manP\manQ}(\alpha^\manQ)\in\Omega_\Ctens^\bullet(\manP,V)$.
\end{proof}

\begin{proof}[Prop.~\ref{prop isom tensorial forms}]
This result relies on the following diagram of maps:
\begin{equation*}
\begin{tikzcd}
 &\Omega_\tens^\bullet(\manS,V) \arrow[rd, "\chi_{\manQ\manS}"] & \\
\Omega_{C\text{-tens}}^\bullet(\manP,V) \arrow[ru, "\chi_{\manS\manP}"] & &\Omega_\tens^\bullet(\manQ,V) \arrow[ll, "\chi_{\manP\manQ}"']
\end{tikzcd}
\end{equation*}
One has to check that these maps compose to the identity map on each space.

Let $\chi_{\manP\manP} \defeq \chi_{\manP\manQ}\circ\chi_{\manQ\manS}\circ\chi_{\manS\manP}:\Omega_\Ctens^\bullet(\manP,V)\rightarrow\Omega_\Ctens^\bullet(\manP,V)$, $\alpha^\manP\in\Omega_\Ctens^k(\manP,V)$ and $\{X^\manP\}\in\Gamma(\Tg\manP)^k$. One has
\begin{align*}
\chi_{\manP\manP}(\alpha^\manP)_p(\{X_p^\manP\})
&= \chi_{\manQ\manS}\circ\chi_{\manS\manP}(\alpha^\manP)_{[p,e]}\circ\Tg_{(p,e)}\pi_H({X_p^\manP\oplus 0})
\\
&= \chi_{\manS\manP}(\alpha^\manP)_{(p,e)}({X_p^\manP\oplus 0})
\\
&= e\inv\cdotact\alpha^\manP_p({X_p^\manP}).
\end{align*}
So $\chi_{\manP\manP}$ is the identity map.

Let $\chi_{\manQ\manQ} \defeq \chi_{\manQ\manS}\circ\chi_{\manS\manP}\circ\chi_{\manP\manQ}:\Omega_\tens^\bullet(\manQ,V)\rightarrow\Omega_\tens^\bullet(\manQ,V)$, $\alpha^\manQ\in\Omega_\tens^k(\manQ,V)$ and $\{X^\manQ\}\in\Gamma(\Tg\manQ)^k$. For each $X^\manQ$, let $X^\manP\oplus X^G$ be a preimage of $X^\manQ$ through $\Tg\pi_H$: $X_{[p,g]}^\manQ=\Tg_{(p,g)}\pi_H(X_p^\manP\oplus X_g^G)$. Hence,
\begin{align*}
\chi_{\manQ\manQ}(\alpha^\manQ)_{[p,g]}(\{X_{[p,g]}^\manQ\})
&= \chi_{\manQ\manS}\circ\chi_{\manS\manP}\circ\chi_{\manP\manQ}(\alpha^\manQ)_{[p,g]}\circ\Tg_{(p,g)}\pi_H(\{X_p^\manP\oplus X_g^G\})
\\
&= \chi_{\manS\manP}\circ\chi_{\manP\manQ}(\alpha^\manQ)_{(p,g)}(\{X_p^\manP\oplus X_g^G\})
\\
&= g\inv\cdotact\chi_{\manP\manQ}(\alpha^\manQ)_p(\{X_p^\manP\})
\\
&= g\inv\cdotact\alpha^\manQ_{[p,e]}\circ\Tg_{(p,e)}\pi_H(\{X_p^\manP\oplus 0_e\})
\\
&= \alpha^\manQ_{[p,g]}\circ\Tg_{[p,e]}\calR_g^\manQ\circ\Tg_{(p,e)}\pi_H(\{X_p^\manP\oplus 0_e\})
\\
&= \alpha^\manQ_{[p,g]}\circ\Tg_{(p,g)}\pi_H\circ\Tg_{(p,e)}\calR_{(e,g)}^\manS(\{X_p^\manP\oplus 0_e\})
\\
&= \alpha^\manQ_{[p,g]}\circ\Tg_{(p,g)}\pi_H(\{X_p^\manP\oplus 0_g\})
\\
&= \alpha^\manQ_{[p,g]}\circ\Tg_{(p,g)}\pi_H(\{X_p^\manP\oplus X_g^G\})- \textstyle\sum\alpha^\manQ_{[p,g]}\circ\Tg_{(p,g)}\pi_H(\ldots,0_p\oplus X_g^G,\ldots)
\end{align*}
The first term is $\alpha_{[p,g]}^\manQ(\{X_{[p,g]}^\manQ\}$. The remaining sum carries the $2^k-1$ terms obtained by developing $\alpha^\manQ_{[p,g]}\circ\Tg_{(p,g)}\pi_H(\{X_p^\manP\oplus X_g^G\})$ with respect to each argument, except the term evaluated at $X_p^\manP\oplus 0_g$ on each argument. According to Lemma~\ref{lemma Technical Lemma}\emph{(iv)}, $T_{(p,g)}\pi_H(0_p\oplus X_g^G)\in\Ver_{[p,g]}\manQ$. Since $\alpha^\manQ$ is horizontal, every term in the sum cancels. This proves $\chi_{\manQ\manQ}$ is the identity map.

Let $\chi_{\manS\manS} \defeq \chi_{\manS\manP}\circ\chi_{\manP\manQ}\circ\chi_{\manQ\manS}:\Omega_\tens^\bullet(\manS,V)\rightarrow\Omega_\tens^\bullet(\manS,V)$, $\alpha^\manS\in\Omega_\tens^k(\manS,V)$ and $\{X^\manS\}=\{X_p^\manP\oplus X_g^G\}\in\Gamma(\Tg\manS)^k$. Similarly as the previous case, since $\chi_{\manS\manS}(\alpha^\manS)$ is horizontal and $0_p\oplus X_g^G\in\Ver_{(p,g)}\manS$, one has $\chi_{\manS\manS}(\alpha^\manS)_{(p,g)}(...,0_p\oplus X_g^G,...)=0$. One is thus led to compute
\begin{align*}
\chi_{\manS\manS}(\alpha^\manS)_{(p,g)}(\{X_{(p,g)}^\manS\})
&= \chi_{\manS\manS}(\alpha^\manS)_{(p,g)}(\{X_p^\manP\oplus 0_g\})
\\
&= g\inv\cdotact\chi_{\manP\manQ}\circ\chi_{\manQ\manS}(\alpha^\manS)_p(\{X_p^\manP\})
\\
&= g\inv\cdotact\chi_{\manQ\manS}(\alpha^\manS)_{[p,e]}\circ\Tg_{(p,e)}\pi_H(\{X_p^\manP\oplus 0_e\})
\\
&= \chi_{\manQ\manS}(\alpha^\manS)_{[p,g]}\circ\Tg_{[p,e]}\calR_g^\manQ\circ\Tg_{(p,e)}\pi_H(\{X_p^\manP\oplus 0_e\})
\\
&= \chi_{\manQ\manS}(\alpha^\manS)_{[p,g]}\circ\Tg_{(p,g)}\pi_H(\{X_p^\manP\oplus 0_g\})
\\
&= \alpha^\manS_{(p,g)}(\{X_p^\manP\oplus 0_g\})=\alpha^\manS_{(p,g)}(\{X_{(p,g)}^\manS\}).
\end{align*}
So $\chi_{\manS\manS}$ is the identity map. This proves that the three spaces are isomorphic.
\end{proof}

A specific case of this result holds for equivariant functions, which are tensorial 0-forms.

\begin{corollary}\label{corollary isom eq functions}
The spaces $\calF_\Ceq(\manP,V)$, $\calF_\eqv(\manQ,V)$ and $\calF_\eqv(\manS,V)$ are isomorphic.
\end{corollary}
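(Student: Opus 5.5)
The corollary is the degree-zero case of Prop.~\ref{prop isom tensorial forms}, so the plan is to restrict the three maps $\chi_{\manS\manP}$, $\chi_{\manQ\manS}$ and $\chi_{\manP\manQ}$ to $0$-forms and check that the relevant spaces coincide. First I would observe that every $0$-form is horizontal, since there is no vector argument on which horizontality could fail. Hence $\Omega_\Ctens^0(\manP,V)=\calF_\Ceq(\manP,V)$ and $\Omega_\tens^0(\manQ,V)=\calF_\eqv(\manQ,V)$. On $\manS$, $\Omega_\tens^0(\manS,V)$ consists of the maps $f:\manS\to V$ with $f(\calR^\manS_{(h,g)}s)=g\inv\cdotact f(s)$. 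This is exactly $\rho_{H\times G}$-equivariance, which is what $\calF_\eqv(\manS,V)$ means here, $\rho_{H\times G}$ being trivial along $H$.

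Next I would write the maps explicitly in degree $0$, where no tangent maps intervene. They become
\begin{align*}
\chi_{\manS\manP}(\varphi)(p,g) &= g\inv\cdotact\varphi(p),
&
\chi_{\manQ\manS}(f)([p,g]) &= f(p,g),
&
\chi_{\manP\manQ}(\psi)(p) &= \psi([p,e]).
\end{align*}
Lemmas~\ref{lemma chi_SP}, \ref{lemma chi_QS} and \ref{lemma chi_PQ} already ensure that each map lands in the right space. The computations of $\chi_{\manP\manP}$, $\chi_{\manQ\manQ}$ and $\chi_{\manS\manS}$ in the proof of Prop.~\ref{prop isom tensorial forms} then specialize with $k=0$. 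In that case the sum over $2^k-1$ vertical terms is empty, so the identities hold directly.

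For robustness I would also verify the key facts by hand. $\chi_{\manQ\manS}(f)$ is well defined because $f(ph,C(p,h)\inv g)=f(p,g)$ by $H$-invariance. $\chi_{\manP\manQ}(\psi)$ is $C(H)$-equivariant, since $[ph,e]=[p,C(p,h)]=[p,e]C(p,h)$ gives $\psi([ph,e])=C(p,h)\inv\cdotact\psi([p,e])$. The only mild point, and the step I expect to need some care, is this use of the class identity $[ph,e]=[p,C(p,h)]$ from the definition of $\manQ$. It replaces Lemma~\ref{lemma Technical Lemma}\emph{(vi)} in degree zero. With it in hand, the three maps are mutually inverse linear bijections, which proves the corollary.
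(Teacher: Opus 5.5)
Your proposal is correct and follows the paper's own route: the corollary is stated there precisely as the degree-zero case of Prop.~\ref{prop isom tensorial forms}, and your explicit formulas for the three maps, together with the identity $[ph,e]=[p,e]C(p,h)$, are exactly what that specialization reduces to. The paper additionally sketches an alternative argument, combining the bundle isomorphisms of Prop.~\ref{prop isom associated bdls} with the correspondence between equivariant maps and sections of associated bundles.
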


\begin{remark}
Another proof of this result consists in using the correspondence between equivariant $V$-valued functions and sections of associated vector bundles.
\begin{align*}
\calF_\eqv(\manQ,V)
&\simeq \Gamma(\manQ\times_G V),
&
\varphi(qg) = g\inv\cdotact\varphi(q)
&\leftrightarrow \phi(x) = [q,\varphi(q)] = [qg,g\inv\cdotact\varphi(q)],\ q\in\pi_\manQ\inv(x)
\end{align*}
Since Prop.~\ref{prop isom associated bdls} highlights bundle isomorphisms between $\calP\times_{C(H)}V$, $\calQ\times_G V$ and $\calS\times_{C(H)\times G}V$, their spaces of sections are isomorphic, which concludes the proof.
\end{remark}

\begin{remark}\label{remark forms valued in associated bdls}
The commutative diagram used in the proof of Prop.~\ref{prop isom tensorial forms} can be added a fourth space to mimic the shape of the previous ones, see below. Recall that $\Omega_\tens^\bullet(\manQ,V)\simeq\Omega^\bullet(M,\manQ\times_HV)$ with the isomorphism $\alpha^\manQ\in\Omega_\tens^\bullet(\manQ,V)\leftrightarrow \phyA^\manQ\in\Omega^\bullet(M,\manQ\times_HV)$ given by $\phyA_x^\manQ(\{X_x\})=[q,\alpha_q^\manQ(\{X_q^\hor\})]$ which does not depend on the choice of $q\in\pi_\manQ\inv(x)$ (because $\alpha^\manQ$ is equivariant) or the choice of horizontal lift (because $\alpha^\manQ$ is horizontal). The same holds for $\Omega_\tens^\bullet(\manS,V)\simeq\Omega^\bullet(M,\manS\times_{H\times G}V)$. The isomorphism in Prop.~\ref{prop isom associated bdls} induces isomorphisms $\Omega^\bullet(M,\manQ\times_HV)\simeq\Omega^\bullet(M,\manS\times_{H\times G}V)\simeq\Omega^\bullet(M,\manP\times_{C(H)}V)$. Together with Prop.~\ref{prop isom tensorial forms}, this provides a correspondence between twisted tensorial forms $\alpha^\manP\in\Omega_\Ctens^\bullet(\manP,V)$ and forms on $M$ valued in a twisted associated bundle $\phyA^\manP\in\Omega^\bullet(M,\manP\times_{C(H)}V)$. The correspondence is also provided by $\phyA_x^\manP(\{X_x\})=[p,\alpha_p^\manP(\{X_p^\hor\})]$.
\begin{equation*}
\begin{tikzcd}
 &\Omega_\tens^\bullet(\manS,V) \arrow[rd, "\chi_{\manQ\manS}"] & \\
\Omega_{C\text{-tens}}^\bullet(\manP,V) \arrow[ru, "\chi_{\manS\manP}"]\arrow[rd, "\simeq"'] & &\Omega_\tens^\bullet(\manQ,V) \arrow[ll, "\chi_{\manP\manQ}"']\arrow[ld, "\simeq"']\\
 & \begin{matrix}
 \Omega^\bullet(M,\manS\times_{H\times G}V)\\ \simeq\Omega^\bullet(M,\manP\times_{C(H)}V)\\ \simeq\Omega^\bullet(M,\manQ\times_HV)
\end{matrix}  & 
\end{tikzcd}
\end{equation*}
This completes our study.
\end{remark}

%%%%%%%%%%%%%%%%
\subsection{\texorpdfstring{Twisted Connections on $\manP$ as Induced Structures}{Twisted Connections on P as Induced Structures}}
\label{subsec Isom Connection PQ}
%%%%%%%%%%%%%%%%

In this section, the correspondence between the geometry of the bundle $\manQ$ and its twisted version on top of $\manP$ is extended to a correspondence of their horizontal structures. This correspondence is better established at the level of connection 1-forms, as well as gauge transformations.

Let us extend the definition of the map $\chi_{\manP\manQ}$, introduced in Lemma~\ref{lemma chi_PQ}, to the space of $\kg$-valued 1-forms:
\begin{align*}
	\begin{array}{c}
		\tilde{\chi}_{\manP\manQ}:\\
		 \ 
	\end	{array}
	\left|
	\begin{array}{l}
		\Omega^1(\manQ,\kg)\rightarrow \Omega^1(\manP,\kg)\\
		\omega^\manQ\mapsto\tilde{\chi}_{\manP\manQ}(\omega^\manQ):(p,X_p^\manP)\mapsto \omega^\manQ_{[p,e]}\circ\Tg_{(p,e)}\pi_H(X_p^\manP\oplus 0)
	\end{array}
	\right.
\end{align*}
Note that $\tilde{\chi}_{\manP\manQ}(\omega^\manQ)=i_e ^*\pi_H^*(\omega^\manQ)$, which is similar to the definition of $\chi_{\manP\manQ}$ on tensorial forms.

\begin{proposition}\label{prop isom connections PQ}
Let $\omega^\manQ\in\calA(\manQ)$ be a (standard) connection 1-form on $\manQ$. Then $\tilde{\chi}_{\manP\manQ}(\omega^\manQ)$ is a twisted connection on $\manP$.

Reciprocally, any twisted connection on $\manP$ (with cocycle $C$) is induced by a standard connection on $\manQ$ through $\tilde{\chi}_{\manP\manQ}$.
\end{proposition}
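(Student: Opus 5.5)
The argument splits into two parts: a direct check that $\omega^\manP \defeq \tilde{\chi}_{\manP\manQ}(\omega^\manQ)=(\pi_H\circ i_e)^*\omega^\manQ$ satisfies the two axioms of a twisted connection, and then a surjectivity argument using the affine structure of both spaces of connections together with Prop.~\ref{prop isom tensorial forms}.

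For the \emph{vertical normalization}, take $\ka\in\kh$. As in the proof of Lemma~\ref{lemma chi_PQ}, Lemma~\ref{lemma Technical Lemma}\emph{(iii)} gives that $\ka_p^\ver\oplus(\dr_HC\inv)_{(p,e)}(\ka)$ lies in $\ker\Tg_{(p,e)}\pi_H$. Hence
\begin{align*}
\Tg_{(p,e)}\pi_H(\ka_p^\ver\oplus 0)=\Tg_{(p,e)}\pi_H\bigl(0\oplus -(\dr_HC\inv)_{(p,e)}(\ka)\bigr).
\end{align*}
By Lemma~\ref{lemma d_H C}, $-(\dr_HC\inv)_{(p,e)}(\ka)=\dr_HC_{(p,e)}(\ka)$. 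By Lemma~\ref{lemma Technical Lemma}\emph{(iv)}, the right-hand side is the fundamental vector $[\dr_HC_{(p,e)}(\ka)]^\ver_{[p,e]}$ on $\manQ$. Applying $\omega^\manQ$ and its vertical normalization then yields $\omega^\manP_p(\ka_p^\ver)=\dr_HC_{(p,e)}(\ka)$.

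For the \emph{equivariance}, I will repeat the computation of Lemma~\ref{lemma chi_PQ} using Lemma~\ref{lemma Technical Lemma}\emph{(vi)}:
\begin{align*}
(\calR_h^{\manP*}\omega^\manP)_p(X_p^\manP)
=\omega^\manQ_{[ph,e]}\Bigl(\Tg\calR^\manQ_{C(p,h)}\circ\Tg_p(\pi_H\circ i_e)(X_p^\manP)+\bigl[C(p,h)\inv\dr_\manP C_{(p,h)}(X_p^\manP)\bigr]^\ver_{[ph,e]}\Bigr).
\end{align*}
The key difference from the tensorial case is that the vertical term no longer vanishes. Vertical normalization of $\omega^\manQ$ turns it into $C(p,h)\inv\dr_\manP C_{(p,h)}(X_p^\manP)$, which is exactly the inhomogeneous term required. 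For the first term, the $\Ad$-equivariance $\calR_{g}^{\manQ*}\omega^\manQ=\Ad_{g\inv}\omega^\manQ$ with $g=C(p,h)$ gives $\Ad_{C(p,h)\inv}\omega^\manP_p(X_p^\manP)$. Here I must check that the base points match: $[p,C(p,h)]=[ph,e]$, which holds by the defining relation of $\manP\times_{C(H)}G$. So $\omega^\manP\in\calA^C(\manP)$.

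For the converse, I will first show that $\calA(\manQ)$ is nonempty; this holds because $\manQ$ is a principal bundle over a paracompact base. Fix $\omega_0^\manQ$ and set $\omega_0^\manP\defeq\tilde{\chi}_{\manP\manQ}(\omega_0^\manQ)$. Given any $\omega^\manP\in\calA^C(\manP)$, the difference $\alpha^\manP=\omega^\manP-\omega_0^\manP$ lies in $\Omega^1_\Ctens(\manP,\kg)$, with $\kg$ carrying the adjoint representation. By Prop.~\ref{prop isom tensorial forms}, there is $\alpha^\manQ\in\Omega^1_\tens(\manQ,\kg)$ with $\chi_{\manP\manQ}(\alpha^\manQ)=\alpha^\manP$. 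Then $\omega^\manQ\defeq\omega_0^\manQ+\alpha^\manQ$ is a connection on $\manQ$, and by linearity of $\tilde{\chi}_{\manP\manQ}$, which restricts to $\chi_{\manP\manQ}$ on tensorial forms, it maps to $\omega^\manP$. Injectivity is also immediate from Prop.~\ref{prop isom tensorial forms}, so the correspondence is in fact a bijection of affine spaces.

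The main obstacle is the bookkeeping in the equivariance step: identifying the correct base point on $\manQ$, and matching the sign of the vertical term via Lemma~\ref{lemma d_H C}. The rest follows from results already established.
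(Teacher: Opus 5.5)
Your proposal is correct and follows essentially the same route as the paper. You get the vertical normalization from Lemma~\ref{lemma Technical Lemma}\emph{(iii)}, \emph{(iv)} and Lemma~\ref{lemma d_H C}. You get equivariance from Lemma~\ref{lemma Technical Lemma}\emph{(vi)}, where the vertical term survives and produces $C(p,h)\inv\dr_\manP C_{(p,h)}$. You get surjectivity and uniqueness from the affine structure together with Prop.~\ref{prop isom tensorial forms}.

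Your remarks that $\calA(\manQ)$ is nonempty and that $[p,C(p,h)]=[ph,e]$ spell out two points the paper uses without stating them.
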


\begin{proof}
\textbf{$\tilde{\chi}_{\manP\manQ}(\omega^\manQ)$ is a twisted connection on $\manP$.}

The equivariance of $\tilde{\chi}_{\manP\manQ}(\omega^\manQ)$ is obtained with the equivariance and vertical normalization of $\omega^\manQ$ together with Lemma~\ref{lemma Technical Lemma}\emph{(vi)}:
\begin{align*}
\calR_h^{\manP*}\tilde{\chi}_{\manP\manQ}(\omega^\manQ)(X_p)
&= \calR_h^{\manP*}i_e^*\pi_H^*\omega^\manQ(X_p)
\\
&= \omega^\manQ_{[ph,e]}\circ\Tg_{ph}(\pi_H\circ i_e)\circ\Tg_p\calR_h^\manP(X_p^\manP)\\
&= 
\begin{multlined}[t]
\omega^\manQ_{[p h,e]}\circ\Tg_{[p,e]}\calR_{C(p,h)}^\manQ\circ\Tg_p(\pi_H\circ i_e)(X_p^\manP)
\\
+\omega^\manQ_{[p h,e]}\left( \left[ C(p,h)\inv\dr_\manP C_{(p,h)}(X_p^\manP) \right]_{[p h,e]}^\ver \right)
\end{multlined}
\\
&= \Ad_{C(p,h)\inv}\omega^\manQ_{[p,e]}\circ\Tg_p(\pi_H\circ i_e)(X_p^\manP) + C(p,h)\inv\dr_\manP C_{(p,h)}(X_p^\manP)
\\
&= \Ad_{C(p,h)\inv}\tilde{\chi}_{\manP\manQ}(\omega^\manQ)_p(X_p^\manP) + C(p,h)\inv\dr_\manP C_{(p,h)}(X_p^\manP)
\end{align*}

The vertical normalization of $\tilde{\chi}_{\manP\manQ}(\omega^\manQ)$ is obtained from the vertical normalization of $\omega^\manQ$ together with Lemmas~\ref{lemma d_H C}, \ref{lemma Technical Lemma}\emph{(iii)} and \emph{(iv)}. Let $\ka\in\kh$ and $\ka^\ver$ the corresponding fundamental vertical vector field.
\begin{align*}
\tilde{\chi}_{\manP\manQ}(\omega^\manQ)_p(\ka_p^\ver)
&= \omega^\manQ_{[p,e]}\circ\Tg_{(p,e)}\pi_H(\ka_p^\ver\oplus 0_e)
\\
&= \omega^\manQ_{[p,e]}\circ\Tg_{(p,e)}\pi_H(0_p\oplus -\dr_HC\inv_{(p,e)}(\ka))
\\
&= \omega^\manQ_{[p,e]}\circ\Tg_{(p,e)}\pi_H(0_p\oplus \dr_HC_{(p,e)}(\ka))
\\
&= \omega^\manQ_{[p,e]}\left(\left[\dr_HC_{(p,e)}(\ka)\right]_{[p,e]}^\ver\right)
\\
&= \dr_HC_{(p,e)}(\ka).
\end{align*}
This proves that $\tilde{\chi}_{\manP\manQ}(\omega^\manQ)$ is a twisted connection on $\manP$.

\medskip
\textbf{$\omega^\manP$ is the image of a unique connection $\omega^\manQ$.}

Recall that the spaces of standard (resp. twisted) connection 1-forms on $\manQ$ (resp. $\manP$) are affine spaces modeled on $\Omega_\tens^1(\manQ,\kg)$ (resp. $\Omega_\Ctens^1(\manP,\kg)$). In the proof of Prop.~\ref{prop isom tensorial forms}, these associated vector spaces are shown to be isomorphic with isomorphism $\tilde{\chi}_{\manP\manQ}$.

Let $\omega_0^\manQ\in\calA(\manQ)$ be a standard connection and let $\omega_0^\manP \defeq \tilde{\chi}_{\manP\manQ}(\omega_0^\manQ)$. Let $\omega^\manP$ be any twisted connection. Then
\begin{align*}
\omega^\manP-\omega_0^\manP{} 
&\rdefeq \alpha^\manP\in\Omega_\Ctens^1(\manP,\kg)
\\
&= \chi_{\manP\manQ}\left( \chi_{\manP\manQ}\inv(\alpha^\manP) \right)
\end{align*}
where $\tilde{\chi}_{\manP\manQ}\inv(\alpha^\manP)\in\Omega_\tens^1(\manQ,V)$. Then
\begin{align*}
\omega^\manP
&= \omega_0^\manP + \alpha^\manP
= \tilde{\chi}_{\manP\manQ}\left(\omega_0^\manQ + \chi_{\manP\manQ}\inv(\alpha^\manP)\right)
\end{align*}
where $\omega_0^\manQ+\chi_{\manP\manQ}\inv(\alpha^\manP)$ is a standard connection. Since $\chi_{\manP\manQ}$ is invertible on $\Omega_\Ctens^1(\manP,\kg)$, the antecedent is unique. This proves that $\omega^\manP$ is induced by a unique standard connection.
\end{proof}

\begin{remark}
This result provides a geometric interpretation of twisted connections, whose defining properties are purely algebraic. However, this draws a link with the geometry of $\manQ$ instead of $\manP$.

It is worth interpreting the degrees of freedom of these connections. A (twisted) connection is defined fiberwise, provided that it varies smoothly from one fiber to another. Given its value at one point, its values in the corresponding fiber is determined by equivariance. Let $n=\dim M,\ k=\dim H,\ \ell=\dim G$. In a given basis, $\omega_q^\manQ$ is represented by a $\ell\times(n+\ell)$ matrix and $\omega_p^\manP$ is represented by a $\ell\times(n+k)$ matrix. They have to satisfy the vertical normalization $\omega_q^\manQ(\kb_q^\ver)=\kb$ and $\omega_p^\manP(\ka_p^\ver)=\dr_HC_{(p,e)}(\ka)$. These correspond to $\ell$ (resp. $k$) independent equations with $\ell$ components each, which account for $\ell^2$ (resp. $k\ell$) constraints on the $\ell(n+\ell)$ (resp. $\ell(n+k)$) components. This leaves both $\omega^\manQ$ and $\omega^\manP$ with $n\ell$ degrees of freedom corresponding to the freedom of mapping $n=\dim M$ independent vectors to elements of $\kg$ which have $\ell$ components.
\end{remark}

The previous isomorphism between twisted and standard connections is compatible with the definitions of twisted and standard covariant differentials.

\begin{proposition}
Let $\omega^\manQ$ be a standard connection with covariant differential $D^\manQ \defeq \dr_\manQ+\rho_*(\omega^\manQ)$. Let $D^\manP \defeq \dr_\manP+\rho_*(\tilde{\chi}_{\manP\manQ}(\omega^\manQ))$ (which is a twisted covariant differential since $\tilde{\chi}_{\manP\manQ}(\omega^\manQ)$ is a twisted connection). Then for any $\alpha^\manQ\in\Omega_\tens^\bullet(\manQ,V)$,
\begin{align*}
D^\manP(\chi_{\manP\manQ}(\alpha^\manQ))
&= \chi_{\manP\manQ}(D^\manQ(\alpha^\manQ)).
\end{align*}
\end{proposition}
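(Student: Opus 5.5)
The key observation is that $\chi_{\manP\manQ}$ and $\tilde{\chi}_{\manP\manQ}$ are both given by the pullback along the smooth map $\Psi \defeq \pi_H\circ i_e:\manP\rightarrow\manQ$, $p\mapsto[p,e]$. Lemma~\ref{lemma chi_PQ} and the remark preceding Prop.~\ref{prop isom connections PQ} establish this. We therefore take $\Psi^*$ as the working definition and reduce the statement to naturality properties of pullbacks.

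First step: split $D^\manQ\alpha^\manQ=\dr_\manQ\alpha^\manQ+\rho_*(\omega^\manQ)\wedge\alpha^\manQ$. Here the second term is understood as the wedge product combined with the action of $\kg$ on $V$. Then apply $\Psi^*$ to each term.

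For the first term, the exterior derivative commutes with pullbacks by smooth maps. Hence $\Psi^*\dr_\manQ\alpha^\manQ=\dr_\manP\Psi^*\alpha^\manQ=\dr_\manP\chi_{\manP\manQ}(\alpha^\manQ)$.

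For the second term, pullback is compatible with the wedge product. The bilinear pairing $\kg\times V\to V$, $(\kb,v)\mapsto\rho_*(\kb)\cdotact v$, is constant and independent of the base point, so it commutes with $\Psi^*$. This gives
\begin{align*}
\Psi^*\bigl(\rho_*(\omega^\manQ)\wedge\alpha^\manQ\bigr)
&= \rho_*(\Psi^*\omega^\manQ)\wedge\Psi^*\alpha^\manQ
= \rho_*\bigl(\tilde{\chi}_{\manP\manQ}(\omega^\manQ)\bigr)\wedge\chi_{\manP\manQ}(\alpha^\manQ).
\end{align*}
Adding the two terms yields $D^\manP(\chi_{\manP\manQ}(\alpha^\manQ))$.

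The only points needing care are the following.

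\textbf{Domain consistency.} The identity $\chi_{\manP\manQ}=\Psi^*$ is stated in Lemma~\ref{lemma chi_PQ} only on tensorial forms. We must also apply it to $D^\manQ\alpha^\manQ$. This form is tensorial on $\manQ$, since the standard covariant differential preserves $\Omega_\tens^\bullet$, so the identification applies.

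\textbf{Meaning of the left-hand side.} The image $\chi_{\manP\manQ}(\alpha^\manQ)$ is $C$-tensorial by Lemma~\ref{lemma chi_PQ}, so $D^\manP$ acts on it as the twisted covariant differential. Both sides are therefore $C$-tensorial, and the identity above is an equality of forms on $\manP$.

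I expect no real obstacle beyond checking that the action of $\rho_*$ inside the wedge product is pointwise and algebraic, so it passes through the pullback. If one instead insisted on the pointwise definitions via $\Tg_{(p,e)}\pi_H$, the main work would be expanding the exterior derivative through the tangent maps, and the pullback formulation avoids exactly this.
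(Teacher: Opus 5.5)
Your proposal is correct and follows the paper's proof: both identify $\chi_{\manP\manQ}$ and $\tilde{\chi}_{\manP\manQ}$ with the pullback $(\pi_H\circ i_e)^*$, commute $\dr$ with this pullback, and pass the pullback through the term $\rho_*(\omega^\manQ)\cdotact\alpha^\manQ$. The only difference is that the paper checks this last step explicitly, expanding in bases $\{e_i\}$ of $V$ and $\{\kb_j\}$ of $\kg$ and writing the wedge product as a sum over permutations; your remark that the constant bilinear pairing $\kg\times V\to V$ commutes with pullback is a shorthand for that computation.
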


\begin{proof}
Let $\alpha^\manQ\in\Omega_\tens^k(\manQ,V)$. On the one hand,
\begin{align*}
\dr_\manP(\chi_{\manP\manQ}(\alpha^\manQ))
&= \dr_\manP((\pi_H\circ i_e)^*\alpha^\manQ)
= (\pi_H\circ i_e)^*\dr_\manQ(\alpha^\manQ)
= \tilde{\chi}_{\manP\manQ}(\dr_\manQ(\alpha^\manQ))
\end{align*}
where $\dr$ and $(\pi_H\circ i_e)^*$ commute because $\pi_H\circ i_e$ is a differentiable map. On the other hand, recall that given a basis $\{e_i\}_i$ of $V$ and a basis $\{\kb_j\}_j$ of $\kg$, $\alpha^\manQ=\alpha^{\manQ i}\otimes e_i$ and $\omega^\manQ=\omega^{\manQ j}\otimes \kb_j$ where $\alpha^{\manQ i}\in\Omega_\tens^k(\manQ)$ and $\omega^{\manQ j}\in\Omega_\eqv^1(\manQ)$. The implicit summation on repeated indices is used. In these bases, $\rho_*(\omega^\manQ)\cdotact \alpha^\manQ \defeq (\omega^{\manQ j}\wedge\alpha^{\manQ i})\otimes(\rho_*(\kb_j)\cdotact e_i)$ (which does not depend on the basis). Then,
\begin{align*}
\tilde{\chi}_{\manP\manQ}\left(\rho_*(\omega^\manQ)\cdotact \alpha^\manQ\right)_p(\{X_p^\manP\})
&= (\rho_*(\omega^\manQ)\cdotact \alpha^\manQ)_{[p,e]}\circ\Tg_{(p,e)}\pi_H(\{X_p^\manP\oplus 0_e\})
\\
&= \left((\omega^{\manQ j}\wedge\alpha^{\manQ i})_{[p,e]}\circ\Tg_{(p,e)}\pi_H(\{X_p^\manP\oplus 0_e\})\right)\otimes(\rho_*(\kb_j)\cdotact e_i)
\\
&= 
\begin{multlined}[t]
\dfrac{1}{k!}\sum_\sigma\Bigl(\omega_{[p,e]}^{\manQ j}(\Tg_{(p,e)}\pi_H(X_{\sigma(0),p}^\manP\oplus 0_e))
\\
\alpha_{[p,e]}^{\manQ i}(\{\Tg_{(p,e)}\pi_H(X_{\sigma(m),p}^\manP\oplus 0_e)\}_{m\in[1..k]})\Bigr)\otimes(\rho_*(\kb_j)\cdotact e_i)
\end{multlined}
\\
&= 
\begin{multlined}[t]
\dfrac{1}{k!}\sum_\sigma\Bigl(\tilde{\chi}_{\manP\manQ}(\omega^{\manQ j})_p(X_{\sigma(0),p}^\manP)
\\
\chi_{\manP\manQ}(\alpha^{\manQ i})(\{X_{\sigma(m),p}^\manP\}_{m\in[1..k]})\Bigr)\otimes(\rho_*(\kb_j)\cdotact e_i)
\end{multlined}
\\
&= \Bigl( \tilde{\chi}_{\manP\manQ}(\omega^{\manQ j})\wedge\chi_{\manP\manQ}(\alpha^{\manQ i}) \Bigr)_p(\{X_p^\manP\})\otimes\rho_*(\kb_j)\cdotact e_i\\
&= \Bigl( \rho_*(\tilde{\chi}_{\manP\manQ}(\omega^\manQ)\cdotact\chi_{\manP\manQ}(\alpha^\manQ) \Bigr)_p(\{X_p^\manP\}),
\end{align*}
where $\sum_\sigma$ is a sum over all permutations of $[\![ 0,k ]\!]$. This concludes the proof.
\end{proof}

The previous propositions highlight the status of twisted gauge fields on $\manP$ as pull-backs of standard gauge fields on $\manQ$ through $\pi_H\circ i_e$. Since fields on $\manQ$ undergo gauge transformations under $\Autv(\manQ) \simeq \calG^\manQ = \{\gamma^\manQ : \manQ\rightarrow G \mid \gamma^\manQ(qg)=g\inv\gamma^\manQ(q)g\}$, their twisted equivalent on $\manP$ undergo an induced gauge transformation.

\begin{proposition}\label{prop isom GT PQ}
The space $\calG^\manP \defeq \{\zeta^\manP:\manP\rightarrow G \mid \zeta^\manP(ph)=C(p,h)\inv\zeta^\manP(p)C(p,h)\}$ is a group under multiplication in $G$. It acts on itself as\footnote{The superscript $\manP$ in $\zeta^\manP$ is dropped for convenience, when unambiguous.} $\zeta'^\zeta=\zeta\inv\zeta'\zeta$ and it acts on twisted fields as
\begin{align*}
\omega^\zeta_p{} 
&\defeq \zeta(p)\inv\omega_p\zeta(p)+\zeta(p)\inv\dr\zeta_p
\\
\Omega^\zeta_p{} 
&\defeq \zeta(p)\inv\Omega_p\zeta(p)
\\
\varphi^\zeta(p){} 
&\defeq \rho(\zeta(p)\inv)\cdotact\varphi(p)
\\
(D^\manP\varphi)^\zeta(p){} 
&\defeq \rho(\zeta(p)\inv)\cdotact D^\manP\varphi(p)
\end{align*}
\end{proposition}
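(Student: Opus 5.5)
The plan is to transport the standard gauge group $\calG^\manQ$ to $\manP$ through $\pi_H\circ i_e$, exactly as was done for tensorial forms and connections. Given $\gamma^\manQ\in\calG^\manQ$, I will set $\zeta\defeq(\pi_H\circ i_e)^*\gamma^\manQ$, i.e. $\zeta(p)=\gamma^\manQ([p,e])$. The first step checks that this lands in $\calG^\manP$. Since $[ph,e]=[p,C(p,h)]=[p,e]\,C(p,h)$, the $\calG^\manQ$-equivariance gives $\zeta(ph)=C(p,h)\inv\zeta(p)C(p,h)$. Conversely, given $\zeta\in\calG^\manP$, I will define $\gamma^\manQ([p,g])\defeq g\inv\zeta(p)g$. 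This is well defined on classes because $[ph,C(p,h)\inv g]$ gives $g\inv C(p,h)\zeta(ph)C(p,h)\inv g=g\inv\zeta(p)g$. It is also clearly $G$-equivariant. So $\calG^\manP\simeq\calG^\manQ$, which mirrors Cor.~\ref{corollary isom eq functions} with $G$ acting on itself by conjugation instead of on $V$.

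Next come the algebraic claims. The group property is immediate: since conjugation by $C(p,h)$ is a group automorphism of $G$, products and inverses of elements of $\calG^\manP$ satisfy the same equivariance, and the constant map $e$ belongs to $\calG^\manP$. Equivalently, $\zeta\mapsto\gamma^\manQ$ is multiplicative, so the structure is inherited from $\calG^\manQ$. The action $\zeta'^\zeta=\zeta\inv\zeta'\zeta$ stays in $\calG^\manP$ for the same reason, and it is a right action: $(\zeta''^{\zeta'})^\zeta=\zeta''^{\zeta'\zeta}$.

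The main step is to show that the formulas for $\omega^\zeta,\Omega^\zeta,\varphi^\zeta,(D^\manP\varphi)^\zeta$ preserve the twisted spaces and define an action. I will avoid checking the twisted equivariance of $\omega^\zeta$ by hand. Instead, I will use the vertical automorphism $\Psi(q)=q\gamma^\manQ(q)$ of $\manQ$, whose standard transformations $\Psi^*\omega^\manQ=\gamma^{\manQ-1}\omega^\manQ\gamma^\manQ+\gamma^{\manQ-1}\dr\gamma^\manQ$ etc.\ are known. Applying $(\pi_H\circ i_e)^*$ commutes with pointwise algebraic operations and with $\dr$, and it pulls $\gamma^\manQ$ back to $\zeta$. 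Hence
\[
\tilde\chi_{\manP\manQ}\bigl((\omega^\manQ)^{\gamma^\manQ}\bigr)=\zeta\inv\tilde\chi_{\manP\manQ}(\omega^\manQ)\zeta+\zeta\inv\dr\zeta .
\]
By Prop.~\ref{prop isom connections PQ}, the right-hand side is therefore a twisted connection. Combining Prop.~\ref{prop isom tensorial forms} with the covariant differential proposition, the same argument shows that $\Omega^\zeta$, $\varphi^\zeta$ and $(D^\manP\varphi)^\zeta$ lie in the appropriate $C$-tensorial spaces, and that $(D^\manP\varphi)^\zeta=D^{\manP,\zeta}\varphi^\zeta$.

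I expect the only delicate point to be the $\dr\zeta$ term. There one must note that $\dr(\gamma^\manQ\circ\pi_H\circ i_e)=(\pi_H\circ i_e)^*\dr\gamma^\manQ$ by the chain rule, so no cocycle-derivative corrections appear. As a sanity check, I will also verify the twisted equivariance of $\omega^\zeta$ directly. The term $C\inv\dr_\manP C$ from $\calR_h^*\omega$, combined with $\calR_h^*\zeta=C\inv\zeta C$, produces $\zeta\inv C\inv\dr_\manP C\,\zeta+\zeta\inv\dr\zeta\mapsto$ the correct inhomogeneous term after the conjugation terms cancel. This is routine. The composition law $(\omega^\zeta)^{\zeta'}=\omega^{\zeta\zeta'}$ follows from the corresponding identity on $\manQ$.
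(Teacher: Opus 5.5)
Your proposal is correct and follows the paper's route. The paper identifies $\calG^\manP\simeq\calG^\manQ$ via $(\pi_H\circ i_e)^*$ and defines the action on twisted fields by transporting the standard $\calG^\manQ$-action through $\tilde{\chi}_{\manP\manQ}$ and $\chi_{\manP\manQ}$, exactly as you do. Your explicit inverse $\gamma^\manQ([p,g])=g\inv\zeta(p)g$ and your chain-rule check that the transported action reproduces the stated formulas fill in steps the paper leaves implicit; in particular, its appeal to Cor.~\ref{corollary isom eq functions} is formally stated for linear representations, not for $G$ acting on itself by conjugation.
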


\begin{proof}
The group property of $\calG^\manP$ is a consequence of the group property of $G$ and the equivariance property of $\zeta^\manP$ by conjugation. The action of $\calG^\manP$ is a direct consequence of Prop.~\ref{prop isom tensorial forms}. Note that $\calG^\manP=\calF_\Ceq(\manP,G)=\Omega_\Ceq^0(\manP,G)\simeq\Omega_\eqv^0(\manQ,G)=\calF_\eqv(\manQ,G)=\calG^\manQ$, where the twisted equivariance is stated in the definition of $\calG^\manP$. Then, any $\zeta^\manP\in\calG^\manP$ reads $\zeta^\manP=\chi_{\manP\manQ}(\gamma^\manQ)$ for a unique $\gamma^\manQ\in\calG^\manQ$. The action of $\calG^\manP$ is defined such that it is compatible with the isomorphism of structures on $\manP$ and $\manQ$:
\begin{align*}
\tilde{\chi}_{\manP\manQ}(\omega^\manQ)^{\chi_{\manP\manQ}(\gamma^\manQ)} 
&\defeq \tilde{\chi}_{\manP\manQ}\left((\omega^\manQ)^{\gamma^\manQ}\right).
\end{align*}
\end{proof}

It is worth insisting on the double gauge structure underlying the twisted fields on $\manP$.
\begin{center}
\renewcommand{\arraystretch}{1.3}
\begin{tabular}{rcc}
\toprule
 & Geometric structure group $H$ & Action structure group $G$\\
 \midrule
Gauge group & $\calH^\manP\simeq\Autv(\manP)$ & $\calG^\manP\simeq\calG^\manQ\simeq\Autv(\manQ)$
\\
Equivariance & $\gamma^\manP(ph)=h\inv\gamma^\manP(p)h$ & $\zeta^\manP(ph)=C(p,h)\inv\zeta^\manP(p)C(p,h)$
\\
Gauge transformation & $\omega^\gamma=C_\gamma\inv \omega C_\gamma+C_\gamma\inv \dr C_\gamma$ & $\omega^\zeta=\zeta\inv\omega\zeta+\zeta\inv\dr\zeta$\\
\bottomrule
\end{tabular}
\end{center}
On the one hand, equivariance properties encode the link between the fields (here, $\gamma^\manP,\ \zeta^\manP$) and the “geometric” $H$-structure on $\manP$. Since $\gamma^\manP$ belongs to this structure, its equivariance does not involve $C$. On the contrary, $\zeta^\manP$ belongs to the $G$-structure and needs $C$ to be connected to the geometry of $\manP$. On the other hand, twisted fields belong to the $G$-structure. Their $H$-gauge transformations involve $C$ while their $G$-gauge transformations do not need $C$.

%%%%%%%%%%%%%%%%
\section{\texorpdfstring{Connections and Gauge Structure on $\manS$}{Connections and Gauge Structure on S}}
\label{sec Connections GT on S}
%%%%%%%%%%%%%%%%

As a principal $(H\times G)$-bundle over $M$, $\manS=\manP\times G$ can be endowed with a standard connection and it supports an action of its standard gauge group $\Autv(\manS)$. In this section, following the process of Sec.~\ref{subsec Isom Connection PQ}, the spaces of connections and of gauge transformations on $\manS$ are related to those on $\manP$ and $\manQ$.

%%%%%%%%%%%%%%%%
\subsection{\texorpdfstring{Connections on $\manS$}{Connections on S}}
\label{subsec Connections on S}
%%%%%%%%%%%%%%%%

\begin{definition}
A (standard) connection $\omega^\manS\in\calA(\manS)$ is a vector-valued 1-form $\omega^\manS\in\Omega^1(\manS,\kh\oplus\kg)$ satisfying the following equivariance condition and vertical normalization
\begin{align}
\label{eq equiv omega S}
\calR_{(h,g')}^{\manS*}\omega^\manS
&= \Ad_{(h\inv, g^{\prime -1})}\omega^\manS\ \forall (h,g')\in H\times G,
\\
\label{eq VN omega S}
\omega_{(p,g)}^\manS\left((\ka,\kb)_{(p,g)}^\ver\right)
&= \omega_{(p,g)}^\manS(\ka_p^\ver\oplus(\Tg_eR_g\circ\dr_H C_{(p,e)}\inv(\ka)+\kb_g))=(\ka,\kb),\ \forall (\ka,\kb)\in\kh\oplus\kg.
\end{align}
\end{definition}
As a $(\kh\oplus\kg)$-valued form, $\omega^\manS$ decomposes into $(\omega^\kh,\omega^\kg)$ with $\omega^\kh\in\Omega^1(\manS,\kh)$ and $\omega^\kg\in\Omega^1(\manS,\kg)$. Equation (\ref{eq equiv omega S}) splits into
\begin{align*}
\calR_{(h,g')}^{\manS*}\omega^\kh
&= \Ad_{h\inv}\omega^\kh,
&
\calR_{(h,g')}^{\manS*}\omega^\kg
&= \Ad_{ g^{\prime -1}}\omega^\kg.
\end{align*}
In particular, $\omega^\kh$ is $G$-invariant and $\omega^\kg$ is $H$-invariant. Equation (\ref{eq VN omega S}) implies that
\begin{align}
\label{eq VN omega LieH}
\omega_{(p,g)}^\kh(\ka_p^\ver\oplus(\Tg_eR_g\circ\dr_H C_{(p,e)}\inv(\ka)+\kb_g))
&= \ka,
\\
\label{eq VN omega LieG}
\omega_{(p,g)}^\kg(\ka_p^\ver\oplus(\Tg_eR_g\circ\dr_H C_{(p,e)}\inv(\ka)+\kb_g))
&= \kb.
\end{align}
In particular, any $X_g^G\in\Tg_gG$ reads $\Tg_eR_g\circ\dr_H C_{(p,e)}\inv(\ka) + \kb_g$ for an appropriate $\kb\in\kg$. Hence, $\forall \ka\in\kh$, $\forall X_g^G\in\Tg_gG$, $\omega_{(p,g)}^\kh(\ka_p^\ver\oplus X_g^G)=\ka$. This implies that $\omega^\kh$ is $G$-horizontal and its $H$-vertical normalization reads $\omega_{(p,g)}^\kh(\ka_p^\ver\oplus 0_g)=\ka$. Similarly, one gets that $\omega_{(p,g)}^\kg(\ka_p^\ver\oplus\Tg_eR_g\circ\dr_HC_{(p,e)}\inv(\ka))=0$, \textit{i.e.} $\omega^\kg$ is $H$-horizontal. It will be useful to note that this implies
\begin{align*}
\omega_{(p,g)}^\kg(\ka_p^\ver\oplus 0_g)
&= \omega_{(p,g)}^\kg(0_p\oplus\Tg_eR_g\circ\dr_H C_{(p,e)}(\ka))
= \dr_H C_{(p,e)}(\ka)\in\Tg_eG
= \kg.
\end{align*}
Lemma~\ref{lemma d_H C} was used here. Since $\omega^\kg$ is $H$-invariant and $H$-horizontal, it induces a well-defined form on $\manQ$. This is summarized in the following proposition.
\begin{proposition}\label{prop isom connections SQ}
The set $\calA(\manQ)$ of standard connections on $\manQ$ is isomorphic to the set of $H$-invariant $H$-horizontal $\omega^\kg\in\Omega^1(\manS,\kg)$ satisfying \eqref{eq VN omega LieG}. The isomorphism is provided by $\tilde{\chi}_{\manS\manQ} \defeq \pi_H^*:\calA(\manQ)\rightarrow\Omega^1(\manS,\kg)$.
\end{proposition}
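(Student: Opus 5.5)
The plan is to treat $\pi_H:\manS\to\manQ$ as a principal $H$-bundle and show that pullback by $\pi_H$ is a bijection between connections on $\manQ$ and the stated class of $\kg$-valued forms on $\manS$. This mirrors the way $\chi_{\manQ\manS}$ was built in Lemma~\ref{lemma chi_QS}. The argument has three steps: the image of $\pi_H^*$ lies in the target set, $\pi_H^*$ is injective, and $\pi_H^*$ is surjective onto that set.

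\textbf{Step 1: the image lies in the target set.} Take $\omega^\manQ\in\calA(\manQ)$ and set $\omega^\kg\defeq\pi_H^*\omega^\manQ$.
\begin{itemize}
\item $H$-invariance follows from Lemma~\ref{lemma Technical Lemma}\emph{(v)}, $\Tg\pi_H\circ\Tg\calR^\manS_{(h,e)}=\Tg\pi_H$, together with $\pi_H\circ\calR^\manS_{(h,e)}=\pi_H$.
\item $H$-horizontality holds because $\Tg\pi_H$ kills $\ker\Tg\pi_H$. By Lemma~\ref{lemma Technical Lemma}\emph{(iii)}, this kernel is exactly the set of vectors $\ka^\ver_p\oplus\Tg_eR_g\circ\dr_HC\inv_{(p,e)}(\ka)$.
\item The $G$-equivariance $\calR^{\manS*}_{(e,g')}\omega^\kg=\Ad_{g'^{-1}}\omega^\kg$ follows from the second identity in Lemma~\ref{lemma Technical Lemma}\emph{(v)} and the equivariance of $\omega^\manQ$.
\item For \eqref{eq VN omega LieG}, split the vertical vector $(\ka,\kb)^\ver_{(p,g)}$ into its kernel part plus $0_p\oplus\kb_g$. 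By Lemma~\ref{lemma Technical Lemma}\emph{(iv)}, $\Tg\pi_H(0_p\oplus\kb_g)=\kb^\ver_{[p,g]}$, and $\omega^\manQ$ sends this to $\kb$.
\end{itemize}

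\textbf{Step 2: injectivity.} $\pi_H$ is a surjective submersion, so $\Tg_{(p,g)}\pi_H$ is onto $\Tg_{[p,g]}\manQ$. Hence $\pi_H^*\omega^\manQ$ determines $\omega^\manQ$.

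\textbf{Step 3: surjectivity.} Given an $H$-invariant, $H$-horizontal $\omega^\kg$ satisfying \eqref{eq VN omega LieG}, define $\omega^\manQ$ on $\manQ$ by $\omega^\manQ_{[p,g]}\circ\Tg_{(p,g)}\pi_H\defeq\omega^\kg_{(p,g)}$, exactly as in Lemma~\ref{lemma chi_QS}. I expect well-definedness to be the main point, and I would handle it in three checks.
\begin{itemize}
\item Independence of the preimage vector: two preimages differ by an element of $\ker\Tg\pi_H$, on which $\omega^\kg$ vanishes by $H$-horizontality.
\item Independence of the representative $(p,g)$ of $[p,g]$: this follows from $H$-invariance, since $\pi_H\circ\calR^\manS_{(h,e)}=\pi_H$.
\item Smoothness: use local sections of $\pi_H$, for instance those induced from local sections of $\manP$ as in Lemma~\ref{lemma def S ppal bdl}.
\end{itemize}

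It then remains to check that $\omega^\manQ$ is a connection. Its $G$-equivariance transfers back through Lemma~\ref{lemma Technical Lemma}\emph{(v)}. For the vertical normalization, take $\kb^\ver_{[p,g]}$, lift it to $0_p\oplus\kb_g$ via Lemma~\ref{lemma Technical Lemma}\emph{(iv)}, and apply \eqref{eq VN omega LieG} with $\ka=0$ to get $\kb$. Finally, the construction is by definition the inverse of $\pi_H^*$, which completes the bijection.
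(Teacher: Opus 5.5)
Your argument is correct, but it is organised differently from the paper's proof. The paper simply defers to the proof of Prop.~\ref{prop isom connections PQ}. Transcribed here, that means a forward check followed by an affine-space argument: fix a reference connection $\omega_0^\manQ$, note that $\omega^\kg-\pi_H^*\omega_0^\manQ$ vanishes on all of $\Ver\manS$ and hence lies in $\Omega^1_{\tens}(\manS,\kg)$, and invert this difference with the tensorial-form isomorphism of Prop.~\ref{prop isom tensorial forms}.

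You instead descend $\omega^\kg$ itself as a basic form of the principal $H$-bundle $\pi_H:\manS\to\manQ$. This is what the sentence just before the proposition in the paper alludes to. It works because $\ker\Tg\pi_H$ is only the $H$-part of $\Ver\manS$, so $H$-horizontality plus $H$-invariance suffice for descent even though $\omega^\kg$ is not tensorial. Your route needs no reference connection on $\manQ$, and it makes injectivity and smoothness explicit, which the paper leaves implicit. The paper's template, on the other hand, is uniform with the case of $\tilde\chi_{\manP\manQ}=(\pi_H\circ i_e)^*$. There $\pi_H\circ i_e$ is generally not a submersion (e.g.\ for the trivial cocycle), so direct descent is unavailable and the affine argument is genuinely needed.

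Two small points. First, in Step~3 you list only $H$-invariance, $H$-horizontality and \eqref{eq VN omega LieG} as hypotheses. You then use $\calR^{\manS*}_{(e,g')}\omega^\kg=\Ad_{g^{\prime -1}}\omega^\kg$ to get the equivariance of $\omega^\manQ$, so state this $G$-equivariance explicitly as part of the target set. In the paper it is implicit, since $\omega^\kg$ is the $\kg$-component of some $\omega^\manS\in\calA(\manS)$. Without it the claim is false: $\pi_H^*(\omega^\manQ+\beta)$, with $\beta$ horizontal but not $\Ad$-equivariant on $\manQ$, satisfies the three listed conditions and is not in the image.

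Second, the local sections you need for smoothness are sections of $\pi_H$ over $\manQ_U$, not the sections of $\manS\to M$ from Lemma~\ref{lemma def S ppal bdl}. A suitable choice is $[p,g]\mapsto$ the representative whose first component is $\sigma_\manP(x)$, as in Sec.~\ref{subsec Local Sect S,P,Q}. Since $\pi_H$ is a surjective submersion, such sections exist anyway.
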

\begin{proof}
The proof of this proposition is identical to the proof of Prop.~\ref{prop isom connections PQ}. Similarly, $\omega^\kh$ is $G$-invariant and $G$-horizontal, so it induces a form on $\manP$.
\end{proof}
\begin{proposition}\label{prop isom standard connections SP}
The set $\calA(\manP)$ of standard connections on $\manP$ is isomorphic to the set of $G$-invariant $G$-horizontal $\omega^\kh\in\Omega^1(\manS,\kh)$ satisfying \eqref{eq VN omega LieH}. The isomorphism is provided by $\pi_G^*:\calA(\manP)\rightarrow\Omega^1(\manS,\kh)$.
\end{proposition}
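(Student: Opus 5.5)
The plan is to use the fact that $\pi_G:\manS=\manP\times G\to\manP$, $(p,g)\mapsto p$, is the trivial principal $G$-bundle over $\manP$ for the action $\calR^\manS_{(e,g')}$. Forms on $\manP$ should then correspond exactly to $G$-basic forms on $\manS$. Beyond this, I need to check that the connection axioms match under $\pi_G^*$, and to construct an inverse. The natural candidate inverse is $i_e^*$ (or equivalently $i_g^*$ for any $g$), mirroring the role of $i_e^*\pi_H^*$ in Lemma~\ref{lemma chi_PQ}.

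\textbf{Step 1: $\pi_G^*$ maps $\calA(\manP)$ into the target set.} Let $\omega^\manP\in\calA(\manP)$ and set $\omega^\kh\defeq\pi_G^*\omega^\manP$. Since $\pi_G\circ\calR^\manS_{(e,g')}=\pi_G$, the form $\omega^\kh$ is $G$-invariant. Since $\Tg\pi_G(0_p\oplus X_g^G)=0$, it is $G$-horizontal. For the $H$-equivariance, I use $\pi_G\circ\calR^\manS_{(h,g')}=\calR^\manP_h\circ\pi_G$, which gives $\calR^{\manS*}_{(h,g')}\omega^\kh=\pi_G^*\calR_h^{\manP*}\omega^\manP=\Ad_{h\inv}\omega^\kh$; note that $C$ drops out because $\pi_G$ forgets the $G$-component. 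For the normalization \eqref{eq VN omega LieH}, I compute
\begin{align*}
\omega^\kh_{(p,g)}\bigl(\ka_p^\ver\oplus(\Tg_eR_g\circ\dr_H C\inv_{(p,e)}(\ka)+\kb_g)\bigr)=\omega^\manP_p(\ka_p^\ver)=\ka .
\end{align*}

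\textbf{Step 2: injectivity and surjectivity.} Since $\pi_G$ is a surjective submersion, $\pi_G^*$ is injective. Conversely, let $\omega^\kh$ be $G$-invariant, $G$-horizontal and satisfy \eqref{eq VN omega LieH}, and define $\omega^\manP\defeq i_e^*\omega^\kh$. To see that $\pi_G^*\omega^\manP=\omega^\kh$, I decompose $X^\manP_p\oplus X^G_g=(X^\manP_p\oplus 0_g)+(0_p\oplus X^G_g)$. The second term is killed by $G$-horizontality. For the first, $X^\manP_p\oplus 0_g=\Tg\calR^\manS_{(e,g)}(X^\manP_p\oplus 0_e)$ by Lemma~\ref{lemma Technical Lemma}\emph{(i)}, so $G$-invariance gives $\omega^\kh_{(p,g)}(X^\manP_p\oplus0_g)=\omega^\kh_{(p,e)}(X^\manP_p\oplus 0_e)=(i_e^*\omega^\kh)_p(X_p^\manP)$.

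It remains to show that $\omega^\manP\in\calA(\manP)$. The normalization $\omega^\manP_p(\ka^\ver_p)=\omega^\kh_{(p,e)}(\ka_p^\ver\oplus 0_e)=\ka$ was already derived in the text from \eqref{eq VN omega LieH} together with $G$-horizontality. For the equivariance I use Lemma~\ref{lemma Technical Lemma}\emph{(ii)}: $\Tg i_e\circ\Tg\calR^\manP_h=\Tg\calR^\manS_{(h,C(p,h))}\circ\Tg i_e+[\dots]^\ver$, where the vertical correction is of pure $G$-type $(0,\kb)^\ver=0_{ph}\oplus\kb_e$. That correction is annihilated by $\omega^\kh$ by $G$-horizontality, and the $\calR^\manS_{(h,C(p,h))}$ term yields $\Ad_{h\inv}$ by the $\kh$-component of \eqref{eq equiv omega S}. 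Hence $\calR_h^{\manP*}\omega^\manP=\Ad_{h\inv}\omega^\manP$.

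\textbf{Main obstacle.} The only delicate point is this last equivariance check: $i_e$ is not equivariant, since $i_e(ph)=(ph,e)$ whereas $\calR^\manS_{(h,e)}(p,e)=(ph,C(p,h)\inv)$. This is handled by compensating with the $G$-action $(h,C(p,h))$, which is exactly the content of Lemma~\ref{lemma Technical Lemma}\emph{(ii)}. One must also verify that the leftover term there is $G$-vertical only, i.e.\ that it has zero $\manP$-component, so that it lies in the kernel of $\omega^\kh$. Alternatively, one can bypass $i_e$ and argue directly that a $G$-basic form on the trivial $G$-bundle $\manP\times G\to\manP$ descends uniquely through $\pi_G$, which makes the equivariance follow from the intertwining relation $\pi_G\circ\calR^\manS_{(h,g')}=\calR^\manP_h\circ\pi_G$.
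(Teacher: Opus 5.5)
Your proof is correct, but it takes a different route from the paper for the bijectivity part.

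\textbf{The paper's route.} The paper only says the argument is ``similar to'' that of Prop.~\ref{prop isom connections PQ}. There, one first checks directly that the image of a connection satisfies the equivariance and normalization conditions. Bijectivity then comes from the affine structure: fix a reference connection and transfer differences through an isomorphism of tensorial forms. Here that would be the analogue of Prop.~\ref{prop isom tensorial forms} for $\pi_G$, which the paper leaves implicit.

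\textbf{Your route.} You exhibit the inverse $i_e^*$ explicitly. Injectivity of $\pi_G^*$ comes from $\pi_G$ being a surjective submersion. The identity $\pi_G^*i_e^*\omega^\kh=\omega^\kh$ comes from $G$-invariance and $G$-horizontality. This mirrors what the paper itself does for gauge groups in Prop.~\ref{prop isom GT SP}. It is more self-contained: it needs neither a reference connection nor an unproved tensorial-form isomorphism.

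\textbf{A shortcut.} Your closing alternative is shorter still and makes Lemma~\ref{lemma Technical Lemma}\emph{(ii)} unnecessary. Once $\omega^\kh=\pi_G^*\omega^\manP$, the intertwining $\pi_G\circ\calR^\manS_{(h,g')}=\calR^\manP_h\circ\pi_G$ gives $\pi_G^*(\calR_h^{\manP*}\omega^\manP)=\calR^{\manS*}_{(h,g')}\omega^\kh=\Ad_{h\inv}\omega^\kh=\pi_G^*(\Ad_{h\inv}\omega^\manP)$. Injectivity of $\pi_G^*$ then yields the equivariance of $\omega^\manP$. The normalization follows the same way from \eqref{eq VN omega LieH}.

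\textbf{An implicit hypothesis.} The target set must include the $H$-equivariance $\calR^{\manS*}_{(h,g')}\omega^\kh=\Ad_{h\inv}\omega^\kh$, i.e.\ the $\kh$-component of \eqref{eq equiv omega S}. The statement as printed leaves this implicit, and you should say so explicitly. You use it correctly, and it cannot be dropped. Take $\pi_G^*$ of any non-equivariant $\kh$-valued 1-form $\omega$ on $\manP$ with $\omega(\ka^\ver)=\ka$. It is $G$-invariant, $G$-horizontal and satisfies \eqref{eq VN omega LieH}, yet it is not in the image of $\calA(\manP)$.
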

\begin{proof}
The proof is similar to the one of Prop.~\ref{prop isom connections PQ}. The following result provides a view on twisted connections on $\manP$ as induced structures from $\manS$.
\end{proof}
\begin{proposition}\label{prop isom twisted connections SP}
The set $\calA^C(\manP)$ of twisted connections on $\manP$ is isomorphic to the set of $H$-invariant $H$-horizontal $\omega^\kg\in\Omega^1(\manS,\kg)$ satisfying \eqref{eq VN omega LieG}. The isomorphism is provided by $\tilde{\chi}_{\manP\manS} \defeq i_e^*$.
\end{proposition}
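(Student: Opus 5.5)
The plan is to obtain the statement by composing two isomorphisms already established, and then to check that the composite is the map $i_e^*$. Denote by $\calA^\kg(\manS)$ the set of $H$-invariant, $H$-horizontal $\omega^\kg\in\Omega^1(\manS,\kg)$ satisfying \eqref{eq VN omega LieG}.

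The first ingredient is Prop.~\ref{prop isom connections SQ}. It says that $\pi_H^*:\calA(\manQ)\to\calA^\kg(\manS)$ is a bijection.

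The second ingredient is Prop.~\ref{prop isom connections PQ}. It says that $\tilde{\chi}_{\manP\manQ}=i_e^*\pi_H^*:\calA(\manQ)\to\calA^C(\manP)$ is a bijection. Surjectivity is stated there. Injectivity follows from the affine argument in its proof: $\tilde{\chi}_{\manP\manQ}$ is affine, and its linear part is $\chi_{\manP\manQ}$ restricted to $\Omega^1_\tens(\manQ,\kg)$, which is invertible by Prop.~\ref{prop isom tensorial forms} (applied with $V=\kg$ and $\rho=\Ad$).

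Given these, for $\omega^\kg=\pi_H^*\omega^\manQ$ we have
\begin{align*}
i_e^*\omega^\kg = i_e^*\pi_H^*\omega^\manQ=\tilde{\chi}_{\manP\manQ}(\omega^\manQ),
\end{align*}
so that
\begin{align*}
\tilde{\chi}_{\manP\manS}=\tilde{\chi}_{\manP\manQ}\circ(\pi_H^*)\inv
\end{align*}
on $\calA^\kg(\manS)$. Being a composite of two bijections, it is a bijection onto $\calA^C(\manP)$, which proves the statement.

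For a self-contained argument, independent of the precise bijectivity of Prop.~\ref{prop isom connections SQ}, I would check directly that $i_e^*$ lands in $\calA^C(\manP)$. I would also write down its inverse explicitly: given $\omega^\manP$, set
\begin{align*}
\omega^\kg_{(p,g)}(X_p^\manP\oplus X_g^G)\defeq \Ad_{g\inv}\omega^\manP_p(X_p^\manP)+g\inv X_g^G,
\end{align*}
where $g\inv X_g^G$ denotes the Maurer--Cartan form of $G$ evaluated on $X_g^G$.

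The checks on this candidate inverse are as follows.
\begin{itemize}
\item $G$-equivariance, $\calR_{(e,g')}^{\manS*}\omega^\kg=\Ad_{g'^{-1}}\omega^\kg$, is immediate from Lemma~\ref{lemma Technical Lemma}\emph{(i)}.
\item The normalization $\omega^\kg(0\oplus\kb_g)=\kb$ is immediate, since $\kb_g$ is left-invariant.
\item $H$-horizontality holds because, on the kernel described in Lemma~\ref{lemma Technical Lemma}\emph{(iii)}, one gets $\Ad_{g\inv}\dr_HC_{(p,e)}(\ka)+\Ad_{g\inv}\dr_HC\inv_{(p,e)}(\ka)=0$ by Lemma~\ref{lemma d_H C}.
\item $H$-invariance: here we use Lemma~\ref{lemma Technical Lemma}\emph{(i)}, which gives the $G$-component of $\Tg\calR^\manS_{(h,e)}$ as $\Tg L_{C\inv}X_g^G+\Tg R_g\,\dr_\manP C\inv(X_p^\manP)$. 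Feeding this into the formula, the twisted equivariance term $C\inv\dr_\manP C$ of $\omega^\manP$ cancels against the Maurer--Cartan contribution of $\dr_\manP C\inv$. What is left is $\Ad_{g\inv}\omega^\manP_p(X_p^\manP)+g\inv X^G_g$, i.e.\ $\omega^\kg$ evaluated at $(p,g)$ on $X_p^\manP\oplus X_g^G$, and the Maurer--Cartan parts combine since $(C\inv g)\inv\,\Tg L_{C\inv}X_g^G=g\inv X_g^G$.
\item $i_e^*$ of this form is $\omega^\manP$, trivially.
\item Conversely, a form satisfying the three properties is determined by its restriction to $i_e(\manP)$ together with its value on $0\oplus\Tg_gG$. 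This is because $\Tg_{(p,g)}\manS=\Tg\calR_{(e,g)}^\manS(\Tg_{(p,e)}i_e(\Tg_p\manP))\oplus(0\oplus\Tg_gG)$, and this gives injectivity of $i_e^*$.
\end{itemize}

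The main obstacle is the $H$-invariance computation just described, where the signs between $\dr_\manP C$ and $\dr_\manP C\inv$ must be tracked carefully. I would first try to reduce it to Lemma~\ref{lemma Technical Lemma}\emph{(ii)} evaluated at $g=e$, and then use $G$-equivariance to transport the result to arbitrary $g$.
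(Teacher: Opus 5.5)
Your main argument is the paper's own proof. The paper likewise writes any $\omega^\manP$ as $i_e^*\pi_H^*\omega^\manQ$ for a unique $\omega^\manQ$ via Prop.~\ref{prop isom connections PQ}, then uses the bijection $\pi_H^*$ of Prop.~\ref{prop isom connections SQ}, so that $i_e^*=\tilde{\chi}_{\manP\manQ}\circ(\pi_H^*)\inv$ is a bijection.

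Your optional explicit inverse $\omega^\kg_{(p,g)}(X_p^\manP\oplus X_g^G)=\Ad_{g\inv}\omega^\manP_p(X_p^\manP)+g\inv X_g^G$ also checks out: the $\dr_\manP C$ terms do cancel in the $H$-invariance computation. Its injectivity step uses the $G$-equivariance of $\omega^\kg$, which the statement leaves implicit by treating $\omega^\kg$ as the $\kg$-part of a connection on $\manS$.
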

\begin{proof}
This is a direct consequence of Prop.~\ref{prop isom connections PQ} and \ref{prop isom connections SQ}. Let $\omega^\manP\in\calA^C(\manP)$. There is a unique $\omega^\manQ\in\calA(\manQ)$ such that $\omega^\manP=\tilde{\chi}_{\manP\manQ}(\omega^\manQ)=i_e^*\pi_H^*\omega^\manQ$. Then $\omega^\manP=i_e^*\omega^\kg$ for a unique $\omega^\kg \defeq \pi_H^*\omega^\manQ$.
\end{proof}

Following the definition in Sec.~\ref{subsec Standard Gauge Fields}, one defines the covariant differential on $\manS$ associated to $\omega^\manS$ as $D^\manS \defeq \dr_\manS+\omega^\manS$. Note that this is a shorthand notation which presupposes the use of a representation. $D^\manS \defeq \dr_\manS+\rho_*(\omega^\manS)$ with $\rho_*$ induced on $\kh\oplus\kg$ by $\rho_{H\times G}$. Given the context, the appropriate representation is $(W\times V,\rho_{H\times G})$ such that $\rho_{H\times G}(h,g)\cdotact(w,v) \defeq (\rho_H(h)\cdotact w,\rho_G(g)\cdotact v)$. As a consequence of the previous correspondences, it is relevant to define “semi-covariant” differentials $D^\kh=\dr_\manS+(\omega^\kh,0)$ and $D^\kg=\dr_\manS+(0,\omega^\kg)$. The previous propositions together with the commutation of $\dr$ and the pullback of any bundle map implies that $D^\kh$ corresponds to a unique covariant differential on $\manP$, and, $D^\kg$ corresponds to a unique covariant differential on $\manQ$ and a unique twisted covariant differential on $\manP$.

%%%%%%%%%%%%%%%%
\subsection{\texorpdfstring{Gauge Structure on $\manS$}{Gauge Structure on S}}
\label{subsec Gauge Structure on S}
%%%%%%%%%%%%%%%%

The group of vertical automorphisms of $\manS$, $\Autv(\manS) \defeq \{\Phi^\manS\in\Diff(\manS) \mid  \Phi^\manS\circ\calR_{(h,g')}^\manS = \calR_{(h,g')}^\manS\circ\Phi^\manS,\ \pi_\manS\circ\Phi^\manS=\pi_\manS\}$ is isomorphic to the gauge group of $\manS$, $\Gau(\manS) \defeq \{\gamma^\manS : \manS\rightarrow  H\times G \mid \calR_{(h,g')}^{\manS*}\gamma^S=(h\inv, g^{\prime -1})\gamma^\manS(h,g')\}$. Each $\gamma^\manS\in\Gau(\manS)$ decomposes as $(\gamma^H,\gamma^G)$ and the equivariance relation of $\gamma^\manS$ implies
\begin{align*}
\calR_{(h,g')}^{\manS*}\gamma^H
&= h\inv\gamma^H h,
&
\calR_{(h,g')}^{\manS*}\gamma^G
&= g^{\prime -1}\gamma^Gg'.
\end{align*}
Hence, $\Gau(\manS)=\calH^\manS\times\calG^\manS$ where $\calH^\manS=\{\gamma^H : \manS\rightarrow H \mid \calR_{(h,g')}^{\manS*}\gamma^H = h\inv\gamma^H h\}$ and $\calG^\manS=\{\gamma^G : \manS \rightarrow G \mid \calR_{(h,g')}^{\manS*}\gamma^G= g^{\prime -1}\gamma^Gg'\}$.

Note that $\gamma^G$ is a $H$-invariant map. Thus it induces a map on $\manQ$.
\begin{proposition}\label{prop isom GT SQ}
There is an isomorphism between $\calG^\manQ$ and $\calG^\manS$ given by $\pi_H^*$.
\end{proposition}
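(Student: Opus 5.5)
We need to show that $\pi_H^*:\calG^\manQ\to\calG^\manS$, $\gamma^\manQ\mapsto\gamma^\manQ\circ\pi_H$, is well defined and bijective. It is automatically a group homomorphism, since the products in both groups are pointwise in $G$ and pulling back commutes with pointwise multiplication.

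\textbf{Well-definedness.} Take $\gamma^\manQ\in\calG^\manQ$ and set $\gamma^G\defeq\gamma^\manQ\circ\pi_H$. By Lemma~\ref{lemma Technical Lemma}\emph{(v)}, at the level of points one has
\[
\pi_H\circ\calR_{(h,g')}^\manS=\calR_{g'}^\manQ\circ\pi_H .
\]
Indeed, $\pi_H(ph,C(p,h)\inv gg')=[ph,C(p,h)\inv gg']=[p,gg']=[p,g]g'$. Hence
\[
\calR_{(h,g')}^{\manS*}\gamma^G(p,g)=\gamma^\manQ([p,g]g')=g'^{-1}\gamma^\manQ([p,g])g' ,
\]
which is exactly the equivariance defining $\calG^\manS$. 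Smoothness is clear, because $\pi_H$ is a smooth submersion.

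\textbf{Injectivity.} This follows from the surjectivity of $\pi_H$: if $\gamma^\manQ\circ\pi_H=\gamma'^\manQ\circ\pi_H$, then $\gamma^\manQ=\gamma'^\manQ$.

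\textbf{Surjectivity.} Take $\gamma^G\in\calG^\manS$. Setting $g'=e$ in its equivariance gives $\gamma^G(ph,C(p,h)\inv g)=\gamma^G(p,g)$, so $\gamma^G$ is constant on the fibers of $\pi_H$, which are exactly the $H$-orbits. It therefore descends to a map $\gamma^\manQ:\manQ\to G$ with $\gamma^\manQ([p,g])\defeq\gamma^G(p,g)$. This map is smooth, since $\pi_H:\manS\to\manQ$ is a principal $H$-bundle and hence admits local smooth sections. Equivalently, one can write $\gamma^\manQ=\gamma^G\circ i_e\circ(\text{local section})$, or use the quotient manifold property. The $G$-equivariance of $\gamma^\manQ$ comes from the same relation as above read backwards, $\gamma^\manQ([p,g]g')=\gamma^G(p,gg')=g'^{-1}\gamma^G(p,g)g'$. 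By construction $\pi_H^*\gamma^\manQ=\gamma^G$.

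\textbf{Main obstacle.} The only delicate point is the smooth descent through the quotient $\manS\to\manQ=\manS/H$. I would handle it by invoking the principal $H$-bundle structure of $\pi_H$ already recorded in diagram~\eqref{diag:SPQ}. Alternatively, the whole statement can be read off Corollary~\ref{corollary isom eq functions} / Prop.~\ref{prop isom tensorial forms}, with $V$ replaced by $G$ acting on itself by conjugation: $\calG^\manS$ consists of $H$-invariant, $G$-equivariant $0$-forms, and the map $\chi_{\manQ\manS}$ of Lemma~\ref{lemma chi_QS} inverts $\pi_H^*$. The same argument goes through because no linear structure was used for $0$-forms.
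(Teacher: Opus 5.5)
Your proof is correct, but its main line is not the paper's. The paper's proof is one line: $\calG^\manQ=\calF_\eqv(\manQ,G)\simeq\calF_\eqv(\manS,G)=\calG^\manS$ by Corollary~\ref{corollary isom eq functions}. So it inherits the result from the triangle $\chi_{\manS\manP},\chi_{\manQ\manS},\chi_{\manP\manQ}$ of Prop.~\ref{prop isom tensorial forms}, which passes through $\manP$ via $i_e$.

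You instead work directly with the quotient $\pi_H:\manS\to\manQ$:
\begin{itemize}
\item the point identity $\pi_H\circ\calR^\manS_{(h,g')}=\calR^\manQ_{g'}\circ\pi_H$ gives well-definedness;
\item surjectivity of $\pi_H$ gives injectivity;
\item $H$-invariance (the case $g'=e$) together with local sections of the principal $H$-bundle $\pi_H$ gives smooth descent.
\end{itemize}

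\textbf{What your route buys.} It is self-contained and shows directly that $\pi_H^*$ itself is the bijection; in the paper this needs the extra remark $\pi_H^*\circ\chi_{\manQ\manS}=\id$. It makes the smoothness of the descended map explicit. As you point out, it also avoids relying on a Corollary that is stated for a linear representation $V$, whereas here $G$ acts on itself by conjugation; the paper applies it to $G$ without comment.

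\textbf{What the paper's route buys.} It places the result in the same uniform framework that gives $\calG^\manP\simeq\calG^\manQ$ (Prop.~\ref{prop isom GT PQ}). That in turn yields $\calG^\manP\simeq\calG^\manS$ via $i_e^*$ (Prop.~\ref{prop isom twisted GT SP}).

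\textbf{Two small slips.}
\begin{itemize}
\item Lemma~\ref{lemma Technical Lemma}\emph{(v)} is a statement about tangent maps, not about points. This is harmless, since you verify the point identity directly.
\item The parenthetical ``$\gamma^\manQ=\gamma^G\circ i_e\circ(\text{local section})$'' is wrong as written. The map $\pi_H\circ i_e$ need not be surjective onto $\manQ$: for $C=e$ its image is only the slice $\{[p,e]\}\simeq M\times\{e\}$. Use a local section of $\pi_H$ directly, as in your main argument. The correct formula involving $i_e$ is $\gamma^\manQ([p,g])=g\inv\,(i_e^*\gamma^G)(p)\,g$.
\end{itemize}
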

\begin{proof}
The proof is analogous to the one of Prop.~\ref{prop isom GT PQ} and is a direct consequence of Corollary~\ref{corollary isom eq functions}:
\begin{align*}
\calG^\manQ
=\calF_\eqv(\manQ,G)
\simeq\calF_\eqv(\manS,G)
=\calG^\manS,
\end{align*}
where the equivariance on $\calF_\eqv(\manS,G)$ is a $G$-equivariance $H$-invariance.
\end{proof}

An analogous result holds for $H$-gauge transformations.
\begin{proposition}\label{prop isom GT SP}
There is an isomorphism between $\calH^\manP$ and $\calH^\manS$ given by $\pi_G^*$ and $i_e^*$.
\end{proposition}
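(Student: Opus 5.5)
The plan is to show directly that $\pi_G^*$ and $i_e^*$ are mutually inverse maps between $\calH^\manP$ and $\calH^\manS$. This mirrors Prop.~\ref{prop isom GT SQ}, with the roles of $\manP$ and $\manQ$ (and of $H$ and $G$) exchanged. Here $\pi_G:\manS\to\manP$, $(p,g)\mapsto p$, is the projection of $\manS$ viewed as the trivial $G$-bundle over $\manP$.

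\textbf{Step 1: $\pi_G^*$ maps $\calH^\manP$ into $\calH^\manS$.} Given $\gamma^\manP\in\calH^\manP$, set $\gamma^H\defeq\pi_G^*\gamma^\manP$, so that $\gamma^H(p,g)=\gamma^\manP(p)$. Then
\begin{align*}
\calR_{(h,g')}^{\manS*}\gamma^H(p,g)=\gamma^H(ph,C(p,h)\inv gg')=\gamma^\manP(ph)=h\inv\gamma^\manP(p)h .
\end{align*}
The cocycle enters only the $G$-component of the action, and $\gamma^H$ does not depend on that component. Hence $\gamma^H$ is $G$-invariant and satisfies the required $H$-equivariance, so $\gamma^H\in\calH^\manS$.

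\textbf{Step 2: $i_e^*$ maps $\calH^\manS$ into $\calH^\manP$.} Given $\gamma^H\in\calH^\manS$, define $\gamma^\manP\defeq i_e^*\gamma^H$, that is, $\gamma^\manP(p)=\gamma^H(p,e)$. The $H$-action on $\manS$ moves $(p,e)$ to $(ph,C(p,h)\inv)$, which is not $(ph,e)$. This is the only subtle point, and it is resolved using $G$-invariance:
\begin{align*}
\gamma^H(ph,e)=\gamma^H\bigl(\calR^\manS_{(h,C(p,h))}(p,e)\bigr)=h\inv\gamma^H(p,e)h .
\end{align*}
Here $\calR^\manS_{(h,C(p,h))}(p,e)=(ph,C(p,h)\inv C(p,h))=(ph,e)$, and the equivariance of $\gamma^H$ under $(h,g')$ does not depend on $g'$. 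Therefore $\gamma^\manP\in\calH^\manP$. This is the same mechanism as in Lemma~\ref{lemma Technical Lemma}\emph{(ii)}, but at the level of points, so no derivative terms appear.

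\textbf{Step 3: the two maps are mutually inverse.}
\begin{itemize}
\item Since $\pi_G\circ i_e=\id_\manP$, we get $i_e^*\pi_G^*=\id$.
\item Conversely, for $\gamma^H\in\calH^\manS$, $G$-invariance gives $\gamma^H(p,g)=\gamma^H(\calR^\manS_{(e,g)}(p,e))=\gamma^H(p,e)$, using $C(p,e)=e$. Hence $\pi_G^*i_e^*\gamma^H=\gamma^H$.
\end{itemize}

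Finally, both maps respect the pointwise product in $H$, so they are group isomorphisms. The main obstacle is the non-commutation of $i_e$ with the $H$-actions. It is handled entirely by the choice $g'=C(p,h)$ in Step 2, which mirrors the role of Lemma~\ref{lemma Technical Lemma}\emph{(vi)} in Lemma~\ref{lemma chi_PQ}.
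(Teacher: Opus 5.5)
Your proof is correct and follows essentially the same route as the paper. You show that $\pi_G^*\gamma^\manP$ lies in $\calH^\manS$ because it ignores the $G$-component, that $i_e^*\gamma^H$ lies in $\calH^\manP$ by realizing $(ph,e)$ as a translate of $(p,e)$, and that the maps are mutually inverse via $\pi_G\circ i_e=\id_\manP$ and $G$-invariance. The only difference is in the middle step: the paper applies $\calR^\manS_{(h,e)}$ to $(p,C(p,h))$ and then uses $G$-invariance, whereas you use $\calR^\manS_{(h,C(p,h))}(p,e)$ in one step, which is equivalent.
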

\begin{proof}
On the one hand,
\begin{align*}
(i_e^*\gamma^H)(ph)
&= \gamma^H(ph,e)
= h\inv\gamma^H(p,C(p,h))h
= h\inv\gamma^H(p,e)h
= h\inv( i_e^*\gamma^H)(p)h.
\end{align*}
On the other hand,
\begin{align*}
(\pi_G^*\gamma^\manP)(ph,C(p,h)\inv gg')
= \gamma^\manP(ph)
= h\inv\gamma^\manP(p)h
= h\inv(\pi_G^*\gamma^\manP)(p,g)h,
\end{align*}
since, as stated before, $\pi_G^*\gamma^\manP$, being $G$-invariant, does not depend on the second variable of its argument. Finally, note that $\pi_G\circ i_e=\id_\manP$ and $i_e\circ\pi_G(p,g)=(p,e)$, so $(\pi_G\circ i_e)^*$ is the identity on maps on $\manP$ and $(i_e\circ\pi_G)^*$ is the identity on $G$-invariant maps on $\manS$.
\end{proof}

Finally, the following result provides an interpretation for twisted gauge transformations on $\manP$.
\begin{proposition}\label{prop isom twisted GT SP}
There is an isomorphism between $\calG^\manP$ and $\calG^\manS$ given by $i_e^*$.
\end{proposition}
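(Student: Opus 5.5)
The plan is to obtain the isomorphism by composing two correspondences already available, and then to check directly that the composite equals $i_e^*$. By Prop.~\ref{prop isom GT SQ}, $\pi_H^*:\calG^\manQ\to\calG^\manS$ is an isomorphism. By Corollary~\ref{corollary isom eq functions}, in the form used in the proof of Prop.~\ref{prop isom GT PQ}, $\chi_{\manP\manQ}=(\pi_H\circ i_e)^*:\calG^\manQ\to\calG^\manP$ is an isomorphism. Since $(\pi_H\circ i_e)^*=i_e^*\circ\pi_H^*$, we can write
\begin{align*}
i_e^* = \chi_{\manP\manQ}\circ(\pi_H^*)\inv
\end{align*}
on $\calG^\manS$. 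Hence $i_e^*$ is a bijection $\calG^\manS\to\calG^\manP$.

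This relies on $(\pi_H^*)\inv$ being an honest inverse. That holds because every $\gamma^G\in\calG^\manS$ is $H$-invariant, so it descends to a well-defined map on $\manQ=\manS/H$. The descended map is $G$-equivariant by conjugation because of the $G$-part of the equivariance of $\gamma^G$.

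For a self-contained check, I will also verify directly that $i_e^*\gamma^G$ lies in $\calG^\manP$. Take $\gamma^G\in\calG^\manS$. Using $(ph,e)=\calR^\manS_{(h,C(p,h))}(p,e)$ together with the equivariance $\calR^{\manS*}_{(h,g')}\gamma^G=g'^{-1}\gamma^G g'$, we get
\begin{align*}
(i_e^*\gamma^G)(ph)=\gamma^G(ph,e)=C(p,h)\inv\gamma^G(p,e)C(p,h),
\end{align*}
which is exactly the twisted equivariance defining $\calG^\manP$.

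For the inverse, given $\zeta\in\calG^\manP$, I define $\gamma^G(p,g)\defeq g\inv\zeta(p)g$, which is the analogue of $\chi_{\manS\manP}$ for the conjugation action. Two checks are needed.
\begin{itemize}
\item $G$-equivariance is immediate from the formula.
\item For $H$-invariance, compute $\gamma^G(ph,C(p,h)\inv g)=g\inv C(p,h)\zeta(ph)C(p,h)\inv g=g\inv\zeta(p)g$, using the twisted equivariance of $\zeta$.
\end{itemize}
Finally, $i_e^*$ of this map returns $\zeta$. Conversely, any $\gamma^G\in\calG^\manS$ satisfies $\gamma^G(p,g)=g\inv\gamma^G(p,e)g$ by $G$-equivariance, so it is recovered from $i_e^*\gamma^G$.

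Both maps respect pointwise multiplication in $G$, so the bijection is a group isomorphism. The only delicate point is the sign and order convention in $\calR^\manS$, namely that $(ph,e)$ is reached from $(p,e)$ by the element $(h,C(p,h))$ and not by its inverse. I will settle this first, using $\calR^\manS_{(h,g')}(p,g)=(ph,C(p,h)\inv gg')$, which gives $(ph,C(p,h)\inv C(p,h))=(ph,e)$ as required.
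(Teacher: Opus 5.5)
Your proposal is correct, and its first paragraph is exactly the paper's argument: compose the isomorphism $\chi_{\manP\manQ}=i_e^*\pi_H^*:\calG^\manQ\to\calG^\manP$ of Prop.~\ref{prop isom GT PQ} with the isomorphism $\pi_H^*:\calG^\manQ\to\calG^\manS$ of Prop.~\ref{prop isom GT SQ}. Your added direct check is also correct, namely the identity $(ph,e)=\calR^\manS_{(h,C(p,h))}(p,e)$ together with the explicit inverse $\zeta\mapsto\bigl((p,g)\mapsto g\inv\zeta(p)g\bigr)$; it has the merit of not relying on Corollary~\ref{corollary isom eq functions}, which is stated for linear representations rather than for the conjugation action of $G$ on itself.
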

\begin{proof}
This comes from Prop.~\ref{prop isom GT PQ} and \ref{prop isom GT SQ}. Given $\zeta^\manP\in\calG^\manP$, there is a unique $\gamma^\manQ\in\calG^\manQ$ such that $\zeta^\manP=i_e^*\pi_H^*\gamma^\manQ$. Since $\gamma^\manQ\in\calG^\manQ\mapsto\pi_H^*\gamma^\manQ \rdefeq \gamma^G\in\calG^\manS$ is an isomorphism, the correspondence $\zeta^\manP=i_e^*\gamma^\manS$ is 1:1.
\end{proof}

%%%%%%%%%%%%%%%%
\subsection{\texorpdfstring{Some Comments about $\manS$ and $\manQ$}{Some Comments about S and Q}}
\label{subsec Interpretation Comments}
%%%%%%%%%%%%%%%%

The current and previous sections highlight the origin of the standard and twisted gauge structure on $\manP$. $\manP(M,H)$ is a subbundle of a specific bundle $\manS(M,H\times G)$ with the same trivializing sections. Since $\manS=\manP\times G$ is trivial (as a manifold) along $G$, $\manP$ foliates $\manS$. In particular, $\manP$ is embedded in $\manS$ through $i_e:\manP\rightarrow\manS,p\mapsto(p,e)$. The standard and twisted gauge structures on $\manP$ appear as being the pullback of the full (standard) gauge structure on $\manS$ through $i_e$.

Note that $\manS$ is trivial along $G$ as a manifold but it is not trivial as a principal bundle (\textit{i.e.} as a $(H\times G)$-space). This is due to the right action of $H\times G$ on $\manS$ considered here (specifically the right action of $H$) which connects $H$ and $G$: $\calR_{(h,g')}^\manS(p,g)=(ph,C(p,h)\inv gg')$. In terms of $\calR_{(e,g)}^\manS$ and of the simpler (“non-mixing”) action $\widehat{\calR}$ of $H$ on $\manS$ defined as $\widehat{\calR}_h(p,g)=(ph,g)$, $\calR_{(h,e)}^\manS$ reads
\begin{align*}
\calR_{(h,e)}^\manS(p,g)
= (ph,C(p,h)\inv g)
= (ph,gg\inv C(p,h)\inv g)
= \widehat{\calR}_h\circ\calR_{(e,g\inv C(p,h)\inv g)}^\manS(p,g).
\end{align*}
(Conversely, $\widehat{\calR}_h(p,g)=\calR_{(h,g\inv C(p,h) g)}^\manS(p,g)$.) This action carries the action\footnote{$g\mapsto C(p,h)\inv g$ is an action of $H$ on $G$ which does not preserve the group structure of $G$, \emph{i.e.} this action is not a morphism of $G$.} of $H$ on $G$ through $C$ which induces twisted gauge fields. Additionally, $i_e$ has the right interplay with $\calR^\manS$ in order to generate the twisted equivariance on $\manP$. Let $\varphi^\manS:\manS\rightarrow G$ be both $G$-equivariant and $H$-invariant, and, let $i_e^*\varphi^\manS$ be the map it induces on $\manP$.
\begin{align*}
(i_e^*\varphi^\manS)(ph)
&= \varphi^\manS(ph,e)
= \varphi^\manS(p,C(p,h))
= C(p,h)\inv\cdotact\varphi^\manS(p,e)
= C(p,h)\inv\cdotact(i_e^*\varphi^\manS)(p).
\end{align*}
A deeper interpretation of this equation is given in terms of cohomology in Sec.~\ref{subsec Interpretation Comments Again}.

Following the same interpretation, $\manQ$ is seen as the space carrying only the $G$-structure inherited from $\manS$ (\textit{i.e.} the twisted structure on $\manP$) and forgetting about the $H$-structure on $\manS$ (\textit{i.e.} the standard structure on $\manP$). Indeed, contrary to $\manS$, $\manQ$ is not a $H$-space. However, since $\manQ$ is the space of orbits of $\calR_{(h,e)}^\manS$ (which explicitly depends on $C$), the topology of $\manQ$ is expected to depend on $C$. See Sec.~\ref{subsec Isom Changing Cocycle} for more details. This provides three different approaches of $C$. The cocycle deals with
\begin{itemize}
\item the algebraic structure on $\manP$,
\item the geometry of $\manS$,
\item the topology of $\manQ$.
\end{itemize}
Note that, since $\manP$ is embedded into $\manS$ through $i_e$, one can equally see standard and twisted fields on $\manP$ as standard and twisted fields on $\manS$. Indeed, one can pull back these fields from $\manP$ to $\manS$ through $\pi_G$, for instance, the connections, according to
\begin{equation*}
\begin{tikzcd}
\manS\arrow[rr,"i_e\circ\pi_G"]\arrow[rd,"\pi_G"']& &\manS & &\tilde{\omega}^\manS=(\omega^\kh,\tilde{\omega}^\kg)& &\omega^\manS=(\omega^\kh,\omega^\kg)\arrow[ll,"\pi_G^*i_e^*"',mapsto]\arrow[ld,"i_e^*",mapsto]\\
 & \manP\arrow[ru,"i_e"'] & & & & (\omega^\manP_H,\omega^\manP_G)\arrow[lu,"\pi_G^*",mapsto]& 
\end{tikzcd}
\end{equation*}
where $\omega^\manS$ is a standard connection on $\manS$, $\omega^\manP_H$ stands for a standard connection on $\manP$, $\omega^\manP_G$ stands for a twisted connection on $\manP$ and  $\tilde{\omega}^\manS=(\omega^\kh,\tilde{\omega}^\kg)$ stands for a standard and a twisted connection on $\manS$. The same correspondence holds for equivariant maps $\varphi^\manS$ and tensorial forms $\alpha^\manS$. The intuition behind this move is that the pulled-back fields correspond to the standard fields evaluated at the specific point defined by the embedding $i_e$: $\tilde{\varphi}^\manS(p,g) \defeq \varphi^\manS(p,e)$ for any $g\in G$.
\begin{definition}
A twisted form on $\manS$ (or twisted $H$-field) is a $C(H)$-equivariant $G$-invariant form which is $G$-horizontal (\textit{i.e.} it vanishes on vectors of the form $0\oplus X_g^G$) and satisfies the relevant property with respect to vertical tangent vectors to $\manP$. For instance, a twisted connection $\omega_C^\manS\in\Omega^1(\manS,\kh\oplus\kg)$ satisfies

\begin{itemize}
\item $\calR_{(e,g)}^{\manS*}\omega_C^\manS=\omega_C^\manS$ ($G$-equivariance),
\item $\calR_{(h,e)}^{\manS*}\omega_{C|(p,g)}^\manS=\Ad_{C(p,h)\inv}\omega_{C|(p,g)}^\manS+C(p,h)\inv\dr_\manP C_{|(p,h)}\circ\Tg_{(p,g)}\pi_G$ ($C(H)$-equivariance),
\item $\omega_{C|(p,g)}^\manS((\ka,\kb)_{(p,g)}^\ver)=\dr_H C_{|(p,e)}(\ka)$ (vertical normalization w.r.t. $\Ver\manP\oplus 0$ and horizontality w.r.t. $0\oplus\Tg G$).
\end{itemize}
\end{definition}
\begin{proposition}\label{prop twisted fields on S}
$\pi_G^*i_e^*$ maps standard $H$-fields to themselves and standard $G$-fields to twisted $H$-fields.
\end{proposition}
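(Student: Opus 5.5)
The proof will be organised around the single identity $\pi_G\circ i_e\circ\pi_G=\pi_G$ together with the observation $i_e\circ\pi_G(p,g)=(p,e)$. These reduce every claim to a property on $\manP$, which can then be transported back to $\manS$ through $\pi_G^*$. Accordingly, I will read $\pi_G^*i_e^*$ as the composition of two operations. First, $i_e^*$ restricts a field on $\manS$ to the embedded copy of $\manP$. Second, $\pi_G^*$ extends a field from $\manP$ to $\manS$ as a $G$-invariant, $G$-horizontal field.

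\emph{Standard $H$-fields.} Let $\omega^\kh$ be a $G$-invariant, $G$-horizontal $\kh$-valued form on $\manS$ satisfying \eqref{eq VN omega LieH}. The argument extends verbatim to $H$-equivariant maps, tensorial forms and $\calH^\manS$. By Prop.~\ref{prop isom standard connections SP}, one has $\omega^\kh=\pi_G^*\omega^\manP_H$ for a unique standard connection $\omega^\manP_H$. Then
\[
\pi_G^*i_e^*\omega^\kh=\pi_G^*(\pi_G\circ i_e)^*\omega^\manP_H=\pi_G^*\omega^\manP_H=\omega^\kh,
\]
since $\pi_G\circ i_e=\id_\manP$. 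For gauge elements the same argument is given by Prop.~\ref{prop isom GT SP}. Hence standard $H$-fields are fixed by $\pi_G^*i_e^*$.

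\emph{Standard $G$-fields.} Let $\omega^\kg=\pi_H^*\omega^\manQ$, and set $\omega^\manP_G\defeq i_e^*\omega^\kg$. By Prop.~\ref{prop isom twisted connections SP} and Prop.~\ref{prop isom connections PQ}, $\omega^\manP_G$ is a twisted connection on $\manP$. I then check the three defining properties of a twisted form on $\manS$ for $\tilde\omega^\kg\defeq\pi_G^*\omega^\manP_G$.

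\begin{enumerate}
\item \emph{$G$-invariance.} Since $\pi_G\circ\calR^\manS_{(e,g)}=\pi_G$, we get $\calR_{(e,g)}^{\manS*}\tilde\omega^\kg=\tilde\omega^\kg$.
\item \emph{$G$-horizontality.} We have $\Tg\pi_G(0_p\oplus X_g^G)=0$, so $\tilde\omega^\kg$ vanishes on vectors of the form $0\oplus X^G_g$.
\item \emph{Vertical normalization.} Using $\Tg_{(p,g)}\pi_G\big((\ka,\kb)^\ver_{(p,g)}\big)=\ka_p^\ver$, we obtain $\tilde\omega^\kg\big((\ka,\kb)^\ver\big)=\omega^\manP_G(\ka^\ver_p)=\dr_HC_{|(p,e)}(\ka)$.
\item \emph{$C(H)$-equivariance.} Since $\pi_G\circ\calR^\manS_{(h,e)}=\calR^\manP_h\circ\pi_G$, we have $\calR_{(h,e)}^{\manS*}\pi_G^*\omega^\manP_G=\pi_G^*\calR_h^{\manP*}\omega^\manP_G$. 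Applying the twisted equivariance of $\omega^\manP_G$ then gives
\[
\Ad_{C(p,h)\inv}\tilde\omega^\kg_{(p,g)}+C(p,h)\inv\dr_\manP C_{|(p,h)}\circ\Tg_{(p,g)}\pi_G.
\]
\end{enumerate}

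For matter fields and tensorial forms, the same computation uses Lemma~\ref{lemma chi_PQ} and the identity $(i_e^*\varphi^\manS)(ph)=C(p,h)\inv\cdotact(i_e^*\varphi^\manS)(p)$ displayed just above. Gauge elements $\gamma^G$ are treated through Prop.~\ref{prop isom twisted GT SP}.

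The main obstacle is the $C(H)$-equivariance step, specifically the inhomogeneous term. One must verify that the correction term $C(p,h)\inv\dr_\manP C$ is correctly transported by $\Tg\pi_G$, and that no extra contribution arises from the $G$-component of $\Tg\calR^\manS_{(h,e)}$ described in Lemma~\ref{lemma Technical Lemma}(i). I would handle this by working entirely with $\pi_G$-pullbacks rather than computing $\Tg\calR^\manS$ directly. Since $\pi_G$ forgets the $G$-factor, the term $\Tg_{C(p,h)\inv}R_g\circ\dr_\manP C\inv(X^\manP_p)$ is annihilated, and the formula follows from the intertwining relation $\pi_G\circ\calR^\manS_{(h,g')}=\calR^\manP_h\circ\pi_G$ alone.
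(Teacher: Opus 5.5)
Your proposal is correct and follows essentially the same route as the paper. You split $\omega^\manS$ into $\pi_G^*\omega^\manP_H$ and $\pi_H^*\omega^\manQ$, use $\pi_G\circ i_e=\id_\manP$ to show the $H$-part is unchanged, and verify $G$-invariance, vertical normalization and $C(H)$-equivariance of $\pi_G^*i_e^*\pi_H^*\omega^\manQ$ through $\pi_G\circ\calR^\manS_{(e,g)}=\pi_G$ and $\pi_G\circ\calR^\manS_{(h,e)}=\calR^\manP_h\circ\pi_G$, exactly as the paper does; your separate check of $G$-horizontality is only a small addition that the paper folds into its vertical-normalization step.
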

\begin{proof}
The proof is given for connections and a similar proof holds for fields of other nature. Let $\omega^\manS$ be a standard connection on $\manS$. Following Prop.~\ref{prop isom connections SQ}, \ref{prop isom standard connections SP} and \ref{prop isom twisted connections SP}, this connection decomposes as $\omega^\manS=\pi_G^*\omega^\manP+\pi_H^*\omega^\manQ$ where $\omega^\manP$ (resp. $\omega^\manQ$) is a standard\footnote{Beware the notations which are different from Sec.~\ref{subsec Isom Connection PQ}} $H$-connection (resp. $G$-connection) on $\manP$ (resp. $\manQ$). Additionally, $\omega_C^\manP \defeq i_e^*\pi_H^*\omega^\manQ$ is a twisted connection on $\manP$. On the one hand,
\begin{align*}
\pi_G^*i_e^*(\pi_G^*\omega^\manP)
&= \pi_G^*(\pi_G\circ i_e)^*\omega^\manP
= \pi_G^*(\id_\manP)^*\omega^\manP
= \pi_G^*\omega^\manP
\end{align*}
So, the component of $\omega^\manS$ related to $H$ is unchanged. On the other hand,
\begin{align*}
\pi_G^*i_e^*(\pi_H^*\omega^\manQ)
&= \pi_G^*\omega_C^\manP 
\rdefeq \omega_C^\manS.
\end{align*}
This new field satisfies
\begin{align*}
\calR_{(e,g)}^{\manS*} \omega_C^\manS
&= \calR_{(e,g)}^{\manS*}\pi_G^*\omega_C^\manP
= (\pi_G\circ\calR_{(e,g)}^\manS)^*\omega_C^\manP
= \pi_G^*\omega_C^\manP=\omega_C^\manS,
\end{align*}
\textit{i.e.} $\omega_C^\manS$ is $G$-invariant, and
\begin{align*}
\calR_{(h,e)}^{\manS*}\omega_{C|(p,g)}^\manS 
&= \calR_{(h,e)}^{\manS*}\pi_G^*\omega_{C|(p,g)}^\manP
= (\pi_G\circ\calR_{(h,e)}^\manS)^*\omega_{C|(p,g)}^\manP
= (\calR_h^\manP\circ\pi_G)^*\omega_{C|(p,g)}^\manP
= \pi_G^*\calR_h^{\manP*}\omega_{C|(p,g)}^\manP
\\
&= \calR_h^{\manP*}\omega_{C|p}^\manP\circ\Tg_{(p,g)}\pi_G
= (\Ad_{C(p,h)\inv}\omega_{C|p}^\manP + C(p,h)\inv\dr_\manP C_{|(p,h)})\circ\Tg_{(p,g)}\pi_G
\\
&= \Ad_{C(p,h)\inv}\omega_{C|(p,g)}^\manS + C(p,h)\inv\dr_\manP C_{|(p,h)}\circ\Tg_{(p,g)}\pi_G,
\end{align*}
\textit{i.e.} $\omega_C^\manS$ is $C(H)$-equivariant. Let $(\ka,\kb)\in\kh\oplus\kg$.
\begin{align*}
\omega_{C|(p,g)}^\manS((\ka,\kb)_{(p,g)}^\ver)
&= \pi_G^*\omega_{C|(p,g)}^\manP \left(\ka_p^\ver \oplus \left( \Tg_eR_g\circ(\dr_HC\inv )_{(p,e)}(\ka) + \kb_g \right)\right)
= \omega_{C|p}^\manP(\ka_p^\ver)
\\
&= \dr_H C_{|(p,e)}(\ka).
\end{align*}
\end{proof}

%%%%%%%%%%%%%%%%
\section{Concerning Cocycles}
\label{sec Cocycles}
%%%%%%%%%%%%%%%%

%%%%%%%%%%%%%%%%
\subsection{Group Cohomology}
\label{subsec Group Cohomology}
%%%%%%%%%%%%%%%%

The cocycle $C$ is closely related to an instance of non-Abelian group cohomology. Let us recall that given a group $H$ and a $H$-module $\M$, one can define the group cohomology of $H$ with coefficients in $\M$ in the following way. The space of $n$-cochains is $\calF(H^n,\M)$, the space of maps from $H^n$ to $\M$, the differential $\delta^n:\calF(H^n,\M)\rightarrow\calF(H^{n+1},\M)$ is defined as
\begin{align*}
(\delta^n f)(h_1,\ldots,h_{n+1})
&= 
\begin{multlined}[t]
a_{h_1} \cdotact f(h_2,\ldots,h_{n+1}) + \sum_{i=1}^n(-1)^i f(h_1,\ldots,h_{i-1},h_ih_{i+1},h_{i+2}, \ldots, h_{n+1})
\\
+ (-1)^{n+1}f(h_1,\ldots,h_n),
\end{multlined}
\end{align*}
where $a_h$ is the action of $H$ on $\M$. One defines $n$-cocycles as elements of $\ker \delta^n$ and $n$-coboundaries as elements of $\im \delta^{n-1}$. Since $\delta^{n+1}\circ\delta^n=0$, $\im\delta^{n-1}\subset\ker\delta^n$, which enables defining the equivalence relation “up to a coboundary”: $f\sim f+\delta^{n-1} g,\ \forall f\in\ker\delta^n,\ \forall g \in\calF(H^{n-1},\M)$. The spaces of equivalence classes under this relation are the cohomology groups\footnote{Note that this notion of cohomology corresponds to the singular cohomology of the Eilenberg-Maclane space $K(H,1)$ with coefficients in $\M$, see \cite{masson_introduction_2008}} of $H$ with coefficients in $\M$. It is worth giving an explicit expression for $\delta^0$ and $\delta^1$:
\begin{align*}
(\delta^0 f)(h_1)
&= a_{h_1}\cdotact f - f,
&
(\delta^1 f)(h_1,h_2)
&= a_{h_1}\cdotact f(h_2)- f(h_1h_2)+f(h_1).
\end{align*}

The group cohomology admits a generalization when $\M$ is a (not necessarily Abelian) group supporting an action of $H$. Let us define a 1-cocycle as a map $\widehat{C}:H\rightarrow \M$ satisfying\footnote{The group action in $\M$ is written with the multiplicative convention here, contrary to the additive convention in the Abelian case.}
\begin{align}
\label{eq cocycle relation NAGC}
\widehat{C}(h_1h_2)
= \widehat{C}(h_1)(a_{h_1}\cdotact\widehat{C}(h_2)).
\end{align}
Note that this is equivalent to $(a_{h_1}\cdotact\widehat{C}(h_2))\widehat{C}(h_1h_2)\inv\widehat{C}(h_1)=e$, which corresponds to $(\delta^1\widehat{C})(h_1,h_2)=0$ in the Abelian case, hence the name “1-cocycle”. Let us define an equivalence relation on cocycles by
\begin{align}
\label{eq cocycle equivalence NAGC}
\widehat{C}\sim\widehat{C}' 
\Longleftrightarrow 
\exists f\in\M,\ \forall h\in H,\ \widehat{C}'(h) = f\inv\widehat{C}(h)(a_h\cdotact f).
\end{align}
Once again, this reduces to the addition of a coboundary in the Abelian case. This defines the non-Abelian group cohomology of $H$ with coefficients in $\M$. Notice that this group cohomology exists only for $n=0,1$.

Let us consider the specific case where $\M=\calF(\manP,G)$. This is an (\textit{a priori} non-Abelian) group whose product is inherited from the product on $G$ by pointwise multiplication. Also, $\calF(\manP,G)$ supports a right action of $H$ induced by the right action of $H$ on $\calP$: $a_h\cdotact f=\calR_h^{\manP*}f$, \textit{i.e.} $(a_h\cdotact f)(p)=f(ph)$. Then, any 1-cocycle $\widehat{C}\in\calF(H,\calF(\manP,G))$ is equivalent to a map $C\in\calF(\manP\times H,G)$ satisfying $C(p,h_1h_2)=C(p,h_1)C(ph_1,h_2)$ (as a consequence of (\ref{eq cocycle relation NAGC})), with the correspondence given by $C(p,h)=(\widehat{C}(h))(p)$. In this sense, the cocycle defined in Sec.~\ref{subsec Twisted Gauge Fields} is a 1-cocycle in the non-Abelian group cohomology of $H$ with coefficients in $\calF(\manP,G)$. This approach provides a natural notion of equivalence of cocycles inherited from (\ref{eq cocycle equivalence NAGC}):
\begin{align*}
C\sim C'
\Longleftrightarrow 
\exists f\in\calF(\manP,G),\ \forall (p,h)\in \manP\times H,\ C'(p,h) = f(p)\inv C(p,h)f(ph).
\end{align*}
One checks trivially that $C'$ satisfies the cocycle relation if $C$ does. In the following, given a cocycle $C$ and a map $f\in\calF(\manP,G)$, the equivalent cocycle mentioned above is denoted $C^f:(p,h)\mapsto f(p)\inv C(p,h)f(ph)$.

%%%%%%%%%%%%%%%%
\subsection{Twisted Associated Bundles with Different Cocycles}
\label{subsec Isom Changing Cocycle}
%%%%%%%%%%%%%%%%

This section points out several results on associated bundles through related cocycles. First is shown the role of equivalent cocycles (up to a coboundary).

\begin{proposition}\label{prop isom associated up to coboundary}
Let $f\in\calF(\manP,G)$ and $C,C^f$ be two equivalent cocycles. Then there exists an isomorphism of associated bundles:
\begin{align*}
\manP \times_{C^f}V \simeq \manP \times_C V.
\end{align*}
\end{proposition}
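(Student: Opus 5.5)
The plan is to write down an explicit fiberwise map at the level of representatives and to check that it descends to the quotients. Recall that $\manP\times_C V$ is the quotient of $\manP\times V$ by $(p,v)\sim(ph,C(p,h)\inv\cdotact v)$. Likewise, $\manP\times_{C^f}V$ is the quotient by $(p,v)\sim(ph,C^f(p,h)\inv\cdotact v)$, where $C^f(p,h)=f(p)\inv C(p,h)f(ph)$.

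I propose the map
\begin{align*}
\Psi_f : \manP\times_{C^f}V &\rightarrow \manP\times_C V,
&
[p,v]_{C^f} &\mapsto [p, f(p)\cdotact v]_C .
\end{align*}

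\textbf{Well-definedness.} Take the other representative $(ph,C^f(p,h)\inv\cdotact v)$. It is sent to
\begin{align*}
\bigl(ph,\ f(ph)C^f(p,h)\inv\cdotact v\bigr) = \bigl(ph,\ f(ph)f(ph)\inv C(p,h)\inv f(p)\cdotact v\bigr) = \bigl(ph,\ C(p,h)\inv\cdotact (f(p)\cdotact v)\bigr).
\end{align*}
This is $C$-equivalent to $(p,f(p)\cdotact v)$. So $\Psi_f$ does not depend on the representative. This cancellation is exactly where the coboundary form of $C^f$ is used, and it is the only nontrivial step.

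\textbf{Bijectivity.} The inverse is $[p,w]_C\mapsto[p,f(p)\inv\cdotact w]_{C^f}$. Its well-definedness follows from the same computation read backwards, using $C(p,h)=f(p)C^f(p,h)f(ph)\inv$.

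\textbf{Bundle isomorphism.} Both maps cover $\id_M$, since the projections are $[p,v]\mapsto\pi_\manP(p)$. They are linear on fibers because $\rho_G(f(p))$ is linear. They are smooth because they are induced by the smooth map $(p,v)\mapsto(p,f(p)\cdotact v)$ on $\manP\times V$, and that map is equivariant for the two free proper $H$-actions defining the quotients. Locally, over a trivializing section $\sigma$, $\Psi_f$ is just multiplication by $\rho_G(f(\sigma(x)))$, which is a smooth family of linear isomorphisms.

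I expect no real obstacle. The only points needing care are the order of the factors in $C^f$ and the convention that the equivalence uses $C(p,h)\inv$. If those conventions were reversed, one would use $f(p)\inv$ in place of $f(p)$. As an alternative route, one can observe that $(p,g)\mapsto(p,f(p)g)$ intertwines the $\manS$-actions built from $C^f$ and $C$, and then invoke Proposition~\ref{prop isom associated bdls}. The direct argument above is shorter, so that is the one I would write.
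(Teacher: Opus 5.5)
Your proposal is correct and is essentially the paper's proof. The paper also uses the map $\eta^f:(p,v)\mapsto(p,f(p)\cdotact v)$ on $\manP\times V$ and checks that it intertwines the $C^f$- and $C$-twisted $H$-actions, which is exactly your well-definedness computation; it then takes $[\eta^{f\inv}]$ as the inverse, and your remarks on smoothness, fiberwise linearity and covering $\id_M$ fill in details the paper leaves implicit.
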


\begin{proof}
Let $\eta^f:\manP \times V\rightarrow\manP \times V$ defined by $\eta^f:(p,v)\mapsto(p,f(p)\cdotact v)$ (where the representation $\rho$ of $G$ on $V$ is implied). Let $\calR_h^C:\manP \times V\rightarrow\manP \times V$ be the right action of $H$ on $\manP\times V$ associated with $C$, \textit{i.e.} $\calR_h^C:(p,v)\rightarrow(ph,C(p,h)\inv\cdotact v)$. Identically, $\calR_h^{C^f}(p,v) \defeq (ph,C^f(p,h)\inv\cdotact v)$. The following diagram is commutative:
\begin{equation*}
\begin{tikzcd}
\manP\times V\arrow[rr,"\calR_h^{C^f}"]\arrow[dd,"\eta^f"]& &\manP\times V\arrow[dd,"\eta^f"]\\
& & \\
\manP\times V\arrow[rr,"\calR_h^C"]& &\manP\times V
\end{tikzcd}
\end{equation*}
On the one hand,
\begin{align*}
(p,v) \xmapsto{\eta^f} 
(p,f(p)\cdotact v)
\xmapsto{\calR_h^C} 
(ph,C(p,h)\inv f(p)\cdotact v),
\end{align*}
and, on the other hand,
\begin{align*}
(p,v)
\xmapsto{\calR_h^{C^f}} 
(ph,C^f(p,h)\inv\cdotact v)
= (ph,f(ph)\inv C(p,h)\inv f(p)\cdotact v)
\xmapsto{\eta^f }
(ph,C(p,h)\inv f(p)\cdotact v).
\end{align*}
As a consequence, $\eta^f$ induces an isomorphism $[\eta^f]:\manP\times_{C^f}V\rightarrow\manP\times_C V$ with inverse $[\eta^{f\inv}]$.
\end{proof}

\begin{remark}
This induces a map from elements of the cohomology $H^1(H,\calF(\manP,G))$ to twisted associated bundles (up to an isomorphism). Note that this map may not be an injection. In particular, depending on the space $V$, several non-equivalent cocycles may induce isomorphic associated bundles.
\end{remark}

\begin{corollary}
Any coboundary $B(p,h)=f(p)\inv f(ph)$ for a given $f:\manP\rightarrow G$ induces trivial twisted associated bundles $\manP\times_B V\simeq M\times V$.
\end{corollary}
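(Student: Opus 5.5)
The plan is to apply Prop.~\ref{prop isom associated up to coboundary} to the coboundary $B$. By definition, $B(p,h)=f(p)\inv f(ph)$ is exactly the cocycle $C^f$ obtained from the trivial cocycle $C_0(p,h)=e_G$. Indeed, $C_0^f(p,h)=f(p)\inv e\, f(ph)=B(p,h)$. The trivial map $C_0$ satisfies the cocycle relation \eqref{cocycle relation} since $e=e\cdot e$, so $B\sim C_0$. Prop.~\ref{prop isom associated up to coboundary} then provides an isomorphism of vector bundles over $M$:
\begin{align*}
[\eta^f] : \manP\times_B V \to \manP\times_{C_0} V,\qquad [p,v]\mapsto[p,f(p)\cdotact v].
\end{align*}

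It then remains to identify $\manP\times_{C_0}V$ with $M\times V$. For the trivial cocycle, the equivalence relation reads $(p,v)\sim(ph,v)$, so $H$ acts only on the $\manP$ factor. The map $[p,v]\mapsto(\pi_\manP(p),v)$ is therefore well defined. It is a bijection because $H$ acts freely and transitively on the fibers of $\pi_\manP$: two pairs with the same image have the form $(p,v)$ and $(ph,v)$. Smoothness of the map and of its inverse is checked in a local trivialization $\manP_U\simeq U\times H$. There the inverse is $(x,v)\mapsto[\sigma(x),v]$ for any local section $\sigma$. This expression is independent of $\sigma$, because changing $\sigma$ to $\sigma g$ does not change the class. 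Hence the local inverses glue to a global smooth inverse, even though $\manP$ itself need not be trivial. Composing with $[\eta^f]$ gives the explicit isomorphism $[p,v]\mapsto(\pi_\manP(p),f(p)\cdotact v)$. One can also check this formula directly: it is well defined since $(ph,B(p,h)\inv v)\mapsto(\pi(p),f(ph)f(ph)\inv f(p)v)=(\pi(p),f(p)v)$.

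The main obstacle is the global identification $\manP\times_{C_0}V\simeq M\times V$ without a global section of $\manP$. This step works precisely because the class $[\sigma(x),v]$ does not depend on the choice of $\sigma$, so the local inverses agree on overlaps. The remaining steps are routine consequences of Prop.~\ref{prop isom associated up to coboundary}.
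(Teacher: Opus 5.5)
Your proof is correct and follows the paper's route: the paper also obtains the result from Prop.~\ref{prop isom associated up to coboundary} applied to the trivial cocycle $C=e$ (of which $B=e^f$ is a coboundary shift), together with the identification $\manP\times_e V\simeq M\times V$. You simply spell out the details the paper leaves implicit, including the explicit global isomorphism $[p,v]\mapsto(\pi_\manP(p),f(p)\cdotact v)$.
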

\begin{proof}
The proof relies on the fact that $\manP\times_C V\simeq M\times V$ for $C=e$.
\end{proof}
Another relation comes from precomposing a cocycle with an automorphism of $\manP$.

\begin{definition}
Let $C$ be a cocycle and $\Psi\in\Aut(\manP)$. Let us use the notation $C^\Psi:(p,h)\mapsto C(\Psi(p),h)$. It is straightforward to check that $C^\Psi$ is a cocycle. In the particular case of a gauge transformation $\Phi\in\Autv(\manP)$, $\Phi$ reads $\Phi(p)=p\gamma(p)$ for a unique $\gamma\in\calH^\manP$ and the notation $C^\gamma:(p,h)\mapsto C(p\gamma(p),h)$ is adopted, following the convention adopted in Sec.~\ref{subsec Twisted Gauge Fields}.
\end{definition}

\begin{proposition}\label{prop isom associated bdls after Aut}
Let $\Psi\in\Aut(\manP)$ and $C$ a cocycle. There exists an isomorphism of associated bundles:
\begin{align*}
\manP \times_{C^\Psi} V \simeq \manP \times_C V.
\end{align*}
\end{proposition}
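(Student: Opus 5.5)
I would mimic the proof of Prop.~\ref{prop isom associated up to coboundary}: construct an explicit diffeomorphism of $\manP\times V$ that intertwines the two right actions of $H$, and then pass to the quotients. The natural candidate is
\begin{align*}
\eta^\Psi:\manP\times V\rightarrow\manP\times V,\qquad (p,v)\mapsto(\Psi(p),v).
\end{align*}
It is a diffeomorphism with inverse $\eta^{\Psi\inv}$, since $\Psi\in\Aut(\manP)$ is a diffeomorphism. I take $\Aut(\manP)$ to consist of $H$-equivariant diffeomorphisms, $\Psi(ph)=\Psi(p)h$, which cover some diffeomorphism $\psi$ of $M$. No verticality is needed.

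The key step is commutativity of the square $\eta^\Psi\circ\calR_h^{C^\Psi}=\calR_h^C\circ\eta^\Psi$, with $\calR_h^{C}(p,v)=(ph,C(p,h)\inv\cdotact v)$ as before. Following the upper path,
\begin{align*}
(p,v)\xmapsto{\calR_h^{C^\Psi}}(ph,C(\Psi(p),h)\inv\cdotact v)\xmapsto{\eta^\Psi}(\Psi(ph),C(\Psi(p),h)\inv\cdotact v).
\end{align*}
Following the lower path,
\begin{align*}
(p,v)\xmapsto{\eta^\Psi}(\Psi(p),v)\xmapsto{\calR_h^C}(\Psi(p)h,C(\Psi(p),h)\inv\cdotact v).
\end{align*}
These agree precisely because $\Psi(ph)=\Psi(p)h$. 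Hence $\eta^\Psi$ maps $C^\Psi$-orbits onto $C$-orbits and descends to a bijection
\begin{align*}
[\eta^\Psi]:\manP\times_{C^\Psi}V\rightarrow\manP\times_C V,
\end{align*}
with inverse $[\eta^{\Psi\inv}]$. It is smooth because it is induced by a smooth map between total spaces, and the quotient maps are submersions. It is fibrewise linear because it acts as the identity on $V$.

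The only subtle point is the meaning of "isomorphism of associated bundles" when $\Psi$ is not vertical. In that case $[\eta^\Psi]$ covers $\psi$ rather than $\id_M$, so it is a vector bundle isomorphism covering a diffeomorphism of the base. For $\Psi=\Phi\in\Autv(\manP)$ it covers the identity and is a genuine isomorphism over $M$. I would state this explicitly.

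For the vertical case one can also give an alternative check. There $C^\gamma(p,h)=C(p\gamma(p),h)$, and the cocycle relation yields
\begin{align*}
C(p\gamma(p),h)=C(p,\gamma(p))\inv C(p,h)\,C(ph,\gamma(ph)),
\end{align*}
using $h\gamma(ph)=\gamma(p)h$. Hence $C^\gamma=C^{f}$ with $f=C_\gamma$, and Prop.~\ref{prop isom associated up to coboundary} applies directly. This is a consistency remark, not needed for the main argument.
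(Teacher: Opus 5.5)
Your proposal is correct and follows the paper's proof essentially verbatim: it uses the same map $\eta^\Psi(p,v)=(\Psi(p),v)$, the same commuting square justified by $\Psi(ph)=\Psi(p)h$, and the same descent to the quotients with inverse $[\eta^{\Psi\inv}]$. Your two extra observations are correct refinements that the paper does not make: for non-vertical $\Psi$ the isomorphism covers the induced diffeomorphism of $M$, and for $\Psi\in\Autv(\manP)$ one has $C^\gamma=C^f$ with $f=C_\gamma$, so Prop.~\ref{prop isom associated up to coboundary} already covers that case.
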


\begin{proof}
Let $\eta^\Psi:\manP \times V\rightarrow\manP \times V$ defined by $\eta^\Psi:(p,v)\mapsto(\Psi(p),v)$. As defined before, $\calR_h^C(p,v) \defeq (ph,C(p,h)\inv\cdotact v)$ and $\calR_h^{C^\Psi}(p,v) \defeq (ph,C^\Psi(p,h)\inv\cdotact v)$. The following diagram commutes:
\begin{equation*}
\begin{tikzcd}
\manP\times V\arrow[rr,"\calR_h^{C^\Psi}"]\arrow[dd,"\eta^\Psi"]& &\manP\times V\arrow[dd,"\eta^\Psi"]\\
& & \\
\manP\times V\arrow[rr,"\calR_h^C"]& &\manP\times V
\end{tikzcd}
\end{equation*}
On the one hand
\begin{align*}
(p,v)
\xmapsto{\eta^\Psi} 
(\Psi(p),v)
\xmapsto{\calR_h^C} 
(\Psi(p)h,C(\Psi(p),h)\inv\cdotact v)
= (\Psi(ph),C^\Psi(p,h)\inv\cdotact v).
\end{align*}
On the other hand
\begin{align*}
(p,v)
\xmapsto{\calR_h^{C^\Psi}} 
(ph,C^\Psi(p,h)\inv\cdotact v)
\xmapsto{\eta^\Psi }
(\Psi(ph), C^\Psi(p,h)\inv\cdotact v).
\end{align*}
As a consequence, $\eta^\Psi$ induces an isomorphism $[\eta^\Psi]:\manP\times_{C^\Psi}V\rightarrow\manP\times_C V$ with inverse $[\eta^{\Psi\inv}]$.
\end{proof}

This result is of particular importance in the case of gauge transformations. As stated in Sec.~\ref{subsec Twisted Gauge Fields}, a $C(H)$-equivariant map $\varphi\in\calF_\eqv(\manP,V)$ transforms under a gauge transformation $\gamma\in\calH^\manP$ into a $C^\gamma(H)$-equivariant map $\varphi^\gamma$. These maps are in 1:1 correspondence with sections of different twisted associated bundles, namely $\phi\in\Gamma(\manP\times_C V)$ and $\phi^\gamma\in\Gamma(\manP\times_{C^\gamma} V)$. In other words, performing a gauge transformation at the level of $\manP$ leads to gauge-transformed equivariant fields which correspond to section of a different associated bundle, namely, $\manP\times_{C^\gamma} V$. This should be interpreted as a change of nature of the associated fields under gauge transformation. However, as a consequence of Prop.~\ref{prop isom associated bdls after Aut}, $\manP\times_C V\simeq\manP\times_{C^\gamma} V$. This may explain why this feature has gone unnoticed so far.

The previous propositions apply in the study of $\manQ=\manP\times_CG$ as well.
\begin{corollary}\label{corollary eq C implies isom Q}
Let $f\in\calF(\manP,G)$, $\Psi\in\Aut(\manP)$ and $\gamma\in\calH^\manP$.
\begin{enumerate}
\item $C$ and $C^f$ induce isomorphic bundles $\calQ^C=\manP\times_CG$ and $\calQ^{C^f}=\manP\times_{C^f}G$. In particular, $B(p,h)=f(p)\inv f(ph)$ induces a trivial bundle $\calQ^B\simeq M\times G$.
\item $C$ and $C^\Psi$ induce isomorphic bundles $\calQ^C=\manP\times_CG$ and $\calQ^{C^\Psi}=\manP\times_{C^\Psi}G$. In particular, if $\Phi:p\mapsto p\gamma(p)\in\Autv(\manP)$, then $\calQ^C\simeq\calQ^{C^\gamma}$.
\end{enumerate}
\end{corollary}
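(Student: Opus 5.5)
The plan is to derive both items from Prop.~\ref{prop isom associated up to coboundary} and Prop.~\ref{prop isom associated bdls after Aut}, taking $V=G$ with $G$ acting on itself by left multiplication. Neither earlier proof uses linearity of $V$. Each only needs an auxiliary map $\eta$ on $\manP\times V$ that intertwines two right actions $\calR_h^{C'}$ and $\calR_h^C$. So the same maps are available here. For item 1 we take $\eta^f:(p,g)\mapsto(p,f(p)g)$, and for item 2 we take $\eta^\Psi:(p,g)\mapsto(\Psi(p),g)$. The commutative squares already established then show that each map descends to a bijection $\manP\times_{C^f}G\to\manP\times_CG$, respectively $\manP\times_{C^\Psi}G\to\manP\times_CG$. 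The inverses are induced by $\eta^{f\inv}$ and $\eta^{\Psi\inv}$. Smoothness follows because the maps are smooth on $\manP\times G$ and the quotients are principal bundle quotients.

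What is new compared with the associated-bundle statement is that $\manQ$ carries a principal $G$-structure, so we must check that the induced maps are principal bundle isomorphisms. Both $\eta^f$ and $\eta^\Psi$ act on the $G$-factor by left multiplication or not at all. They therefore commute with the right action $\calR^\manQ_{g'}:[p,g]\mapsto[p,gg']$:
\[
[\eta^f]([p,g]g')=[p,f(p)gg']=[\eta^f]([p,g])g'.
\]
The analogous identity holds for $[\eta^\Psi]$. For fiber preservation, $[\eta^f]$ fixes $p$, so it covers $\id_M$. In contrast, $[\eta^\Psi]$ covers the diffeomorphism of $M$ induced by $\Psi$. We will therefore state the isomorphism in item 2 as a principal bundle isomorphism over that base diffeomorphism. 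Composing with the pullback identification $\manQ^{C^\Psi}\simeq\Psi_M^*\manQ^C$ is an option, but we will not need it. In the gauge case $\Psi=\Phi$ with $\Phi(p)=p\gamma(p)$, we have $\pi_\manP\circ\Phi=\pi_\manP$. Hence $\manQ^C\simeq\manQ^{C^\gamma}$ is an isomorphism over $\id_M$.

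For the special case in item 1, write $B=e^f$, where $e$ denotes the trivial cocycle $C(p,h)=e_G$. This gives $\manQ^B\simeq\manQ^e=\manP\times_eG$. For the trivial cocycle, the relation $(p,g)\sim(ph,g)$ simply identifies $\manP/H\times G$. Thus $[p,g]\mapsto(\pi_\manP(p),g)$ is a $G$-equivariant diffeomorphism onto $M\times G$, and $\manQ^B$ is trivial.

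The main obstacle is item 2 when $\Psi$ is not vertical. There "isomorphic bundles" must be interpreted as an isomorphism covering a diffeomorphism of $M$, and the proof should say so explicitly. The remaining steps are routine rereadings of the earlier commutative diagrams with $V$ replaced by $G$.
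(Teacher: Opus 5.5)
Your proposal is correct and follows the paper's route. The paper simply notes that Propositions~\ref{prop isom associated up to coboundary} and~\ref{prop isom associated bdls after Aut} apply verbatim with $V=G$ acting by left multiplication, and it gets the trivial case from $B=e^f$ together with $\manP\times_e G\simeq M\times G$, exactly as you do. Your extra checks, that $[\eta^f]$ and $[\eta^\Psi]$ commute with $\calR^\manQ$ and that $[\eta^\Psi]$ covers the base diffeomorphism induced by $\Psi$ (the identity only in the vertical case), make explicit what the paper leaves implicit.
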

In the next section, a similar result about $\manS$ is proved, which is even stronger.

%%%%%%%%%%%%%%%%
\subsection{\texorpdfstring{Link Between the Cohomology of $C$ and the Geometry of $\manS$}{Link Between the Cohomology of C and the Geometry of S}}
\label{subsec Interpretation Comments Again}
%%%%%%%%%%%%%%%%

Let us return to the geometric interpretation as stated in Sec.~\ref{subsec Interpretation Comments}. In order to relate $\manP$ to $\manS$, a convenient (though arbitrary) choice of inclusion was made by using $i_e$. It was shown that this inclusion gave an effective correspondence between $G$-equivariant $H$-invariant fields\footnote{If $k$-forms are considered, one should add the condition of being horizontal or of satisfying the relevant vertical normalization.} $\varphi^\manS$ on $\manS$ and twisted fields $\varphi^\manP$ on $\manP$. The equivariance of $i_e^*\varphi^\manS$ is recalled here:
\begin{align*}
(i_e^*\varphi^\manS)(ph)
&= \varphi^\manS(ph,e)
= \varphi^\manS(p,C(p,h))
= C(p,h)\inv\cdotact\varphi^\manS(p,e)
= C(p,h)\inv\cdotact(i_e^*\varphi^\manS)(p).
\end{align*}
It is worth paying attention to the pullbacks of $\varphi^\manS$ along other inclusion maps. For example, for $g_0\in G$, $i_{g_0}$ provides
\begin{align*}
(i_{g_0}^*\varphi^\manS)(ph)
&= \varphi^\manS(ph,g_0)
= \varphi^\manS(p,C(p,h)g_0)
= \varphi^\manS(p,g_0g_0\inv C(p,h)g_0)
= (g_0\inv C(p,h)g_0)\inv\cdotact\varphi^\manS(p,g_0)
\\
&= C^{g_0}(p,h)\inv\cdotact(i_{g_0}^*\varphi^\manS)(p),
\end{align*}
where $C^{g_0}:(p,h)\mapsto g_0\inv C(p,h)g_0$ is another cocycle. In general, let us define $i_f:\manP\rightarrow\manS$, $p\mapsto(p,f(p))$ for a given $f\in\calF(\manP,G)$, then
\begin{align*}
(i_f^*\varphi^\manS)(ph)
&= \varphi^\manS(ph,f(ph))
= \varphi^\manS(p,C(p,h)f(ph))
= \varphi^\manS(p,f(p)f(p)\inv C(p,h)f(ph))
\\
&= (f(p)\inv C(p,h)f(ph))\inv\cdotact\varphi^\manS(p,f(p))
= C^f(p,h)\inv\cdotact(i_f^*\varphi^\manS)(p).
\end{align*}
This shows that changing the embedding of $\manP$ into $\manS$ affects the equivariance of induced twisted fields by changing the cocycle within the same equivalence class.

Let us recall that $\calF(\manP,G)\simeq\Gamma(\manS(\manP,G))$ where the isomorphism is given by the correspondence $f\in\calF(\manP,G)\leftrightarrow i_f\in\Gamma(\manS(\manP,G))$. One trivially checks that $i_f$ defined above satisfies $\pi_G\circ i_f=\id_\manP$. This motivates the reinterpretation of the group cohomology introduced in Sec.~\ref{subsec Group Cohomology}. The 1-cocycle $\widehat{C}$ is an element of\footnote{Here, because $\manS=\manP\times G$, its space of sections inherits a group operation induced by the multiplication in $G$: $(i_fi_{f'})(p) \defeq (p,f(p)f'(p))$. In this way $\calF(\manP,G)\simeq\Gamma(\manS(\manP,G))$ is a group isomorphism.} $H^1(H,\calF(\manP,G))\simeq H^1(H,\Gamma(\manS(\manP,G)))$. Changing the cocycle in the same equivalence class corresponds to a change of section of the same bundle $\manS(\manP,G)$. On the contrary, moving from one cocycle to another in a different equivalence class implies changing the right action $\calR_{(h,e)}^\manS$ of $H$ on $\manS$. This may modify the structure and underlying topology of $\manS$. In other words, the elements of $H^1(H,\Gamma(\manS(\manP,G)))$ label the set of actions of $H$ on $\manS(\manP,G)$ that promote it to non-isomorphic bundles $\manS(M,H\times G)$. This is formalized in the following theorem.

\begin{theorem}\label{theorem eq C equiv isom S}
Let $C_1,C_2\in\calF(\manP\times H,G)$ be two cocycles. Let $\manS_1 \defeq \manP\times G \rdefeq \manS_2$ be two principal $(H\times G)$-bundles supporting the following right actions of $H\times G$ respectively:
\begin{align*}
\calR_{(h,g')}^i(p,g) 
&\defeq (ph,C_i(p,h)\inv gg'),
\hspace{1cm} i=1,2.
\end{align*}
Then $\manS_1$ and $\manS_2$ are isomorphic bundles if and only if $C_1$ and $C_2$ are equivalent cocycles up to an automorphism of $\manP$.
\begin{align*}
\exists & \Phi:\manS_1\xrightarrow{\simeq} \manS_2,\ \forall (h,g')\in H\times G,\ \Phi\circ\calR_{(h,g')}^1=\calR_{(h,g')}^2\circ\Phi
\\
\Longleftrightarrow \exists & (f,\Psi)\in\calF(\manP,G)\times\Aut(\manP),\ C_1(p,h)
= (C_2^f)^\Psi(p,h)=f(\Psi(p))\inv C_2(\Psi(p),h)f(\Psi(p)h).
\end{align*}
\end{theorem}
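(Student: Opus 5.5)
The plan is to prove both implications by working directly with maps $\manP\times G\to\manP\times G$. The $G$-part of the action is the same in $\manS_1$ and $\manS_2$, namely right multiplication on the second factor. So $G$-equivariance should force any bundle map $\Phi$ to have the form $(p,g)\mapsto(\Psi(p),\tilde f(p)g)$. Once $\Phi$ has this form, the $H$-equivariance of $\Phi$ will be equivalent to the coboundary-type relation between $C_1$ and $C_2$. This is the same mechanism as in Prop.~\ref{prop isom associated up to coboundary} and Prop.~\ref{prop isom associated bdls after Aut}, now combined into a single map on $\manS$.

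\emph{($\Leftarrow$)} Let $(f,\Psi)$ satisfy $C_1(p,h)=f(\Psi(p))\inv C_2(\Psi(p),h)f(\Psi(p)h)$, where $\Psi$ commutes with the right action of $H$. Define
\begin{align*}
\Phi(p,g) \defeq (\Psi(p),f(\Psi(p))g).
\end{align*}
It is smooth, with smooth inverse $(p,g)\mapsto(\Psi\inv(p),f(p)\inv g)$. It commutes with $\calR_{(e,g')}$ by associativity in $G$. For $\calR_{(h,e)}$, I would compute
\begin{align*}
\Phi(\calR^1_{(h,e)}(p,g)) = (\Psi(p)h,f(\Psi(p)h)C_1(p,h)\inv g).
\end{align*}
The assumed relation rewrites as $f(\Psi(p)h)C_1(p,h)\inv=C_2(\Psi(p),h)\inv f(\Psi(p))$, so the right-hand side equals $\calR^2_{(h,e)}(\Phi(p,g))$. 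Both actions are generated by these two types of elements, so $\Phi$ intertwines $\calR^1$ and $\calR^2$.

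\emph{($\Rightarrow$)} Given $\Phi$, set $(\Psi(p),\tilde f(p))\defeq\Phi(p,e)$, with $\Psi:\manP\to\manP$ and $\tilde f:\manP\to G$ smooth. Since $(p,g)=\calR^1_{(e,g)}(p,e)$, equivariance under $(e,g)$ gives $\Phi(p,g)=(\Psi(p),\tilde f(p)g)$. In particular the first component does not depend on $g$.

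Next I would apply equivariance under $(h,e)$ at the point $(p,g)$ and compare the two components. The first components give $\Psi(ph)=\Psi(p)h$. The second components give
\begin{align*}
\tilde f(ph)C_1(p,h)\inv g=C_2(\Psi(p),h)\inv\tilde f(p)g,
\end{align*}
that is, $C_1(p,h)=\tilde f(p)\inv C_2(\Psi(p),h)\tilde f(ph)$.

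To see that $\Psi\in\Aut(\manP)$, I would apply the same decomposition to the equivariant map $\Phi\inv$, which gives $\Phi\inv(p,g)=(\Psi'(p),\tilde f'(p)g)$. Composing the two decompositions shows $\Psi'\circ\Psi=\Psi\circ\Psi'=\id_\manP$, so $\Psi$ is an $H$-equivariant diffeomorphism. If bundle isomorphisms are required to cover $\id_M$, then $\Psi$ is moreover vertical.

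Finally, set $f\defeq\tilde f\circ\Psi\inv$. Then $\tilde f(p)=f(\Psi(p))$ and, by $H$-equivariance of $\Psi$, $\tilde f(ph)=f(\Psi(p)h)$. This turns the displayed relation into $C_1=(C_2^f)^\Psi$, which is the claimed equivalence.

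The only delicate step is the $(\Rightarrow)$ structural claim: that $\Phi$ cannot mix the $\manP$ and $G$ factors, and that $\Psi$ is invertible and $H$-equivariant. Everything else is a one-line computation using the cocycle relation \eqref{cocycle relation}.
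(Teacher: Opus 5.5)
Your proof is correct and takes essentially the same route as the paper. You split $\Phi$ into its $\manP$- and $G$-components and use $G$-equivariance to get $\Phi(p,g)=(\Psi(p),\tilde f(p)g)$. From $H$-equivariance you then read off $C_1(p,h)=\tilde f(p)\inv C_2(\Psi(p),h)\tilde f(ph)$ and set $f=\tilde f\circ\Psi\inv$. For the converse you use $(p,g)\mapsto(\Psi(p),f(\Psi(p))g)$, which is exactly the paper's $\eta^f\circ\eta^\Psi$. Your argument via the equivariant inverse $\Phi\inv$, showing that $\Psi$ is actually a diffeomorphism, is a small improvement, since the paper simply asserts $\Psi\in\Aut(\manP)$.
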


\begin{remark}
In practice, one can consider an automorphism $\Psi_\manS$ of, say, $\manS_2$ and work with $\tilde{\manS}_2=\Psi_\manS(\manS_2)$. By definition, $\manS_2\simeq\tilde{\manS}_2$. One can choose $\Psi_\manS$ such that it compensates $\Psi\in\Aut(\manP)$ introduced in Theorem \ref{theorem eq C equiv isom S}. This enables working with ``$C_1=\tilde{C}_2^f$'' effectively. In other words, $\Phi$ consist in changing the action of $H$ on $\manS$ without moving points in $\manP$.
\end{remark}

\begin{proof}
Assume there exists a bundle isomorphism $\Phi:\manS_1\xrightarrow{\simeq}\manS_2$. For any $(p,g)\in \manS_1$ and for any $(h,g')\in H\times G$, one has
\begin{align*}
\Phi(\calR_{(h,g')}^1(p,g))
= \calR_{(h,g')}^2(\Phi(p,g))
\end{align*}
Let us decompose $\Phi=(\Phi_\manP,\Phi_G)$ with $\Phi_\manP:\manS_1\rightarrow \manP$ and $\Phi_G:\manS_1\rightarrow G$. Then the previous equation splits into
\begin{align}
\label{eq system}
\left\{ 
\begin{aligned}
\Phi_\manP(ph,C_1(p,h)\inv g g') &= \Phi_\manP(p,g)h \\
\Phi_G(ph,C_1(p,h)\inv g g') &= C_2(\Phi_\manP(p,g),h)\inv\Phi_G(p,g)g'.
\end{aligned}
\right.
\end{align}
At $g'=g\inv C_1(p,h)\inv g$ and $g'=g\inv C_1(p,h)\inv$, the first equation of \eqref{eq system} reads
\begin{align*}
\Phi_\manP(ph,g)
= \Phi_\manP(p,g)h,
\hspace{1cm}\Phi_\manP(ph,e)
= \Phi_\manP(p,g)h,
\end{align*}
which proves that $\Phi_\manP$ is independent of its second argument and $\Phi_\manP(\cdot,e)\in\Aut(\manP)$. At $h=e$, the second equation in \eqref{eq system} reads
\begin{align*}
\Phi_G(p,gg')
= \Phi(p,g)g' 
\Longrightarrow \Phi_G(p,g)
= \Phi(p,e)g.
\end{align*}
And, at $g'=e$, the same equation reads
\begin{align*}
\Phi_G(ph,C_1(p,h)\inv g)
&= C_2(\Phi_\manP(p,g),h)\inv\Phi_G(p,g)
\\
\Phi_G(ph,e)C_1(p,h)\inv g
&= C_2(\Phi_\manP(p,e),h)\inv\Phi_G(p,e)g
\end{align*}
which leads to
\begin{align*}
C_1(p,h)
&= \Phi_G(p,e)\inv C_2(\Phi_\manP(p,e),h)\Phi_G(ph,e)
\\
&= f(\Psi(p))\inv C_2(\Psi(p),h)f(\Psi(ph))
\end{align*}
where $\Psi:p\mapsto\Phi_\manP(p,e)$ is an automorphism of $\manP$ and $f:p\mapsto \Phi_G(\Psi(p)\inv,e)\in\calF(\manP,G)$.

Conversely, if $C_1=(C_2^f)^\Psi$, then the isomorphisms $\eta^f:(p,g)\mapsto(p,f(p)g)$ and $\eta^\Psi:(p,g)\mapsto(\Psi(p),g)$ induce an isomorphism $\eta^f\circ\eta^\Psi:\manS_1\rightarrow\manS_2$. Consider the following diagram
\begin{equation*}
\begin{tikzcd}
\manS_1=\manP\times G\arrow[rr,"\calR_{(h,g')}^1"]\arrow[dd,"\eta^f\circ\eta^\Psi"]& &\manS_1=\manP\times G\arrow[dd,"\eta^f\circ\eta^\Psi"]\\
& & \\
\manS_2=\manP\times G\arrow[rr,"\calR_{(h,g')}^2"]& &\manS_2=\manP\times G
\end{tikzcd}
\end{equation*}
This diagram is a commutative diagram. Indeed, on the one hand,
\begin{align*}
(p,g)\xmapsto{\calR_{(h,g')}^1}(ph,C_1(p,h)\inv gg')
\xmapsto{\eta^f\circ\eta^\Psi} (\Psi(ph),f(\Psi(ph))C_1(p,h)\inv gg')
\end{align*}
and, on the other hand,
\begin{align*}
(p,g)
\xmapsto{\eta^f\circ\eta^\Psi} 
(\Psi(p),f(\Psi(p))g)
\xmapsto{\calR_{(h,g')}^2}
&\ (\Psi(p)h,C_2(\Psi(p),h)\inv f(\Psi(p))gg')
\\
&= (\Psi(ph),f(\Psi(ph))f(\Psi(p))\inv C_2(\Psi(p),h)\inv f(\Psi(p))gg')
\\
&= (\Psi(ph),f(\Psi(ph)) (C_2^f)^\Psi(p,h)\inv gg')
\\
&= (\Psi(ph),f(\Psi(ph)) C_1(p,h)\inv gg')
\end{align*}
The commutation $\calR_{(h,g')}^1\circ(\eta^f\circ\eta^\Psi)=(\eta^f\circ\eta^\Psi)\circ\calR_{(h,g')}^2$ implies that $\eta^f\circ\eta^\Psi$ is a principal bundle isomorphism, which concludes the proof.
\end{proof}
\begin{remark}
The results obtained in \ref{prop isom tensorial forms}, \ref{prop isom connections PQ}, \ref{prop isom GT PQ}, \ref{prop isom twisted connections SP} and \ref{prop isom twisted GT SP} use the map $i_e$. Similar results can be obtained using a generic embedding $i_f$ of $\manP$ in $\manS$ instead of $i_e$. This provides
\begin{itemize}
\item an isomorphism $i_f^*\pi_H^*$ between standard fields on $\manQ$ and twisted fields on $\manP$,
\item an isomorphism $i_f^*$ between $G$-equivariant $H$-invariant fields on $\manS$ and twisted fields on $\manP$.
\end{itemize}
Provided that the right action $\calR^\manS$ – defining $\manS$ as a principal fiber bundle – involves $C$, then $i_f^*\pi_H^*$ (resp. $i_f^*$) maps standard fields on $\manQ$ (resp. $\manS$) to $C^f(H)$-equivariant fields on $\manP$. In other words, the twist on $\manP$ comes from both the right action $\calR^\manS$ on $\manS$ and the canonical embedding $i_f$ of $\manP$ in $\manS$.
\end{remark}

%%%%%%%%%%%%%%%%
\section{Examples of cocycles}
\label{sec Examples of cocycles}
%%%%%%%%%%%%%%%%

%%%%%%%%%%%%%%%%
\subsection{Trivial Cocycle}
\label{subsec Trivial Cocycle}
%%%%%%%%%%%%%%%%

Consider the case where $C(p,h)=e$ for any $(p,h)\in\manP\times H$. Then, twisted fields on $\manP$ are $H$-invariant: $\varphi(ph)=C(p,h)\inv\cdotact\varphi(p)=\varphi(p)$. Furthermore, twisted connections satisfy the vertical normalization $\omega_p(\ka_p^\ver)=\dr_H C_{|(p,e)}(\ka)=0,\,\forall \ka\in\kh$.

The correspondence space of this example is $\manS(M,H\times G)=\manP\times G$ together with the right action $\calR_{(h,g')}^\manS(p,g)=(ph,gg')=\widehat{\calR}_h\circ\calR_{(e,g')}^\manS(p,g)$. This makes $\manS$ trivial along $G$ both as a manifold and as a principal bundle. The quotient $\manS/H$ induced by $\calR_{(h,e)}^\manS$ is the manifold $\manQ\simeq M\times G$ with points $q=[p,g]=[ph,g]\xleftrightarrow{1:1}(x,g)=(\pi_\manP(p),g)$. According to Sec. \ref{sec Isomorphism}, twisted fields on $\manP$ correspond (via pullback) to standard fields on $\manQ$, \textit{i.e.} fields on $M\times G$. This is consistent with the fact that $H$-invariant fields on $\manP$ (or $H$-basic forms) induce well-defined fields on $M$.

Note that $C(p,h)=e$ is equivalent to $C^f(p,h)=f(p)\inv f(ph)$ for any $f\in\calF(\manP,G)$. Thus, any twisted gauge field on $\manP$ through $C^f$ corresponds to a $H$-invariant field, and then, to a field on $M$. The case of gauge-invariant fields is of interest since it may correspond to dressed fields after a global application of the DFM.

%%%%%%%%%%%%%%%%
\subsection{Identity Map}
\label{subsec Identity Map}
%%%%%%%%%%%%%%%%

Consider the case where $G=H$ and $C(p,h)=h,\,\forall (p,h)\in\manP\times H$. In this case, twisted fields are genuine fields satisfying a standard $H$-equivariance condition: $\varphi(ph)=C(p,h)\inv\cdotact\varphi(p)=h\inv\cdotact\varphi(p)$. Similarly, any $C(H)$-tensorial form is a standard tensorial form and any twisted connection is standard.

The formalism developed previously becomes $\calS(M,H\times H)=\calP\times H$ as correspondence space, together with the right action $\calR_{(h_1,h_2)}(p,h)=(ph_1,h_1\inv hh_2)$. There are two ways of interpreting this bundle. Either the two copies of $H$ (denoted $H_1,\,H_2$ here) can be thought as encoding physically different degrees of freedom. In particular, one can generate associated vector bundles to $\manS$ through representations of either $H_1$ or $H_2$ and interpret their sections as matter fields with different gauge degrees of freedom. Or these two copies of $H$ can be considered as a redundancy. Hence, there are two ways of canceling this redundancy: either by considering equivalence classes on $\manS$ generated by $\calR_{(e,h_2)}^\manS$, thus recovering $\manP$, or equivalence classes generated by $\calR_{(h_1,e)}^\manS$, thus generating $\manQ$. These two quotients are isomorphic since any principal bundle is associated to itself: $\manP\simeq\manP\times_HH=\{[p,h_0]=[ph,h\inv h_0]\}=\manQ$.

$C(p,h)=h$ is equivalent to $C^f(p,h)=f(p)\inv hf(ph)$, for any $f:\manP\rightarrow H$. Note that this cocycle measures the difference between $f(ph)$ and $h\inv f(p)$. Let us assume that there exists a map $f$ such that $f(ph)=h\inv f(p)$, \textit{i.e.} $f$ has equivariance property $(\calR_h^{\manP*}f)(p)=L_{h\inv}f(p)$. Then $f$ is a global $H$-dressing field. Consequently, $f$ induces a global section of the associated bundle $\manP\times_HH\simeq\manP$ which reads $x\mapsto [p,f(p)]=[ph,h\inv f(p)]=[ph,f(ph)]$. Hence $\manP$ is trivial. This is consistent with the previous example.

%%%%%%%%%%%%%%%%
\subsection{Group Homomorphism}
\label{subsec Group Homomorphism}
%%%%%%%%%%%%%%%%

Consider a cocycle $C$ which is independent of $\manP$. In other words, $C$ is independent of its first variable (which is omitted in the notations in this section). Consequently, $C:H\rightarrow G$ satisfies the cocycle relation $C(hh')=C(h)C(h')$, \textit{i.e.} it is a group homomorphism. (Note that conversely, any homomorphism $H\rightarrow G$ can be interpreted as a cocycle with trivial dependence in $p\in\manP$.) In this context, twisted fields satisfy the equivariance relation $\varphi(ph)=\rho_G(C(h))\inv\cdotact\varphi(p)$. This is characteristic of a standard $H$-gauge field valued in the $H$-space $(V,\rho_G\circ C)$.

The correspondence space of this example is $\manS(M,H\times G)=\manP\times G$ together with the right action $\calR_{(h,g')}^\manS(p,g)=(ph,C(h)\inv gg')$. Similarly to the general case, the structure group acts on a $G$-space $(V,\rho_G)$ as $\rho_{H\times G}(h,g')\cdotact v \defeq \rho_G(g')\cdotact v$, and the action of $H$ on $V$ is trivial. Note, however, that one can define the action $\rho_H=\rho_G\circ C$ of $H$ on $V$ and induce a new action of the structure group $H$ on $V$ as $\tilde{\rho}_{H\times G}(h,g')\cdotact v \defeq \rho_H(h)\cdotact v=\rho_G(C(h))\cdotact v$. This new action should be identified with the $(H\times G)$-action induced on $H$-spaces $W$ by the standard action of $H$.

In this example, $\manQ \defeq \manP\times_{C(H)}G$ and the standard $H$-fields on $\manP$ with representation $(V,\rho_G\circ C)$ correspond to standard $G$-fields with representation $(V,\rho_G)$ on $\manQ$. This example offers an intuition for the more general case. Let $C(H) \defeq \{C(h),\ h\in H\}\subset G$. Because $C$ is an homomorphism, $C(H)$ is a subgroup of $G$. Note that $C(H)$ may be a proper subgroup. At first glance, the correspondence between twisted fields on $\manP$ and standards fields on $\manQ$ may appear counter-intuitive. Indeed, $G$-fields have internal degrees of freedom valued in a larger group. However, the $G$-equivariance satisfied by these fields also accounts for more constraints. Altogether, both kind of fields carry the same information, \textit{i.e.} the information of a field on the base manifold.

%%%%%%%%%%%%%%%%
\section{Local Point of View}
\label{sec Local POV}
%%%%%%%%%%%%%%%%

Since $\manP$, $\manS$ and $\manQ$ are principal bundles over $M$, the fields defined on them locally give rise to fields on $M$ by pullback through local sections. The aim of this section is to relate local versions of twisted fields on $\manP$ and the local versions of their corresponding (standard) fields on $\manS$ and $\manQ$.

Therefore, all dressing fields are assumed to be local (\textit{i.e.} defined on $\manP_U$, $\manS_U$ or $\manQ_U$ over $U\subset M$ open and small enough) unless stated explicitly.

%%%%%%%%%%%%%%%%
\subsection{\texorpdfstring{Local Dressing Fields on $\manS$, $\manP$ and $\manQ$}{Local Dressing Fields on S, P and Q}}
\label{subsec Dressing S,P,Q}
%%%%%%%%%%%%%%%%

According to Sec.~\ref{subsec The DFM}, any local section of a principal bundle is in 1:1 correspondence with a local dressing field. It proves useful to draw the link between local fields on $\manS$, $\manP$ and $\manQ$ through the relation between their local dressing fields.

\begin{proposition}\label{prop decomposition u_S}
Any dressing field $u_\manS\in\calF_\eqv(\manS_U,H\times G)$ reads $(\pi_G^*u_\manP,\pi_H^*u_\manQ)$ for some dressing fields $u_\manP\in\calF_\eqv(\manP_U,H)$ and $u_\manQ\in\calF_\eqv(\manQ_U,G)$.
\end{proposition}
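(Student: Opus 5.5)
Write $u_\manS=(u^H,u^G)$ with $u^H:\manS_U\to H$ and $u^G:\manS_U\to G$. The dressing-field equivariance $\calR_{(h,g')}^{\manS*}u_\manS=(h,g')\inv u_\manS$ holds componentwise in $H\times G$. It therefore splits into two conditions:
\begin{align*}
u^H(ph,C(p,h)\inv gg')&=h\inv u^H(p,g),
&
u^G(ph,C(p,h)\inv gg')&=g'^{-1}u^G(p,g).
\end{align*}
The plan is to read off from these two relations the invariance properties needed to descend each component to a quotient. The argument follows the pattern already used for $\Gau(\manS)=\calH^\manS\times\calG^\manS$ in Props.~\ref{prop isom GT SP} and \ref{prop isom GT SQ}; the only change is that the equivariance is a left multiplication rather than a conjugation.

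For the $H$-component, set $h=e$. This gives $u^H(p,gg')=u^H(p,g)$ for all $g'$, so $u^H$ is $G$-invariant. Hence $u^H=\pi_G^*u_\manP$ with $u_\manP\defeq i_e^*u^H$, using $\pi_G\circ i_e=\id_\manP$ and $i_e\circ\pi_G(p,g)=(p,e)$ exactly as in Prop.~\ref{prop isom GT SP}. The equivariance of $u_\manP$ then follows directly:
\[
u_\manP(ph)=u^H(ph,e)=u^H(p,C(p,h))=h\inv u^H(p,e)=h\inv u_\manP(p).
\]
Here the middle equality uses $G$-invariance, and the next one uses the relation above with $g=C(p,h)$, $g'=e$. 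So $u_\manP\in\calF_\eqv(\manP_U,H)$ is an $H$-dressing field.

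For the $G$-component, set $g'=e$. This gives $u^G(ph,C(p,h)\inv g)=u^G(p,g)$, i.e. $u^G$ is constant on the orbits of $\calR^\manS_{(h,e)}$. It therefore descends along $\pi_H$ to a map $u_\manQ:\manQ_U\to G$, $u_\manQ([p,g])\defeq u^G(p,g)$, with $u^G=\pi_H^*u_\manQ$. The map $u_\manQ$ is smooth because $\pi_H$ is a principal bundle projection, hence a surjective submersion; this is the same descent as in Corollary~\ref{corollary isom eq functions}. Setting $h=e$ and using $[p,g]g'=[p,gg']$ gives $u_\manQ(qg')=g'^{-1}u_\manQ(q)$, so $u_\manQ$ is a $G$-dressing field on $\manQ_U$.

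Finally, note that both quotients restrict over $U$, because $\pi_\manS=\pi_\manP\circ\pi_G=\pi_\manQ\circ\pi_H$, so $\manS_U$ maps onto $\manP_U$ and $\manQ_U$. The main point requiring care, though it is not really an obstacle, is the $H$-component: one must check that the $h$-relation involves $C$ only through the $G$-slot, which $G$-invariance then erases. It is also worth remarking that the converse holds: any pair $(\pi_G^*u_\manP,\pi_H^*u_\manQ)$ is an $(H\times G)$-dressing field on $\manS_U$, by reversing the computations above. This gives a 1:1 correspondence.
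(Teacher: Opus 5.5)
Your route is the paper's own: split $u_\manS=(u^H,u^G)$ componentwise, use $h=e$ and $g'=e$ to extract the invariances, and descend along $\pi_G$ and $\pi_H$. The $G$-component part is fine, including the smoothness of the descent and the converse.

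However, the displayed chain for the equivariance of $u_\manP$ contains a false step. The equality $u^H(ph,e)=u^H(p,C(p,h))$ does not follow from $G$-invariance. $G$-invariance only lets you change the second slot while the first stays fixed, whereas here the first slot changes from $ph$ to $p$. In fact, combined with $G$-invariance, this equality would give $u_\manP(ph)=u^H(p,C(p,h))=u^H(p,e)=u_\manP(p)$, i.e. $H$-invariance, which contradicts your own conclusion. That equality is the one valid for $H$-invariant fields, as in Sec.~\ref{subsec Interpretation Comments}, not for the $H$-equivariant $u^H$.

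Likewise, your relation with $g=C(p,h)$, $g'=e$ does not yield $u^H(p,C(p,h))=h\inv u^H(p,e)$. It yields
\[
u^H(ph,e)=h\inv u^H(p,C(p,h)).
\]
The fix is to carry the factor $h\inv$ in the middle term and swap the two justifications:
\[
u_\manP(ph)=u^H(ph,e)=h\inv u^H(p,C(p,h))=h\inv u^H(p,e)=h\inv u_\manP(p).
\]
Here the second equality is the relation with $g=C(p,h)$, $g'=e$, and the third is $G$-invariance. With this correction your argument is complete and coincides with the paper's.
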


\begin{proof}
Let $u_\manS\in\calF_\eqv(\manS_U,H\times G)$ be a dressing field on $\manS_U$ over $U$. This map reads $u_\manS=(u_H,u_G)$ for some $u_H:\manS_U\rightarrow H$ and $u_G:\manS_U\rightarrow G$. The equivariance property of $u_\manS$ implies that
\begin{align*}
& \calR_{(h,g')}^{\manS*}u_\manS(p,g)
= (h\inv,g^{\prime-1})u_\manS(p,g)
\\
\Longleftrightarrow & (u_H(ph,C(p,h)\inv gg'),u_G(ph,C(p,h)\inv gg'))
= (h\inv u_H(p,g),g^{\prime-1}u_G(p,g))
\\
\Longleftrightarrow & \left\{ 
\begin{aligned}
u_H(ph,C(p,h)\inv gg') &= h\inv u_H(p,g),\\
u_G(ph,C(p,h)\inv gg') &= g^{\prime-1}u_G(p,g).
\end{aligned} \right.
\end{align*}
This being valid for any $(h,g')\in H\times G$, the first equation implies that $u_H$ is independent of its second variable and $u_H(ph,\bullet)=h\inv u_H(p,\bullet)$. Then, $u_H$ induces a well-defined map $u_\manP:\manP_U\rightarrow H,\ p\mapsto u_H(p,g)$ for any $g\in G$, and $u_\manP(ph)=h\inv u_\manP(p)$. Thus $u_\manP$ is a $H$-dressing field on $\manP_U$ and $u_H=\pi_G^*u_\manP$. Similarly, the equation on $u_G$ implies that $u_G(p,gg')=g^{\prime-1}u_G(p,g)$ and that $u_G$ only depends on $[p,g]\in\manQ_U$. Then, $u_G$ induces a well-defined map $u_\manQ:\manQ_U\rightarrow G,\ q=[p,g]\mapsto u_G(p,g)$ for any $(p,g)\in\pi_H\inv(q)$, and $u_\manQ(qg')=g^{\prime-1} u_\manQ(q)$. Thus $u_\manQ$ is a $G$-dressing field on $\manQ_U$ and $u_G=\pi_H^*u_\manQ$.
\end{proof}
\begin{remark}
One straightforwardly checks that the converse is true: given $u_\manP\in\calF_\eqv(\manP_U,H)$ and $u_\manQ\in\calF_\eqv(\manQ_U,G)$ two dressing fields, $(\pi_G^*u_\manP,\pi_H^*u_\manQ)$ is a dressing field on $\manS_U$ over $U$.
\end{remark}
This result means that trivializing $\manS$ is equivalent to trivializing both $\manP$ and $\manQ$ independently. This link shows up in the usual approach in terms of local sections.

%%%%%%%%%%%%%%%%
\subsection{\texorpdfstring{Local Sections of $\manS$, $\manP$ and $\manQ$}{Local Sections of S, P and Q}}
\label{subsec Local Sect S,P,Q}
%%%%%%%%%%%%%%%%

Let $\sigma_\manS\in\Gamma(\manS_U)$ be a local trivializing section and let $u_\manS\in\calF_\eqv(\manS_U,H\times G)$ be the dressing field canonically associated to $\sigma_\manS$. For any $x\in U$, $\sigma_\manS(x)=s\cdotact u_\manS(s)=\calR_{u_\manS(p,g)}^\manS(p,g)$ for any $s=(p,g)\in\pi_\manS\inv(x)$. Using Prop.~\ref{prop decomposition u_S}, there exist dressing fields $u_\manP,\ u_\manQ$ such that $u_\manS=(\pi_G^*u_\manP,\pi_H^*u_\manQ)$. Then
\begin{align*}
\sigma_\manS(x)
&= \calR_{u_\manS(p,g)}^\manS(p,g)
= \calR_{(\pi_G^*u_\manP(p,g),\pi_H^*u_\manQ(p,g))}^\manS(p,g)
\\
&= (p\pi_G^*u_\manP(p,g),C(p,\pi_G^*u_\manP(p,g))\inv g\pi_H^*u_\manQ(p,g))
= (pu_\manP(p),C(p,u_\manP(p))\inv g u_\manQ([p,g]))
\\
&= (\sigma_\manP(x),C(p,u_\manP(p))\inv g u_\manQ([p,g]))
\end{align*}
where $\sigma_\manP(x) \defeq pu_\manP(p)$ is independent of the choice of $p\in\pi_\manP\inv(x)$. The last equation implies that $\pi_G(\sigma_\manS(x))=\sigma_\manP(x)$. Similarly,
\begin{align*}
\pi_H(\sigma_\manS(x))
&= [p u_\manP(p), C(p,u_\manP(p))\inv g u_\manQ([p,g])]
= [p, g u_\manQ([p,g])]
= [p, g] u_\manQ([p,g])
\\
&= q u_\manQ(q) 
\rdefeq \sigma_\manQ(x)
\end{align*}
Then, a section $\sigma_\manS$ of $\manS_U$ over $U$ provides a section $\sigma_\manP=\pi_G\circ\sigma_\manS$ of $\manP_U$ and a section $\sigma_\manQ=\pi_H\circ\sigma_\manS$ of $\manQ_U$ over $U$. Note that, following Prop.~\ref{prop decomposition u_S}, the converse is also true. Given $\sigma_\manP,\ \sigma_\manQ$ two sections, one can construct a section $\sigma_\manS$. The explicit process is less straightforward. Either one has to construct $u_\manP,\ u_\manQ$ associated to $\sigma_\manP,\ \sigma_\manQ$, then to define $u_\manS \defeq (\pi_G^*u_\manP,\pi_H^*u_\manQ)$ and to define $\sigma_\manS$ from $u_\manS$. Or one can define $\sigma_\manS$ implicitly from $\sigma_\manP,\ \sigma_\manQ$ as follows. Let $x\in U$. Since $\sigma_\manQ(x)\in\manQ_U$, there exists $(p,g)\in\manS$ such that $\sigma_\manQ(x)=[p,g]$. Then there exists $h\in H$ such that $p=\sigma_\manP(x)h$. Thus one can rewrite $\sigma_\manQ(x)=[p,g]=[\sigma_\manP(x)h,g]= [\sigma_\manP(x),C(\sigma_\manP(x),h)g]$. One defines $\sigma_\manS(x) \defeq  (\sigma_\manP(x),C(\sigma_\manP(x),h)g)$. In other words, $\sigma_\manS(x)$ is “the representative of the equivalence class $\sigma_\manQ(x)$ whose first component is $\sigma_\manP(x)$”.

However, there exists another parametrization of $\sigma_\manS$ given $\sigma_\manP,\ \sigma_\manQ$. Following Corollary~\ref{corollary isom eq functions}, any (local) dressing field $u_\manQ\in\calF_\eqv(\manQ_U,G)$ corresponds to a so-called (local) “twisted dressing field” $f_u=\tilde{\chi}_{\manP\manQ}(u_\manQ)=i_e^*\pi_H^*u_\manQ\in\calF_\Ceq(\manP_U,G)$.
\begin{remark}
There are two ways to link a twisted dressing field $f_u\in\calF_\Ceq(\manP_U,G)$ to a section $\sigma_\manQ\in\Gamma(\manQ_U)$.
\begin{itemize}
\item On the one hand, there is a 1:1 correspondence $f_u\leftrightarrow u_\manQ\in\calF_\eqv(\manQ_U,G)$ given by $f_u=i_e^*\pi_H^*u_\manQ$ and there is a 1:1 correspondence $u_\manQ\leftrightarrow\sigma_\manQ$ given by $\sigma_\manQ(x)=qu_\manQ(q)$. This comes from the fact that $\manQ$ is associated to itself: $\manQ\simeq\manQ\times_GG$.
\item On the other hand, there is a 1:1 correspondence $f_u\leftrightarrow \sigma_\manQ\in\Gamma(\manP_U\times_{C(H)}G)$ given by $\sigma_\manQ(x)=[p,f_u(p)]=[ph,C(p,h)\inv f_u(p)]=[ph,f_u(ph)]$. This comes from the fact that $\manQ$ is (twisted) associated to $\manP$: $\manQ=\manP\times_{C(H)}G$.
\end{itemize}
\end{remark}
Using the equivariance property of $u_\manQ$ in the previous calculations,
\begin{align*}
\sigma_\manS(x)
&= (\sigma_\manP(x), C(p,u_\manP(p))\inv g u_\manQ([p,g]))
= (\sigma_\manP(x),C(p,u_\manP(p))\inv g u_\manQ([p,e]g))
\\
&= (\sigma_\manP(x),C(p,u_\manP(p))\inv u_\manQ([p,e]))
= (\sigma_\manP(x), u_\manQ([p,e]C(p,u_\manP(p))))
\\
&= (\sigma_\manP(x), u_\manQ([p,C(p,u_\manP(p))]))
= (\sigma_\manP(x), u_\manQ([pu_\manP(p),e]))
\\
&= (\sigma_\manP(x), u_\manQ([\sigma_\manP(x),e]))
= (\sigma_\manP(x), i_e^*\pi_H^*u_\manQ(\sigma_\manP(x)))
\\
&= (\sigma_\manP(x), f_u(\sigma_\manP(x)))
= i_{f_u}\circ\sigma_\manP(x)
\end{align*}
This way, any local section of $\manS_U$ appears as being a section of $\manP_U$ embedded in $\manS_U$ through a twisted dressing field (which carries the same information as a local section of $\manQ_U$). Note that this specific parametrization requires that $\manP$ is a submanifold of $\manS$. In particular, one cannot define a similar parametrization exchanging the roles of $\manP$ and $\manQ$ since the latter is merely a homogeneous space.
\begin{remark}
It is worth insisting on the interpretation of $f$ and $f_u$, since both share common features and related uses. On the one hand, $f\in\calF(\manP,G)$ parametrizes cocycles within an equivalence class. Thus, it is interpreted as a coboundary. The same $f$ provides an embedding of $\manP$ in $\manS$ through $i_f$. These two roles of $f$ are the reason why the cohomology class of $C$ is linked to the isomorphism class of $\manS$ in Theorem~\ref{theorem eq C equiv isom S}. \textit{A priori}, $f$ satisfies no particular equivariance relation. On the other hand, $f_u\in\calF_\Ceq(\manP_U,G)$ is the twisted version of a local dressing field on $\manQ_U$. It parametrizes local trivializing sections of $\manQ_U$. $f_u$ is \textit{a priori} local. The following property highlight the consequence of the very restrictive case where $f$ and $f_u$ are identified.
\end{remark}
\begin{proposition}\label{prop C trivial iff Q trivial}
$\manQ$ is a trivial principal bundle if and only if $C$ is a trivial cocycle, i.e.
\begin{align*}
C \sim e \Longleftrightarrow \manQ\simeq M\times G.
\end{align*}
\end{proposition}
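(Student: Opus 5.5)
The plan is to prove the two implications separately, working with global sections of $\manQ$ and with their twisted counterparts on $\manP$. Throughout, I use the remark above that relates twisted dressing fields to sections of $\manQ=\manP\times_{C(H)}G$.

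\emph{($\Rightarrow$)} Assume $C\sim e$, i.e. $C=e^f$ for some $f\in\calF(\manP,G)$. Then $C(p,h)=f(p)\inv f(ph)$ is a coboundary. Corollary~\ref{corollary eq C implies isom Q} (item 1) says that $C$ and $e$ induce isomorphic bundles, and that a coboundary induces $\manQ\simeq M\times G$. For completeness I will make the isomorphism explicit. The map $[p,g]\mapsto(\pi_\manP(p),f(p)g)$ is well defined, because
\begin{align*}
f(ph)C(p,h)\inv g = f(ph)f(ph)\inv f(p)g = f(p)g.
\end{align*}
It is $G$-equivariant for right multiplication and covers $\id_M$. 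Its inverse is $(x,g)\mapsto[p,f(p)\inv g]$ for any $p\in\pi_\manP\inv(x)$.

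\emph{($\Leftarrow$)} Assume $\manQ\simeq M\times G$. Then $\manQ$ admits a global section $\sigma_\manQ$, or equivalently a global dressing field $u_\manQ\in\calF_\eqv(\manQ,G)$, since the correspondence of Sec.~\ref{subsec The DFM} holds globally when the bundle is trivial. Corollary~\ref{corollary isom eq functions} (equivalently, the twisted-associated description in the preceding remark) produces a global twisted dressing field $f_u\defeq i_e^*\pi_H^*u_\manQ\in\calF_\Ceq(\manP,G)$. Here the relevant equivariance is that of $G$ acting on itself by left multiplication. Concretely, $f_u(p)=u_\manQ([p,e])$, and
\begin{align*}
f_u(ph)=u_\manQ([ph,e])=u_\manQ([p,C(p,h)])=u_\manQ([p,e]C(p,h))=C(p,h)\inv f_u(p).
\end{align*}
Rearranging gives $C(p,h)=f_u(p)f_u(ph)\inv$. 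Setting $f\defeq f_u\inv$ (pointwise inverse) yields $C(p,h)=f(p)\inv f(ph)=e^f(p,h)$. Hence $C\sim e$ in the sense of Sec.~\ref{subsec Group Cohomology}.

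The main point requiring care is this second direction. First, one must check that the equivariance used for $u_\manQ$ and transported to $\manP$ is the left-multiplication one, $u_\manQ(qg)=g\inv u_\manQ(q)$, rather than conjugation. Second, one must check that the existence of a global section really does give a global (not merely local) dressing field. Both follow directly from the definitions, so I expect no real obstruction. The remaining work is to track inverses so that the resulting coboundary matches the convention $C^f(p,h)=f(p)\inv C(p,h)f(ph)$ with $C=e$.
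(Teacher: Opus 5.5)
Your proof is correct and follows essentially the paper's argument. The paper runs the same chain of equivalences (global section of $\manQ=\manP\times_{C(H)}G$ $\Leftrightarrow$ global twisted dressing field $f_u\in\calF_\Ceq(\manP,G)$ with $f_u(ph)=C(p,h)\inv f_u(p)$ $\Leftrightarrow$ $C^{f_u}=e$), and your reverse direction is exactly that chain; in the forward direction you write down the explicit trivialization $[p,g]\mapsto(\pi_\manP(p),f(p)g)$ rather than just the global section $x\mapsto[p,f(p)\inv]$, which is only a cosmetic difference.
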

\begin{proof}
The proof is as follows:
\begin{align*}
&\ \manQ \simeq M\times G
\\
\Longleftrightarrow &\ \text{There is a global section }\sigma_\manQ\in\Gamma(\manQ) = \Gamma(\manP\times_{C(H)}G)
\\
\Longleftrightarrow &\ \text{There is a global twisted dressing }f_u\in\calF_\Ceq(\manP,G)
\\
\Longleftrightarrow &\ \exists f\in\calF(\manP,G),\ f(ph)=C(p,h)\inv f(p)
\\
\Longleftrightarrow &\ \exists f\in\calF(\manP,G),\ f(p)\inv C(p,h)f(ph)=e
\\
\Longleftrightarrow &\  C\sim e.
\end{align*}
\end{proof}
\begin{remark}
This result gives the reciprocal implication of the particular case stated in Corollary~\ref{corollary eq C implies isom Q}. Contrary to Theorem~\ref{theorem eq C equiv isom S}, this proposition does not generalize to arbitrary equivalent cocycles $C\sim C'$.
\end{remark}

%%%%%%%%%%%%%%%%
\subsection{Local Fields}
\label{subsec Local Fields}
%%%%%%%%%%%%%%%%

Standard and twisted gauge fields on $\manS$, $\manP$ and $\manQ$ induce local fields on a generic open set $U$ by pullback through local trivializing sections. Several relations between sections of these principal fiber bundles were displayed in Sec.~\ref{subsec Local Sect S,P,Q}: given a $\sigma_\manS\in\Gamma(\manS_U)$, one defines $\sigma_\manP \defeq \pi_G\circ\sigma_\manS\in\Gamma(\manP_U)$ and $\sigma_\manQ \defeq \pi_H\circ\sigma_\manS\in\Gamma(\manQ_U)$. Conversely, given $\sigma_\manP$ and $\sigma_\manQ$, one defines $f_u\in\calF_\Ceq(\manP_U,G)$ such that $\sigma_\manQ(x) \rdefeq [p,f_u(p)]$ and $\sigma_\manS \defeq i_{f_u}\circ\sigma_\manP$. Together with these links between sections were proven relations between (twisted or standard) equivariant fields on $\manS$, $\manP$ and $\manQ$ in Sec.~\ref{sec Isomorphism} and \ref{sec Connections GT on S}. These relations between both global fields and sections induce relations between local fields.
\begin{proposition}\label{prop local fields SQ}
The standard gauge fields on $\manQ$ and their corresponding fields on $\manS$ (in the sense of Prop.~\ref{prop isom tensorial forms}, \ref{prop isom connections SQ} and \ref{prop isom GT SQ}) induce the same local fields on $U$.
\end{proposition}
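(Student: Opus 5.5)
The plan is to reduce the statement to a functoriality property of pullbacks, using the relation $\sigma_\manQ=\pi_H\circ\sigma_\manS$ established in Sec.~\ref{subsec Local Sect S,P,Q}. Fix a local trivializing section $\sigma_\manS\in\Gamma(\manS_U)$ and set $\sigma_\manQ\defeq\pi_H\circ\sigma_\manS\in\Gamma(\manQ_U)$. The fields on $\manS$ corresponding to fields on $\manQ$ are all obtained by pullback along $\pi_H$. Tensorial forms go through $\chi_{\manQ\manS}\inv=\pi_H^*$; this holds because the defining relation of $\chi_{\manQ\manS}$ in Lemma~\ref{lemma chi_QS} reads exactly $\alpha^\manS=\pi_H^*\chi_{\manQ\manS}(\alpha^\manS)$. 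Connections go through $\tilde{\chi}_{\manS\manQ}=\pi_H^*$ (Prop.~\ref{prop isom connections SQ}), and gauge elements through $\pi_H^*$ (Prop.~\ref{prop isom GT SQ}). The argument then takes one line per field type:
\begin{align*}
\sigma_\manS^*(\pi_H^*\alpha^\manQ)=(\pi_H\circ\sigma_\manS)^*\alpha^\manQ=\sigma_\manQ^*\alpha^\manQ .
\end{align*}
The same computation applies to equivariant functions $\varphi^\manQ$ (0-forms), to connections $\omega^\manQ$ (where $\omega^\kg=\pi_H^*\omega^\manQ$ is the $\kg$-component of $\omega^\manS$), to curvatures, and to gauge elements $\gamma^\manQ\mapsto\gamma^G=\pi_H^*\gamma^\manQ$.

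Next I would check that this matches the conventional notion of \emph{local field} in all cases. For connections, the local field on $\manS$ is $\sigma_\manS^*\omega^\manS=(\sigma_\manS^*\omega^\kh,\sigma_\manS^*\omega^\kg)$. Only the $\kg$-component corresponds to $\manQ$, so the claim concerns $\sigma_\manS^*\omega^\kg=\sigma_\manQ^*\omega^\manQ$. For curvature, I would note that $\Omega^\kg=\dr\omega^\kg+\frac12[\omega^\kg,\omega^\kg]=\pi_H^*\Omega^\manQ$, since $\dr$ and brackets commute with pullback. Covariant derivatives are handled the same way: $D^\kg$ corresponds to $D^\manQ$ through $\pi_H^*$, as already noted after Prop.~\ref{prop isom twisted connections SP}.

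Finally, I would address compatibility with changes of section and with local gauge transformations, so that the identification is not tied to a particular $\sigma_\manS$. There are two points to check. First, if $\sigma'_\manS=\sigma_\manS\cdotact(g_H,g_G)$, then $\pi_H\circ\sigma'_\manS=\sigma_\manQ\, g_G$, because $\pi_H$ intertwines the $G$-action (Lemma~\ref{lemma Technical Lemma}(v)) and kills the $H$-action. Hence the gluing function of $\sigma_\manQ$ is precisely the $G$-component of that of $\sigma_\manS$, and it acts on the $\kg$- and $V$-valued local fields by the same formulas \eqref{eq loc phi 1}, \eqref{eq loc A a 1}. Second, for active transformations, $\sigma_\manS^*\gamma^G=\sigma_\manQ^*\gamma^\manQ$ by the same pullback identity, so the local gauge groups also coincide.

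The main obstacle is not computational. It is the bookkeeping in this last step: showing that the $H$-part of the $\manS$-gluing (or of a gauge element) acts trivially on the local fields coming from $\manQ$. This follows from their $H$-invariance together with the fact that $\rho_{H\times G}$ is trivial along $H$, and I would spell that out explicitly.
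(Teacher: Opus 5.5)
Your proposal is correct and follows the paper's proof: you fix $\sigma_\manS$, set $\sigma_\manQ=\pi_H\circ\sigma_\manS$, identify the corresponding fields on $\manS$ as $\pi_H^*$ of those on $\manQ$, and conclude with $\sigma_\manS^*\pi_H^*=(\pi_H\circ\sigma_\manS)^*=\sigma_\manQ^*$. Your extra checks (curvature, covariant derivatives, changes of section and gauge transformations) go beyond what the paper proves here but are sound, and the ``obstacle'' you flag is already settled by $\pi_H\circ\calR^\manS_{(h,g')}=\calR^\manQ_{g'}\circ\pi_H$.
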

Note that, since these are local versions of standard $G$-fields on $\manQ$, they support a standard action of $\calG_U$ as described in (\ref{eq standard GT}).
\begin{proof}
Let $\sigma_\manS$ be a section of $\manS_U$ over $U$ and set $\sigma_\manQ \defeq \pi_H\circ\sigma_\manS$, a section of $\manQ_U$ over $U$. Consider $\alpha^\manQ\in\Omega_\tens^\bullet(\manQ,V)$, $\omega^\manQ\in\calA(\manQ)$ and $\gamma^\manQ\in\calG^\manQ$. As seen in Prop.~\ref{prop isom tensorial forms}, \ref{prop isom connections SQ} and \ref{prop isom GT SQ}, these fields are in 1:1 correspondence with $\alpha^\manS \defeq \pi_H^*\alpha^\manQ\in\Omega_\tens^\bullet(\manS,V)$, $\omega^\kg \defeq \pi_H^*\omega^\manQ$ which is the $\kg$-valued part of a connection $\omega^\manS\in\calA(\manS)$, and $\gamma^G \defeq \pi_H^*\gamma^\manQ\in\calG^\manS$. The local version of the fields on $\manQ$ induced through $\sigma_\manQ$ are
\begin{align*}
\phya
&\defeq \sigma_\manQ^*\alpha^\manQ
= (\pi_H\circ\sigma_\manS)^*\alpha^\manQ 
= \sigma_\manS^*\pi_H^*\alpha^\manQ
= \sigma_\manS^*\alpha^\manS
\\
\phyA
&\defeq \sigma_\manQ^*\omega^\manQ
= (\pi_H\circ\sigma_\manS)^*\omega^\manQ
= \sigma_\manS^*\pi_H^*\omega^\manQ
= \sigma_\manS^*\omega^\kg
\\
\locgamma
&\defeq \sigma_\manQ^*\gamma^\manQ
= (\pi_H\circ\sigma_\manS)^*\gamma^\manQ 
= \sigma_\manS^*\pi_H^*\gamma^\manQ
= \sigma_\manS^*\gamma^G
\end{align*}
\end{proof}
A similar link exists between the local versions of twisted fields on $\manP$ and of standard fields on $\manQ$.
\begin{proposition}\label{prop local fields PQ}
The twisted gauge fields on $\manP$ and their corresponding fields on $\manQ$ induce the same local fields on $U$ for a local trivialization of $\manQ$ induced by a local trivialization of $\manP$, as an associated bundle.
\end{proposition}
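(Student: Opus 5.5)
The plan is to make precise which local trivialization of $\manQ$ is "induced by a local trivialization of $\manP$, as an associated bundle", and then to reduce everything to the identity $\chi_{\manP\manQ}=(\pi_H\circ i_e)^*$ of Lemma~\ref{lemma chi_PQ}. Given a section $\sigma_\manP\in\Gamma(\manP_U)$, the natural section of $\manQ_U=\manP_U\times_{C(H)}G$ is
\begin{align*}
\sigma_\manQ(x)\defeq[\sigma_\manP(x),e]=\pi_H\circ i_e\circ\sigma_\manP(x).
\end{align*}
Equivalently, this is the section of Sec.~\ref{subsec Local Sect S,P,Q} attached to the twisted dressing field $f_u$ with $f_u(\sigma_\manP(x))=e$, i.e.\ $\sigma_\manS=i_e\circ\sigma_\manP$. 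I will first check that this is a well-defined smooth local section: it is the composite of smooth maps, and $\pi_\manQ\circ\sigma_\manQ=\pi_\manP\circ\sigma_\manP=\id_U$.

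With this choice the proof becomes a chain of pullback identities, exactly as in the proof of Prop.~\ref{prop local fields SQ}. For $\alpha^\manQ\in\Omega^\bullet_\tens(\manQ,V)$ with twisted counterpart $\alpha^\manP=\chi_{\manP\manQ}(\alpha^\manQ)$ (Prop.~\ref{prop isom tensorial forms}), I would write
\begin{align*}
\sigma_\manQ^*\alpha^\manQ=(\pi_H\circ i_e\circ\sigma_\manP)^*\alpha^\manQ=\sigma_\manP^*(i_e^*\pi_H^*\alpha^\manQ)=\sigma_\manP^*\alpha^\manP .
\end{align*}
The same computation applies in turn to:
\begin{itemize}
\item connections, with $\omega^\manP=\tilde\chi_{\manP\manQ}(\omega^\manQ)=i_e^*\pi_H^*\omega^\manQ$ (Prop.~\ref{prop isom connections PQ});
\item $G$-gauge elements, with $\zeta^\manP=i_e^*\pi_H^*\gamma^\manQ$ (Prop.~\ref{prop isom GT PQ});
\item equivariant maps (Corollary~\ref{corollary isom eq functions});
\item curvatures and covariant derivatives, which follow because $\dr$ commutes with pullbacks and $D^\manP\circ\chi_{\manP\manQ}=\chi_{\manP\manQ}\circ D^\manQ$.
\end{itemize}

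The main obstacle is compatibility with changes of trivialization. The claim is only meaningful if the gluing laws match: when $\sigma_\manP'=\sigma_\manP g$, the induced sections satisfy
\begin{align*}
\sigma_\manQ'(x)=[\sigma_\manP(x)g(x),e]=[\sigma_\manP(x),C(\sigma_\manP(x),g(x))]=\sigma_\manQ(x)\,C_{\sigma_\manP,g}(x).
\end{align*}
So the $\manQ$-gluing function is $C_{\sigma_\manP,g}$. I would verify this and note that it reproduces the twisted gluing laws (\ref{eq loc phi 2})--(\ref{eq loc A a 2}) from the standard ones (\ref{eq loc phi 1})--(\ref{eq loc A a 1}). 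The cocycle condition for these gluing functions is (\ref{eq gluing relation C}). This check is consistent with, rather than needed for, the pullback identities, but it confirms that the phrase "induced trivialization" defines a coherent family of trivializations.

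Finally, I would remark that a different embedding $i_f$ gives sections $[\sigma_\manP,f(\sigma_\manP)]$. Under these sections, the local fields agree with those of the $C^f$-twisted fields, in line with the remark following Theorem~\ref{theorem eq C equiv isom S}. The statement therefore singles out $f=e$, that is $i_e$.
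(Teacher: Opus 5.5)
Your proposal is correct and follows the paper's own proof. The paper likewise takes $\sigma_\manQ=\pi_H\circ i_e\circ\sigma_\manP$ and concludes with the pullback chain $\sigma_\manQ^*=\sigma_\manP^*\,i_e^*\pi_H^*$ applied to tensorial forms, connections and $G$-gauge elements, and the gluing computation $\sigma_\manQ'=\sigma_\manQ\,C(\sigma_\manP,g)$ appears right after the proof as commentary, just as you treat it.
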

\begin{proof}
This proof relies on the use of a section $\sigma_\manQ$ induced by a section $\sigma_\manP$. Let $\sigma_\manP\in\Gamma(\manP_U)$. Using the fact that $\manQ$ is a bundle associated to $\manP$, one defines $\sigma_\manQ:x\mapsto [\sigma_\manP(x),e]$, \textit{i.e.} $\sigma_\manQ \defeq \pi_H\circ i_e\circ\sigma_\manP$.

Consider $\alpha^\manQ\in\Omega_\tens^\bullet(\manQ,V)$, $\omega^\manQ\in\calA(\manQ)$ and $\gamma^\manQ\in\calG^\manQ$. As seen in Prop.~\ref{prop isom tensorial forms}, \ref{prop isom connections PQ} and \ref{prop isom GT PQ}, these fields are in 1:1 correspondence with $\alpha^\manP \defeq i_e^*\pi_H^*\alpha^\manQ\in\Omega_\Ctens^\bullet(\manP,V)$, $\omega^\manP \defeq i_e^*\pi_H^*\omega^\manQ\in\calA^C(\manP)$ and $\zeta^\manP \defeq i_e^*\pi_H^*\gamma^\manQ\in\calG^\manP$. The local version of the fields on $\manQ$ induced through $\sigma_\manQ$ are
\begin{align*}
\phya
&\defeq \sigma_\manQ^*\alpha^\manQ
= (\pi_H\circ i_e\circ\sigma_\manP)^*\alpha^\manQ 
= \sigma_\manP^*i_e^*\pi_H^*\alpha^\manQ
= \sigma_\manP^*\alpha^\manP
\\
\phyA
&\defeq \sigma_\manQ^*\omega^\manQ
= (\pi_H\circ i_e\circ\sigma_\manP)^*\omega^\manQ 
= \sigma_\manP^*i_e^*\pi_H^*\omega^\manQ
= \sigma_\manP^*\omega^\manP
\\
\locgamma
&\defeq \sigma_\manQ^*\gamma^\manQ
= (\pi_H\circ i_e\circ\sigma_\manP)^*\gamma^\manQ 
= \sigma_\manP^*i_e^*\pi_H^*\gamma^\manQ=\sigma_\manP^*\zeta^\manP
\end{align*}
\end{proof}
Let us insist on the main difference of the last two proofs. In the former, the (global) gauge fields on $\manQ$ are naturally induced by the $H$-invariant fields on $\manS$. Then, at fixed $\sigma_\manQ$, one could perform the proof with any $\sigma_\manS$ such that $\pi_H\circ\sigma_\manS=\sigma_\manQ$ (the $\manP$ part of $\sigma_\manS$ is arbitrary.) On the contrary, the proof of Prop.~\ref{prop local fields PQ} involves a specific section, which is compatible with the isomorphism $i_e^*\pi_H^*$. Given $\sigma_\manP$, the others sections must be $\sigma_\manS:x\mapsto (\sigma_\manP(x),e)$ and $\sigma_\manQ:x\mapsto [\sigma_\manP(x),e]$.

This last constraint echoes the distinction made at the end of Sec.~\ref{subsec Isom Connection PQ} between gauge transformation associated to the geometric structure group $H$ and the algebraic one $G$. Indeed, these active gauge transformations locally correspond to passive ones, \textit{i.e.} gluing functions for $\sigma_\manP$ and $\sigma_\manQ$. Without constraint, one could relate two sections $\sigma_\manQ,\ \sigma_\manQ'$ by any $g^\manQ:U\rightarrow G$. However, assuming that $\sigma_\manQ,\ \sigma_\manQ'$ are induced by $\sigma_\manP,\ \sigma_\manP'$ and denoting $g^H:U\to H$ such that $\sigma_\manP'(x)=\sigma_\manP(x)g^H(x)$, one has
\begin{align*}
\sigma_\manQ'(x)
&= [\sigma_\manP'(x),e]
= [\sigma_\manP(x)g^H(x),e] 
= [\sigma_\manP(x),C(\sigma_\manP(x),g^H(x))] 
= [\sigma_\manP(x),e]C(\sigma_\manP(x),g^H(x))
\\ 
&= \sigma_\manQ(x)C(\sigma_\manP(x),g^H(x)).
\end{align*}
Then, $g^G$ must read $g^G(x)=C(\sigma_\manP(x),g^H(x))$ for some $\sigma_\manP$ and $g^H$. With no constraint, $g^G$ is the local representative of an active gauge transformation in $\calG^\manP$, while with constraint, $g^H$ is the local representative of an active gauge transformation $\gamma^\manP\in\calH^\manP$ and $g^G(x)=C(\sigma_\manP(x),g^H(x))$ corresponds to $C(p,\gamma(p))$.\footnote{There is no effective constraint if, at any $p\in\manP$, any $g\in G$ reads $g=C(p,h)$ for some $h\in H$.}

Given Prop.~\ref{prop local fields PQ}, one can interpret any model on $M$ involving twisted fields as a standard model on $M$ with local gauge transformations valued in $G$. For instance, one can define a manifestly gauge-invariant Lagrangian (w.r.t. the local gauge group $\calG_U$) involving the previously-defined fields on $U$. Examples of Yang-Mills Lagrangian and Gravity Gauge Lagrangians dealing with local twisted fields were introduced in \cite{francois_twisted_2021}. The $\calG_U$-invariance of such Lagrangian is enough to work with local twisted fields and get gauge-independent predictions.

\appendix
%%%%%%%%%%%%%%%%
\section{Proof of Lemma~\ref{lemma Technical Lemma}}
\label{sec proof technical lemma}
%%%%%%%%%%%%%%%%

\begin{proof}
\textbf{Item \textit{(i)}}

Let $\phi_X^\manP$ (resp. $\phi_X^G$) be the flow of $X_p^\manP$ at $p$ (resp. of $X_g^G$ at $g$).
\begin{align*}
\Tg_{(p,g)} \calR_{(h,e)}^\manS(X_p^\manP\oplus X_g^G)
&= \Tg_{(p,g)}\calR_{(h,e)}^\manS\left( \dert{(\phi_X^\manP(t),\phi_X^G(t))} \right)
\\
&= \dert{(\phi_X^\manP(t)h,C(\phi_X^\manP(t),h)\inv\phi_X^G(t))}
\\
&= 
\begin{multlined}[t]
\dert{(\phi_X^\manP(t)h,C(p,h)\inv g)} \\
+ \dert{(ph,C(\phi_X^\manP(t),h)\inv g)} \\
+ \dert{(ph,C(p,h)\inv\phi_X^G(t))}
\end{multlined}
\\
&= 
\begin{multlined}[t]
\Tg_{ph}i_{C(p,h)\inv g}\dert{\phi_X^\manP(t)h} \\
+ \Tg_{C(p,h)\inv g}i_{ph}\dert{C(\phi_X^\manP(t),h)\inv g} \\
+ \Tg_{C(p,h)\inv g}i_{ph}\dert{C(p,h)\inv\phi_X^G(t)}
\end{multlined}
\\
&= 
\begin{multlined}[t]
\Tg_{ph}i_{C(p,h)\inv g}\circ\Tg_p\calR_h^\manP(X_p^\manP) \\
+\ Tg_{C(p,h)\inv g}i_{ph}\circ\Tg_{C(p,h)\inv}R_g\circ\dr_\manP C_{(p,h)}\inv(X_p^\manP) \\
+ \Tg_{C(p,h)\inv g}i_{ph}\circ\Tg_gL_{C(p,h)\inv}(X_g^G)
\end{multlined}
\\
&= 
\begin{multlined}[t]
(\Tg_p\calR_h^\manP(X_p^\manP)\oplus 0_g) + (0_p\oplus \Tg_{C(p,h)\inv}R_g\circ\dr_\manP C_{(p,h)}\inv(X_p^\manP)) \\
+ (0_g\oplus \Tg_g L_{C(p,h)\inv}(X_g^G)) 
\end{multlined}
\\
&= \Tg_p\calR_h^\manP(X_p^\manP)\oplus (\Tg_{C(p,h)\inv} R_g\circ\dr_\manP C_{(p,h)}\inv(X_p^\manP) +\Tg_gL_{C(p,h)\inv}(X_g^G))
\end{align*}
\begin{align*}
\Tg_{(p,g)}\calR_{(e,g')}^\manS(X_p^\manP\oplus X_g^G)
&= \Tg_{(p,g)}\calR_{(e,g')}^\manS\left( \dert{(\phi_X^\manP(t),\phi_X^G(t))} \right)
\\
&= \dert{(\phi_X^\manP(t),\phi_X^G(t)g')}
\\
&= \dert{(\phi_X^\manP(t),gg')}+\dert{(p,\phi_X^G(t)g')}
\\
&= \Tg_pi_{gg'}\dert{\phi_X^\manP(t)}+\Tg_{gg'}i_p\dert{\phi_X^G(t)g'}
\\
&= (X_p^\manP\oplus 0_{gg'})+(0_p\oplus \Tg_gR_{g'}(X_g^G))
\\
&= (X_p^\manP\oplus \Tg_gR_{g'}(X_g^G))
\end{align*}

\medskip
\textbf{Item \textit{(ii)}}

Let $\phi_X^\manP$ be the flow of $X_p^\manP$ at $p$.
\begin{align*}
\Tg_{ph}i_e\circ\Tg_p\calR_h^\manP(X_p^\manP)
&= \dert{ (\phi_X^\manP(t)h,e) }
\\
&= \Tg_{(p,C(p,h))}\calR_{(h,e)}^\manS\left( \dert{ (\phi_X^\manP(t),C(\phi_X^\manP(t),h)) } \right)
\\
&= 
\begin{multlined}[t]
\Tg_{(p,C(p,h))}\calR_{(h,e)}^\manS\left( \dert{ (\phi_X^\manP(t),C(p,h)) } \right)
\\
+ \Tg_{(p,C(p,h))}\calR_{(h,e)}^\manS\left( \dert{ (p,C(\phi_X^\manP(t),h)) } \right)
\end{multlined}
\\
&= 
\begin{multlined}[t]
\Tg_{(p,C(p,h))}\calR_{(h,C(p,h))}^\manS\left( \dert{ (\phi_X^\manP(t),e) } \right)
\\
+ \dert{ (p,C(p,h)\inv C(\phi_X^\manP(t),h)) } 
\end{multlined}
\\
&= \Tg_{(p,C(p,h))}\calR_{(h,C(p,h))}^\manS\circ\Tg_pi_e(X_p^\manP)
+ \left[ C(p,h)\inv\dr_\manP C_{(p,h)}(X_p^\manP) \right]_{(ph,e)}^\ver
\end{align*}

\medskip
\textbf{Item \textit{(iii)}}

Let us proceed by double inclusion. First, consider $\ka\in\kh$. The flow of $\ka_p^\ver\oplus \Tg_eR_g\circ(\dr_GC\inv)_{(p,e)}(\ka)$ at $(p,g)$ is given by $\phi_\ka(t)=(p\exp(t\ka),C(p,\exp(t\ka))\inv g)$.
\begin{align*}
\Tg_{(p,g)}\pi_H \left(\ka_p^\ver\oplus \Tg_e R_g\circ(\dr_G C\inv)_{(p,e)}(\ka)\right)
&= \dert{ \pi_H\left((p\exp(t\ka),C(p,\exp(t\ka))\inv g)\right) }
\\
&= \dert{ \pi_H\circ \calR_{(\exp(t\ka),e)}^\manS((p,g)) }
\\
&= \dert{ \pi_H((p,g)) }
\\
&= 0,
\end{align*}
which proves one inclusion. Then, consider $X_{(p,g)}^\manS\in\ker \Tg_{(p,g)}\pi_H$ and let $\phi_X^\manS$ be the flow of $X_{(p,g)}^\manS$ at $(p,g)$. By definition, $\dert{\pi_H(\phi_X^\manS(t))}=0$. This means that $\pi_H(\phi_X^\manS(t))$ is locally constant around $t=0$. Thus, $\phi_X^\manS(t)$ is locally confined in an $H$-equivalence class. Around $t=0$, $\phi_X^\manS(t)$ is parametrized as $(ph(t),C(p,h(t))\inv g)$ for some curve $h:]-\varepsilon,\varepsilon[\rightarrow H$ satisfying $h(0)=e$. Let us call $\ka \defeq \dot{h}(0)\in \Tg_eH\simeq\kh$. Then the tangent vector to $\phi_X^\manS$ at $t=0$ reads
\begin{align*}
\dert{\phi_X^\manS(t)}
&= \dert{(p h(t),C(p,h(t))\inv g)}
\\
&= \dert{(p h(t),g)}+\dert{(p,C(p,h(t))\inv g)}
\\
&= \dert{p h(t)}\oplus 0+0\oplus\dert{C(p,h(t))\inv g}
\\
&= \left[\dert{h(t)}\right]_p^v\oplus 0+0\oplus \Tg_e R_g\circ(\dr_H C\inv)_{(p,e)}\left(\dert{h(t)}\right)
\\
&= \ka_p^\ver\oplus \Tg_e R_g\circ(\dr_HC\inv)_{(p,e)}(\ka),
\end{align*}
which proves the other inclusion.

\medskip
\textbf{Item \textit{(iv)}}

Let $\phi_X^G$ be the flow of $X^G$ at $g$.
\begin{align*}
\Tg_{(p,g)}\pi_H(0_p\oplus X_g^G)
= \dert{[p,\phi_X^G(t)]}
= \dert{[p,g]g\inv\phi_X^G(t)}
= \left[ \Tg_gL_{g\inv}X_g^G \right]_{[p,g]}^\ver\in\Ver_{[p,g]}\manQ.
\end{align*}
So $\Tg_{(p,g)}\pi_H(0_p\oplus \Tg_gG)\subset\Ver_{[p,g]}\manQ$. Note that according to \textit{C}, $\ker \Tg\pi_H\cap(0\oplus \Tg_g G)=\{0\}$. Hence
\begin{align*}
\dim \Tg_{(p,g)} \pi_H(0_p \oplus \Tg_gG)
= \dim 0_p \oplus \Tg_g G
= \dim \kg
= \dim \Ver_{[p,g]}\manQ,
\end{align*}
which concludes the proof.

\medskip
\textbf{Item \textit{(v)}}

On the one hand, $\Tg_{(ph,C(p,h)\inv g)}\pi_H\circ\Tg_{(p,g)}\calR_{(h,e)}^\manS=\Tg_{(p,g)}\pi_H$ is a direct consequence of the definition of $\pi_H$ as projection onto equivalence classes generated by $\calR_{(h,e)}^\manS$. On the other hand, let $\phi_X^\manP$ (resp. $\phi_X^G$) be the flow of $X_p^\manP$ at $p$ (resp. of $X_g^G$ at $g$) and let $g'\in G$.
\begin{align*}
\Tg_{(p, g g')}\pi_H\circ\Tg_{(p,g)}\calR_{(e,g')}^\manS(X_p^\manP\oplus X_g^G)
&= \Tg_{(p,g g')}\pi_H \left( \dert{(\phi_X^\manP(t),\phi_X^G(t)g')} \right)
\\
&= \dert{[\phi_X^\manP(t),\phi_X^G(t)g']}
= \dert{[\phi_X^\manP(t),\phi_X^G(t)]g'}
\\
&= \Tg_{[p,g]}\calR_{g'}^\manQ\left(\dert{[\phi_X^\manP(t),\phi_X^G(t)]}\right)
\\
&= \Tg_{[p,g]}\calR_{g'}^\manQ\circ\Tg_{(p,g)}\pi_H(X_p^\manP,X_g^G).
\end{align*}

\medskip
\textbf{Item \textit{(vi)}}

This is a direct consequence of \textit{(ii)}, \textit{(iv)} and \textit{(v)}.
\begin{multline*}
\Tg_{(ph,e)} \pi_H\circ\Tg_{ph} i_e \circ \Tg_p \calR_h^\manP(X_p^\manP)
\\
\begin{aligned}[t]
\overset{(ii)}{=} &\Tg_{(ph,e)}\pi_H\left( \Tg_{(p,C(p,h))}\calR_{(h,C(p,h))}^\manS\circ\Tg_pi_e(X_p^\manP)
+ \left[ C(p,h)\inv\dr_\manP C_{(p,h)}(X_p^\manP) \right]_{(ph,e)}^\ver \right)
\\
\overset{(iv,v)}{=} &\Tg_{[p,e]}\calR_{C(p,h)}^\manQ \circ \Tg_{(p,e)}\pi_H \circ \Tg_pi_e(X_p^\manP) + \left[ C(p,h)\inv\dr_\manP C_{(p,h)}(X_p^\manP) \right]_{[ph,e]}^\ver,
\end{aligned}
\end{multline*}
where the use of \textit{(iv)} is possible because $\left[ C(p,h)\inv\dr_\manP C_{(p,h)}(X_p^\manP) \right]_{(ph,e)}^\ver$ is a vertical vector in $\Tg_{(ph,e)}\manS$ tangent to $G$.
\end{proof}

\bibliography{biblio.bib}

\end{document}